\documentclass[a4paper,UKenglish,modulo]{lipics-v2019}
% The first command in your LaTeX source must be the \documentclass command.
\usepackage{algorithm}
\usepackage{graphicx}
\usepackage[noend]{algpseudocode}
\usepackage{amssymb}

\newcommand{\E}{\ensuremath{\mathbb{E}}}
\newcommand{\N}{\ensuremath{\mathbb{N}}}

\graphicspath{{./figures/}}
%
% defining the \BibTeX command - from Oren Patashnik's original BibTeX documentation.
\def\BibTeX{{\rm B\kern-.05em{\sc i\kern-.025em b}\kern-.08emT\kern-.1667em\lower.7ex\hbox{E}\kern-.125emX}}
    
% Rights management information. 
% This information is sent to you when you complete the rights form.
% These commands have SAMPLE values in them; it is your responsibility as an author to replace
% the commands and values with those provided to you when you complete the rights form.
%
% These commands are for a PROCEEDINGS abstract or paper.

%
% These commands are for a JOURNAL article.
%\setcopyright{acmcopyright}
%\acmJournal{TOG}
%\acmYear{2018}\acmVolume{37}\acmNumber{4}\acmArticle{111}\acmMonth{8}
%\acmDOI{10.1145/1122445.1122456}

%
% Submission ID. 
% Use this when submitting an article to a sponsored event. You'll receive a unique submission ID from the organizers
% of the event, and this ID should be used as the parameter to this command.
%\acmSubmissionID{123-A56-BU3}

%
% The majority of ACM publications use numbered citations and references. If you are preparing content for an event
% sponsored by ACM SIGGRAPH, you must use the "author year" style of citations and references. Uncommenting
% the next command will enable that style.
%\citestyle{acmauthoryear}

%
% end of the preamble, start of the body of the document source.

\nolinenumbers %uncomment to disable line numbering
\hideLIPIcs  %uncomment to remove references to LIPIcs series (logo, DOI, ...), e.g. when preparing a pre-final version to be uploaded to arXiv 

\title{Fast Distributed Algorithms for LP-Type Problems of Bounded Dimension}

%
% The "author" command and its associated commands are used to define the authors and their affiliations.
% Of note is the shared affiliation of the first two authors, and the "authornote" and "authornotemark" commands
% used to denote shared contribution to the research.
\author{Kristian Hinnenthal}{Paderborn University, Germany}{krijan@mail.upb.de}{}{}
\author{Christian Scheideler}{Paderborn University, Germany}{scheidel@mail.upb.de}{}{}
\author{Martijn Struijs}{TU Eindhoven, The Netherlands}{m.a.c.struijs@tue.nl}{}{}

%
% By default, the full list of authors will be used in the page headers. Often, this list is too long, and will overlap
% other information printed in the page headers. This command allows the author to define a more concise list
\authorrunning{K. Hinnenthal, C. Scheideler, and M. Struijs} %mandatory. First: Use abbreviated first/middle names. Second (only in severe cases): Use first author plus 'et. al.'

\relatedversion{} 
\funding{}
\acknowledgements{}

\Copyright{Kristian Hinnenthal, Christian Scheideler, and Martijn Struijs}%mandatory, please use full first names. LIPIcs license is "CC-BY";  http://creativecommons.org/licenses/by/3.0/

\ccsdesc[500]{Theory of computation~Distributed algorithms}
\ccsdesc[300]{Theory of computation~Mathematical optimization}

%
% Keywords. The author(s) should pick words that accurately describe the work being
% presented. Separate the keywords with commas.
\keywords{LP-type problems, abstract optimization problems, abstract linear programs, distributed algorithms, gossip algorithms}

\begin{document}

%
% This command processes the author and affiliation and title information and builds
% the first part of the formatted document.
\maketitle

\begin{abstract}
	In this paper we present various distributed algorithms for LP-type problems in the well-known gossip model. LP-type problems include many important classes of problems such as (integer) linear programming, geometric problems like smallest enclosing ball and polytope distance, and set problems like hitting set and set cover. In the gossip model, a node can only push information to or pull information from nodes chosen uniformly at random. Protocols for the gossip model are usually very practical due to their fast convergence, their simplicity, and their stability under stress and disruptions. Our algorithms are very efficient (logarithmic rounds or better with just polylogarithmic communication work per node per round) whenever the combinatorial dimension of the given LP-type problem is constant, even if the size of the given LP-type problem is polynomially large in the number of nodes.
\end{abstract}

\section{Introduction}

\subsection{LP-type problems}

LP-type problems were defined by Sharir and Welzl \cite{SW92} as problems characterized by a tuple $(H,f)$ where $H$ is a finite set and $f:2^H \rightarrow T$ is a function that maps subsets from $H$ to values in a totally ordered set $(T, \le)$ containing $\infty$. The function $f$ is required to satisfy two conditions:
\begin{itemize}
\item {\bf Monotonicity:}
For all sets $F \subseteq G \subseteq H$, $f(F) \le f(G) \le f(H)$.
\item {\bf Locality:}
For all sets $F \subseteq G \subseteq H$ with $f(F)=f(G)$ and every element $h \in H$, if $f(G) < f(G \cup \{h\})$ then $f(F) < f(F \cup \{h\})$.
\end{itemize}
A minimal subset $B \subseteq H$ with $f(B') < f(B)$ for
all proper subsets $B'$ of $B$ is called a \emph{basis} of $H$. An
\emph{optimal basis} is a basis $B$ with $f(B)=f(H)$. The maximum cardinality
of a basis is called the \emph{(combinatorial) dimension} of $(H,f)$ and
denoted by $\dim(H,f)$. LP-type problems cover many important optimization problems.

\subsubsection*{Linear optimization}

In this case, $H$ is the set of all linear constraints and $f(H)$ denotes the
optimal value in the polytope formed by $H$ with respect to the given objective
function. W.l.o.g., we may assume that $(H,f)$ is non-degenerate, i.e., for every subset $G \subseteq H$, $f(G)$ is associated with a unique solution (by, for example, slightly perturbing the coefficients in the linear constraints).
The monotonicity condition obviously holds in this case. Also, the
locality condition holds since if $f(G) < f(G \cup \{h\})$ (i.e., if $h$ is
violated by the solution associated with $f(G)$), then due to $f(F)=f(G)$, $h$ is also violated by $f(F)$. The combinatorial dimension is simply the number of variables of the LP. 

\subsubsection*{Smallest enclosing ball}

In this case, $H$ is a set of points in a Euclidean space and $f(H)$ denotes the radius of the smallest enclosing ball for $H$. The monotonicity condition can be verified
easily. Also the locality condition holds since if the smallest enclosing
balls for $G$ and $F\subseteq G$ have the same radius (and thus they actually
are the same ball) and point $h$ lies outside of the ball of $G$, then $h$ must also lie outside of the ball of $F$. Since in the 2-dimensional case at most 3 points are sufficient to determine the smallest enclosing ball for $H$, the combinatorial
dimension of this problem is 3. For $d$ dimensions, at most $d+1$ points are
sufficient.

\bigskip

Clarkson \cite{Cla95} proposed a very elegant randomized algorithm for solving LP-type problems (see Algorithm~\ref{alg:Clarkson}). In this algorithm, each $h\in H$ has a multiplicity of $\mu_h \in \N$, and $H(\mu)$ is a multiset where each $h \in H$ occurs $\mu_h$ times in $H(\mu)$. The algorithm requires a subroutine for computing $f(S)$ for sets $S$ of size $O(\dim(H,f)^2)$, but this is usually straightforward if $\dim(H,f)$ is a constant. The runtime analysis is simple enough so that we will review it in this paper, since it will also be helpful for the analysis of our distributed algorithms. In the following, let $d = \dim(H,f)$, and we say that an iteration of the repeat-loop is {\em successful} if $|V| \le |H(\mu)|/(3d)$.

\begin{algorithm}[th]
  \begin{algorithmic}[1]
    \If{$|H| \le 6 \dim(H,f)^2$}
      \Return $f(H)$
    \Else
      \State $r:=6\dim(H,f)^2$
      \ForAll{$h \in H$} $\mu_h:=1$
      \EndFor
      \Repeat
        \State Choose a random multiset $R$ of size $r$ from $H(\mu)$
        \State $V:=\{ h \in H(\mu) \mid f(R)<f(R \cup \{h\}) \}$
        \If{$|V| \le |H(\mu)|/(3 \dim(H,f))$}
          \ForAll{$h \in V$} $\mu_h:=2\mu_h$
          \EndFor
        \EndIf
      \Until{$V= \emptyset$}
    \EndIf
    \State \Return $f(R)$
  \end{algorithmic}
  \caption{Clarkson Algorithm.}
  \label{alg:Clarkson}
\end{algorithm}

\begin{lemma}[\cite{GW01}] \label{lem_V}
Let $(H,f)$ be an LP-type problem of dimension $d$ and let $\mu$ be any multiplicity function. For any $1 \le r < m$, where $m=|H(\mu)|$, the expected size of $V_R=\{h \in H(\mu) \mid f(R)<f(R \cup \{h\}) \}$ for a random multiset $R$ of size $r$ from $H(\mu)$ is at most $d \cdot \frac{m-r}{r+1}$.
\end{lemma}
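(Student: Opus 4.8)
The plan is to prove this by the double-counting argument behind the Gärtner--Welzl sampling lemma \cite{GW01}, treating $H(\mu)$ as a ground set of $m$ \emph{labelled} copies, so that choosing a random size-$r$ multiset from $H(\mu)$ simply means choosing a uniformly random $r$-element subset $R$ of these $m$ copies. For a subset $S$ of the copies, write $V(S)=\{h\notin S : f(S)<f(S\cup\{h\})\}$ for its set of \emph{violators} (so $V_R=V(R)$) and $X(S)=\{h\in S : f(S\setminus\{h\})<f(S)\}$ for its set of \emph{extreme elements}. The quantity to be bounded is $\E[|V_R|]=\binom{m}{r}^{-1}\sum_{|R|=r}|V(R)|$, the average of $|V(R)|$ over all $r$-subsets.

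The first step is a bijection between the pairs $(R,h)$ with $|R|=r$ and $h\in V(R)$ and the pairs $(Q,h)$ with $|Q|=r+1$ and $h\in X(Q)$: map $(R,h)$ to $(R\cup\{h\},h)$. Since $h\notin R$ we have $(R\cup\{h\})\setminus\{h\}=R$, so the single relation $f(R)<f(R\cup\{h\})$ simultaneously certifies that $h\in V(R)$ and that $h\in X(R\cup\{h\})$; reading the same relation backwards shows that $(Q,h)\mapsto(Q\setminus\{h\},h)$ is the inverse map. Hence
\[
\sum_{|R|=r}|V(R)|\;=\;\sum_{|Q|=r+1}|X(Q)|.
\]
The second step is the structural bound $|X(Q)|\le d$ for every subset $Q$: each extreme element of $Q$ lies in every basis of $Q$ (a minimal $B\subseteq Q$ with $f(B)=f(Q)$, which then satisfies $f(B')<f(B)$ for all $B'\subsetneq B$ and therefore has $|B|\le d$), because if some $h\in X(Q)$ were missing from such a $B$ then $B\subseteq Q\setminus\{h\}$, and monotonicity would give $f(Q)=f(B)\le f(Q\setminus\{h\})<f(Q)$, a contradiction.

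Putting the two steps together yields $\sum_{|R|=r}|V(R)|\le d\binom{m}{r+1}$, and dividing by $\binom{m}{r}$ gives $\E[|V_R|]\le d\cdot\binom{m}{r+1}\big/\binom{m}{r}=d\cdot\frac{m-r}{r+1}$, as claimed. The argument is short, so I do not expect a genuine obstacle; the two points that need care are fixing the probability space so that every $r$-subset carries equal weight in the average (which is exactly what ``a random multiset of size $r$ from $H(\mu)$'' should be taken to mean), and the inequality $|X(Q)|\le d$, which is the single place where the LP-type axioms — monotonicity together with the definition of the combinatorial dimension as an upper bound on basis cardinality — actually enter.
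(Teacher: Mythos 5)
Your proof is correct and takes essentially the same route as the paper: your bijection $(R,h)\leftrightarrow(R\cup\{h\},h)$ is exactly the paper's step (1), and your bound $|X(Q)|\le d$ (extreme elements must lie in every optimal basis of $Q$, which has size at most $d$ by monotonicity and the dimension bound) is the paper's step (2), just spelled out in more detail. The labelled-copy interpretation of the multiset matches the paper's uniform average over $\binom{H(\mu)}{r}$, so there is no gap.
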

\begin{proof}
Let ${H(\mu) \choose r}$ be the set of all multisets of $r$ elements in $H(\mu)$,
i.e., all results for $R$. By definition of the expected value it holds
\[
  \E[|V_R|] = \frac{1}{{m \choose r}} \sum_{R \in {H(\mu) \choose r}} |V_R|
\]
For $R \in {H(\mu) \choose r}$ and $h \in H(\mu)$ let $X(R,h)$ be the indicator
variable for the event that $f(R)<f(R \cup \{h\})$. Then
we have
\begin{eqnarray*}
 {m \choose r} \E[|V_R|] & = & \sum_{R \in {H(\mu) \choose r}} |V_R|
   = \sum_{R \in {H(\mu) \choose r}} \sum_{h \in H(\mu) - R} X(R,h) \\
 & \stackrel{(1)}{=} & \sum_{Q \in {H(\mu) \choose r+1}} \sum_{h \in Q} X(Q-h ,h) \\
 & \stackrel{(2)}{\le} & \sum_{Q \in {H(\mu) \choose r+1}} d = {m \choose r+1} \cdot d
\end{eqnarray*}
Equation (1) is true since choosing a set $R$ of size $r$ from $H(\mu)$ and subsequently choosing some $h\in H(\mu)-R$ is the same as choosing a set $Q$ of $r+1$ constraints from $H(\mu)$ and the subsequent removal of $h$ from $Q$. Equation (2) follows from the fact that the dimension of $(H,f)$ --- and therefore also of $(Q,f)$ --- is at most $d$ and the monotonicity condition, which implies that there are at most $d$ many $h \in Q$ with $f(Q-h)<f(Q)$. 
Resolving the inequality to $\E[|V_R|]$ results in the lemma.
%Thus,
%\[
%  \E[|V_R|] \le d \cdot \frac{{m \choose r+1}}{{m \choose r}}
%  \le d \cdot \frac{m-r}{r+1}
%\]
\end{proof}

From this lemma and the Markov inequality it immediately follows that the probability that $|V| > |H(\mu)|/(3d)$ is at most $1/2$. Moreover, it holds:

\begin{lemma}[\cite{Lit87,Wel88}] \label{lem_success}
Let $k \in \N$ and $B$ be an arbitrary optimal basis of $H$. After $k \cdot d$ successful iterations, $2^k \le \mu(B) < n \cdot e^{k/3}$.
\end{lemma}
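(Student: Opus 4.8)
The plan is to sandwich $\mu(B)=\sum_{h\in B}\mu_h$ between $2^k$ and $n\cdot e^{k/3}$ by two separate bookkeeping arguments: for the lower bound we track one cleverly chosen element of $B$, and for the upper bound we track the total multiplicity $m=|H(\mu)|$.

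The crucial structural step, needed for the lower bound, is the claim that \emph{every iteration in which $V\neq\emptyset$ satisfies $V\cap B\neq\emptyset$}. To see this, suppose no element of $B$ is violated by the solution associated with $R$, i.e.\ $f(R)=f(R\cup\{h\})$ for all $h\in B$ (recall that monotonicity gives $f(R)\le f(R\cup\{h\})$, so ``$h\notin V$'' means equality). Inducting over the elements of $B\setminus R$ and applying the locality axiom at each step upgrades this to $f(R)=f(R\cup B)$; since $B$ is an optimal basis, monotonicity then forces $f(H)=f(B)\le f(R\cup B)=f(R)\le f(H)$, so $f(R)=f(H)$, and a final appeal to monotonicity shows $f(R)=f(R\cup\{h\})$ for every $h$, i.e.\ $V=\emptyset$. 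I expect this locality induction to be the only part requiring real care; the rest is counting.

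Given this, the lower bound follows by pigeonhole. We may assume the loop has not yet terminated, so each of the $kd$ successful iterations has $V\neq\emptyset$ and therefore doubles $\mu_h$ for at least one $h\in B$. Since $|B|\le d$, some fixed $h^{*}\in B$ is doubled in at least $kd/d=k$ of these iterations, and as $\mu_{h^{*}}$ starts at $1$ we obtain $\mu(B)\ge\mu_{h^{*}}\ge 2^k$.

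For the upper bound, note that $\mu(B)\le m=|H(\mu)|$ at all times, and that $m=|H|=n$ initially. In an unsuccessful iteration $m$ does not change; in a successful one, doubling $V$ raises $m$ by $|V|$ (here $|V|$ already counts multiplicities, since $h\in V$ pulls in all $\mu_h$ copies of $h$), which is at most $m/(3d)$ by the success condition. Hence each successful iteration multiplies $m$ by at most $1+1/(3d)$, so after $kd$ of them $m\le n\,(1+1/(3d))^{kd}<n\,e^{k/3}$, using $(1+1/(3d))^{3d}<e$; combining with $\mu(B)\le m$ completes the proof.
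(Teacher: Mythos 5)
Your proof is correct and follows essentially the same route as the paper: show via monotonicity and locality that every successful iteration with $V\neq\emptyset$ must double some element of the optimal basis $B$ (then pigeonhole over $|B|\le d$ for the $2^k$ bound), and bound $\mu(B)\le\mu(H)\le n(1+1/(3d))^{kd}<n\,e^{k/3}$ for the upper bound. The only cosmetic difference is that you establish the key step in contrapositive form by adding elements of $B$ one at a time, whereas the paper argues directly by taking a maximal $B'\subseteq B$ with $f(R)=f(R\cup B')$ and applying locality once; the two are interchangeable.
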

\begin{proof}
Each successful iteration increases the multiplicity of $H$ by a factor of at
most $(1+1/(3d))$. Therefore, $\mu(B) \le \mu(H) \le n(1+1/(3d))^{k \cdot d} < n \cdot e^{k/3}$. On the other hand, for each successful iteration with
$V\neq \emptyset$, $f(R) < f(H) = f(B)$. Due to the monotonicity condition, $f(R)<f(R \cup B)$ and $f(R) \le f(R \cup B') \le f(R \cup B)=f(H)$ for any subset $B' \subseteq B$. Let $B'$ be any maximal subset of $B$ (w.r.t. $\subseteq$) with $f(R)=f(R \cup B')$. Since $B' \subset B$, there is an $h \in B \setminus B'$ with $f(R \cup B')<f(R \cup B' \cup \{h\})$ and therefore, due to the locality condition, $f(R)<f(R \cup \{h\})$. Hence, there is a constraint in $B$ that is doubled at least $k$ times in $k \cdot d$ successful iterations, which implies that $\mu(B) \ge 2^k$.
\end{proof}

Lemma~\ref{lem_success} implies that Clarkson's algorithm must terminate after at most $O(d \log n)$ successful iterations (as otherwise $2^k > n \cdot e^{k/3}$), so Clarkson's algorithm performs at most $O(d \log n)$ iterations of the repeat-loop, on expectation. This bound is also best possible in the worst case for any $d \ll n$: given that there is a unique optimal basis $B$ of size $d$, its elements can have a multiplicity of at most $\sqrt{n}$ after $(\log n)/2$ iterations, so the probability that $B$ is contained in $R$ is polynomially small in $n$ up to that point.

Clarkson's algorithm has the advantage that it can easily be transformed into a distributed algorithm with expected runtime $O(d \log^2 n)$ if $n$ nodes are available that are interconnected by a hypercube, for example, because in that case every round of the algorithm can be executed in $O(\log n)$ communication rounds w.h.p.\footnote{By ``with high probability'', or short, ``w.h.p.'', we mean a probability of least $1-1/n^c$ for any constant $c>0$.}. However, it has been completely open so far whether it is also possible to construct a distributed algorithm for LP-type problems with an expected runtime of $O(d \log n)$ (either with a variant of Clarkson's algorithm or a different approach). We will show in this paper that this is possible when running certain variants of Clarkson's algorithm in the gossip model, even if $H$ has a polynomial size.

\subsection{Network Model}

We assume that we are given a fixed node set of size $n$ consisting of the nodes $v_1,\ldots,v_n$. In our paper, we do not require the nodes to have IDs since all of our protocols work for fully anonymous nodes. Moreover, we assume the standard synchronous message passing model, i.e., the nodes operate in synchronous {\em (communication) rounds}, and all messages sent (or requested) in round $i$ will be received at the beginning of round $i+1$. 

In the (uniform) gossip model, a node can only send or receive messages via random {\em push} and {\em pull} operations. In a push operation, it can send a message to a node chosen uniformly at random while in a pull operation, it can ask a node chosen uniformly at random to send it a message. We will restrict the message size (i.e., its number of bits) to $O(\log n)$. A node may execute multiple push and pull operations in parallel in a round. The number of push and pull operations executed by it is called its {\em (communication) work}.

Protocols for the gossip model are usually very practical due to their fast convergence, their simplicity, and their stability under stress and disruptions. Many gossip-based protocols have already been presented in the past, including protocols for information dissemination, network coding, load-balancing, consensus, and quantile computations (see \cite{DGMSS11,Haeu16,HMS18,KSSV00,KDG03} for some examples). Also, gossip protocols can be used efficiently in the context of population protocols and overlay networks, two important areas of network algorithms. In fact, it is easy to see that any algorithm with runtime $T$ and maximum work $W$ in the gossip model can be emulated by overlay networks in $O(T+\log n)$ time and with maximum work $O(W \log n)$ w.h.p. (since it is easy to set up (near-)random overlay edges in hypercubic networks in $O(\log n)$ time).

\subsection{Related Work}

There has already been a significant amount of work on finding efficient sequential and parallel algorithms for linear programs of constant dimension (i.e., a constant number of variables), which is a special case of LP-type problems of constant combinatorial dimension (see \cite{DGMW17} for a very thorough survey). We just focus here on parallel and distributed algorithms.
The fastest parallel algorithm known for the CRCW PRAM is due to Alon and Megiddo \cite{AM94}, which has a runtime of $O(d^2 \log^2 d)$. It essentially follows the idea of Clarkson, with the main difference that it replicates elements in $V$ much more aggressively by exploiting the power of the CRCW PRAM. This is achieved by first compressing the violated elements into a small area and then replicating them by a factor of $n^{1/(4d)}$ (instead of just 2). The best work-optimal algorithm for the CRCW PRAM is due to Goodrich \cite{Goo93}, which is based on an algorithm by Dyer and Frieze \cite{DF89} and has a runtime of $O((\log \log n)^d)$. This also implies a work-optimal algorithm for the EREW PRAM, but the runtime increases to $O(\log n (\log \log n)^d)$ in this case. The fastest parallel algorithm known for the EREW PRAM is due to Dyer 
\cite{Dye95}, which achieves a runtime of $O(\log n (\log \log n)^{d-1})$ when using an $O(\log n)$-time parallel sorting algorithm (like Cole's algorithm). Since the runtime of any algorithm for solving a linear program of constant dimension in an EREW PRAM is known to be $\Omega(\log n)$ \cite{DGMW17}, the upper bound is optimal for $d=1$. 

Due to Ranade's seminal work \cite{Ran91}, it is known that any CRCW PRAM step can be emulated in a butterfly network in $O(\log n)$ communication rounds, yielding an $O(\log n)$-time algorithm for linear programs of constant dimension in the butterfly. However, 
it is not clear whether any of the parallel algorithms would work for arbitrary LP-type problems. Also, none of the proposed parallel algorithms seem to be easily adaptable to an algorithm that works efficiently (i.e., in time $o(\log^2 n)$ and with $polylog$ work) for the gossip model as they require processors to work together in certain groups or on certain memory locations in a coordinated manner, and assuming anonymous nodes would further complicate the matter.

Algorithms for (integer) linear programs have also been investigated in the distributed domain. Their study was initiated by Papadimitriou and Yannakakis \cite{PY93}. 
%Under the assumption that each agent represents a (primal) variable and agents can send $O(\log n)$-sized messages to their neighbors, where agents are neighbors if and only if they share one or more constraints in the LP, 
Bartal, Byers and Raz \cite{BBR04} presented a distributed approximation scheme for positive linear programs with a polylogarithmic runtime. Kuhn, Moscibroda, and Wattenhofer \cite{KMW06} present a distributed approximation scheme for packing LPs and covering LPs. For certain cases, their scheme is even a local approximation scheme, i.e., it only needs to know a constant-distance neighborhood and therefore can be implemented in a constant number of rounds (given sufficiently large edge bandwidths). Flor\'een et al. \cite{FKMS08} studied the problem of finding local approximation schemes for max-min linear programs, which are a generalized form of packing LPs. They show that in most cases there is no local approximation scheme and identify certain cases where a local approximation scheme can be constructed. Positive LPs and max-min LPs are a special form of LP-type problems, but to the best of our knowledge, no distributed algorithms have been formally studied for LP-type problems in general.

As mentioned above, LP-type problems were introduced by Sharir and Welzl \cite{SW92}. Since then, various results have been shown, but only for sequential algorithms. Combining results by G\"artner \cite{Gar95} with Clarkson's methods, G\"artner and Welzl \cite{GW96} showed that an expected linear number of violation tests and basis computations is sufficient to solve arbitrary LP-type problems of constant combinatorial dimension. For the case of a large combinatorial dimension $d$, Hansen and Zwick \cite{HZ15} proposed an algorithm with runtime $e^{O(\sqrt{d})}$. Extensions of LP-type problems were studied by G\"artner \cite{Gar95} (abstract optimization problems) %, Ameneta \cite{Ame93}, 
and Skovron \cite{Sko07} (violator spaces). G\"artner et al. \cite{GMRS08} and Brise and G\"artner \cite{BG11} showed that Clarkson's approach still works for violator spaces.

For the applications usually considered in the context of LP-type problems, the combinatorial dimension is closely related to the minimum size of an optimal basis. But there are also LP-type problems whose combinatorial dimension might be much larger. Prominent examples are the hitting set problem and the equivalent set cover problem. Both are known to be NP-hard problems. Also, Dinur and Steurer \cite{DS14} showed that the set cover problem (and therefore the hitting set problem) cannot be approximated within a $(1-o(1))\ln n$ factor unless $P=NP$. Based on Clarkson's algorithm, Br\"onnimann and Goodrich \cite{BG95} and Agarwal and Pan \cite{AP14} gave algorithms that compute an approximate set cover / hitting set of a geometric set cover instance in $O(n \cdot polylog(n))$ time. Their results imply, for example, an $O(\log \log OPT)$-approximation algorithm for the hitting set problem that runs in $O(n \log^3 n \log \log \log OPT)$ time for range spaces induced by 2D axis-parallel rectangles, and an $O(1)$-approximate set cover in $O(n \log^4 n)$ time for range spaces induced by 2D disks. The currently best distributed algorithm for the set cover problem was presented by Even, Ghaffari, and Medina \cite{EGM18}. They present a deterministic distributed algorithm for computing an $f(1+\epsilon)$-approximation of the set cover problem, for any constant $\epsilon>0$, in $O(\log (f \Delta)/\log \log (f \Delta))$ rounds, where $f$ is the maximum element frequency and $\Delta$ is the cardinality of the largest set. This almost matches the $\Omega(\log(\Delta)/\log \log (\Delta))$ lower bound of Kuhn, Moscibroda, and Wattenhofer \cite{KMW16}. Their network is a bipartite graph with the nodes on one side representing the elements and the nodes on the other side representing the sets, and their algorithm follows the primal-dual scheme. However, it is unclear whether their algorithm can be adapted to the gossip model.

%The gossip model does not make the problem trivial to achieve a runtime of $O(\log n)$ since it takes $\Omega(\log n/\log \log n)$ communication rounds for the nodes to determine which elements $h$ belong to $V$ in Clarkson's algorithm.

\subsection{Our Results} \label{sec:new}

In all of our results, we assume that initially, $H$ is randomly distributed among the nodes. This is easy to achieve in the gossip model if this is not the case (for example, each node initially represents its own point for the smallest enclosing ball problem) by performing a push operation on each element. The nodes are assumed to know $f$, and we require the nodes to have a constant factor estimate of $\log n$ for the algorithms to provide a correct output, w.h.p., but they may not have any information about $|H|$. For simplicity, we also assume that the nodes know $d$. If not, they may perform a binary search on $d$ (by stopping the algorithm if it takes too long for some $d$ to switch to $2d$), which does not affect our bounds below since they depend at least linearly on $d$.

We usually assume that the dimension $d$ of the given LP-type problem is a constant (i.e., independent of $n$), though our proofs and results would also be true for non-constant $d$ (as long as $d$ is sufficiently small compared to $n$). In Section 2, we start with the lightly loaded case (i.e., $|H|=O(n \log n)$) and prove the following theorem.

\begin{theorem} \label{th:main1}
For any LP-type problem $(H,f)$ satisfying $|H|=$ $O(n \log n)$, the Low-Load Clarkson Algorithm finds an optimal solution in $O(d \log n)$ rounds with maximum work $O(d^2 + \log n)$ per round, w.h.p.
\end{theorem}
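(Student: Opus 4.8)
The plan is to implement Clarkson's algorithm (Algorithm~\ref{alg:Clarkson}) in the gossip model, using the fact that $|H| = O(n\log n)$ so that the multiset $H(\mu)$ can be kept distributed with only $O(\log n)$ elements per node throughout. We already know from Lemma~\ref{lem_success} and the discussion following it that the repeat-loop runs for $O(d\log n)$ iterations in expectation, and that an iteration is successful with probability at least $1/2$ by Lemma~\ref{lem_V} and Markov. So it suffices to show each iteration can be emulated in $O(1)$ gossip rounds (amortized, or $O(\log n)$ total extra) with work $O(d^2 + \log n)$ per node per round, and to argue the multiplicities never blow the per-node load past $O(\log n)$.

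First I would describe the invariant: each node stores $O(\log n)$ copies of elements of $H(\mu)$, and since by Lemma~\ref{lem_success} we have $\mu(H) \le n\cdot e^{k/3} = \mathrm{poly}(n)$ for the $O(d\log n)$ iterations we run, and doubling is spread roughly evenly, a balls-into-bins / Chernoff argument keeps the maximum load $O(\log n)$ w.h.p.\ (rebalancing with a constant number of push/pull rounds whenever needed). Second, each iteration has three substeps. \emph{Sampling $R$ of size $r = 6d^2$:} each node that holds a prospective sampled copy broadcasts it; since $r = O(d^2)$, the set $R$ fits in $O(d^2)$ messages, and using standard gossip aggregation/broadcast (e.g.\ the $O(\log n)$-round information dissemination primitives of \cite{KSSV00}) every node learns all of $R$ in $O(\log n)$ rounds with $O(\log n)$ work per round — but this is done only once per iteration, so it is $O(\log n)$ rounds \emph{total} across the algorithm only if we pipeline; more carefully, each dissemination costs $O(\log n)$ rounds, so naively we get $O(d\log^2 n)$. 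The key idea to avoid this is: since $|R| = O(d^2)$ is tiny, designate a small rumor and spread it by \emph{pull}-based gossip in which, after the rumor has existed for $\Theta(\log n)$ rounds once, every subsequent iteration's rumor catches up within $O(1)$ expected rounds because a constant fraction of nodes already forward it — i.e.\ amortize the dissemination. \emph{Computing $V$:} once a node knows $R$, it locally computes $f(R)$ (a basis subroutine on $O(d^2)$ elements, costing $O(d^2)$ "work" but no communication) and checks each of its $O(\log n)$ stored elements $h$ for $f(R) < f(R\cup\{h\})$; marking costs $O(\log n)$ local tests of cost $O(d^2)$ each. \emph{Deciding $|V| \le |H(\mu)|/(3d)$ and doubling:} this is a threshold test on a distributed sum, which can be done by gossip-based counting/aggregation in $O(\log n)$ rounds w.h.p.\ with $O(\log n)$ work; again the per-iteration $O(\log n)$ rounds is the thing to amortize, and the known trick is that approximate counting in the gossip model can be run continuously so that each fresh query resolves in $O(1)$ amortized rounds once the pipeline is warm.

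The main obstacle I expect is exactly this amortization: naively every iteration needs a global broadcast of $R$ and a global count of $|V|$ and $|H(\mu)|$, each $\Theta(\log n)$ rounds, giving $O(d\log^2 n)$ total, which is the bound one gets on a hypercube. To hit $O(d\log n)$ one must pipeline $\Theta(\log n)$ "logical" iterations through the gossip network simultaneously, or show that after an initial $O(\log n)$-round warm-up each of the $O(d\log n)$ subsequent broadcasts/counts completes in $O(1)$ amortized rounds. Making this rigorous requires a careful potential/epoch argument that the in-flight rumors and counters do not interfere and that their work stays $O(\log n)$ per node per round; I would handle it by partitioning the algorithm into epochs of $\Theta(\log n)$ iterations, running each epoch's broadcasts as a single batched pipelined dissemination of $O(d^2\log n) = O(\mathrm{polylog}\,n)$ total data, which fits the message-size and work budget, and charging the $O(\log n)$ latency once per epoch for a total of $O(d\log n)$ rounds. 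The correctness (that we return $f(R)$ with $V = \emptyset$, hence $f(R) = f(H)$) is immediate from the exactness of the local basis computations and the termination argument already in the excerpt.
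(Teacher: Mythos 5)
There is a genuine gap, and it sits exactly where you flagged your ``main obstacle'': the amortization/pipelining of the per-iteration global broadcast of $R$ and the global threshold test on $|V|$ does not work, because consecutive iterations of Clarkson's algorithm are sequentially dependent. The multiset $H(\mu)$ from which iteration $t+1$ samples is determined only after iteration $t$'s violation test and doubling have been resolved, so you cannot batch an epoch's $\Theta(\log n)$ broadcasts into one pipelined dissemination --- the payload of iteration $t+1$ does not even exist until iteration $t$ has completed globally. Likewise, a fresh rumor (a new $R$, or a new count of $|V|$) started by $O(1)$ nodes needs $\Theta(\log n)$ gossip rounds to reach all $n$ nodes no matter how ``warm'' the network is from earlier rumors; there is no sense in which a subsequent iteration's rumor ``catches up in $O(1)$ expected rounds.'' With the sequential dependency intact, your scheme gives $O(d\log^2 n)$ rounds, which is precisely the easy bound the paper mentions (hypercube emulation of Clarkson) and which the theorem is designed to beat. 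Your load-balancing invariant also needs care: the centralized algorithm doubles \emph{all} violated elements in a successful iteration, and keeping the per-node load at $O(\log n)$ under repeated doubling is not handled by a generic balls-into-bins remark, since $\mu(H)$ grows and explicit rebalancing is itself communication.

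The paper escapes this by changing the algorithm rather than emulating it. In the Low-Load Clarkson Algorithm every node $v_i$ draws its \emph{own} multiset $R_i$ of size $6d^2$ in a single round, via $O(d^2+\log n)$ pull requests for random elements (no global broadcast, no global agreement on one $R$); it then pushes the elements of its local violated set $W_i=\{h\in H(v_i)\mid f(R_i)<f(R_i\cup\{h\})\}$ to random nodes, which plays the role of doubling, and every node independently discards each non-original element with probability $1-1/(1+1/(2d))$, which replaces the global ``$|V|\le |H(\mu)|/(3d)$'' test and keeps $|H(V)|=O(|H_0|)$ w.h.p. The analysis then needs new ingredients you do not have: a per-node violation bound ($\E[|W_i|]\le d\frac{m-r}{n(r+1)}$), a Chernoff-type bound for negatively correlated $W_i$'s to bound $\sum_i|W_i|$, a proof that filtering keeps the total size bounded, a multiplicity-growth argument showing some basis element's expected multiplicity grows like $(2/\sqrt{e})^{T/d}$ as long as no node has hit the optimum (forcing termination in $O(d\log n)$ rounds), and a separate gossip-based termination-detection protocol costing $O(\log n)$ rounds and $O(\log n)$ work. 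Your correctness claim (``return $f(R)$ with $V=\emptyset$'') also presumes the global structure you could not implement within the stated round budget, so the proposal as written does not establish the theorem.
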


At a high level, the Low-Load Clarkson Algorithm is similar to the original Clarkson algorithm, but sampling a random multiset and termination detection are more complex now, and a filtering approach is needed to keep $|H(\mu)|$ low at all times so that the work is low. In Section 3, we then consider the highly loaded case and prove the following theorem.

\begin{theorem} \label{th:main2}
For any LP-type problem $(H,f)$ with $|H|=\omega(n \log n)$ and $|H|=poly(n)$, the High-Load Clarkson Algorithm finds an optimal solution in $O(d \log n)$ rounds with maximum work $O(d\log n)$ per round, w.h.p. If we allow a maximum work of $O(d \log^{1+\epsilon} n)$ per round, for any constant $\epsilon>0$, the runtime reduces to $O(d \log(n)/\log\log(n))$, w.h.p.
\end{theorem}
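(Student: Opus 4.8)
For Theorem~\ref{th:main2} the plan is to have the $n$ nodes jointly simulate Clarkson's repeat-loop (Algorithm~\ref{alg:Clarkson}). Throughout the execution $H(\mu)$ is stored distributed, and a rebalancing subroutine keeps every node's number of constraint-copies within an $O(\log n)$ factor of the average; since $|H|=\mathrm{poly}(n)$ and, by Lemma~\ref{lem_success}, $|H(\mu)|$ grows by at most a polynomial factor over the whole run, $|H(\mu)|=\mathrm{poly}(n)$ at all times, so these loads --- and hence the per-round work --- stay controlled. One iteration of the loop is implemented by three gossip primitives. First, the multiset $R$ of size $r=O(d^2)$ is drawn from $H(\mu)$ with the correct multiplicity-weighted distribution: in the high-load regime the node loads are balanced and $\omega(\log n)$ on average, so rejection sampling --- each sampled copy obtained by a pull that returns a uniformly random local copy and is accepted with probability proportional to the responder's load --- needs only a constant number of pulls per copy in expectation, and by the locality condition it then suffices to compute and disseminate a \emph{basis} $B_R$ of $R$ (of size $\le d$), since a constraint $h$ violates $R$ if and only if it violates $B_R$. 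Second, $B_R$ is broadcast to all nodes, after which each node tests each of its constraints. Third, the global quantities $|V|$ and $|H(\mu)|$ are aggregated so the nodes can evaluate the success test $|V|\le|H(\mu)|/(3d)$ and the termination test $V=\emptyset$, after which, on success, the copies in $V$ are doubled and the rebalancing step is invoked.

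The broadcast and the aggregation each take $\Theta(\log n)$ rounds, which would give $O(d\log^2 n)$ overall, so the iterations are pipelined through a single shared broadcast/aggregation structure of depth $O(\log n)$ and $O(1)$ work per round (the kind used for gossip aggregate and quantile computations): since, by Lemmas~\ref{lem_V} and~\ref{lem_success}, only $O(d\log n)$ iterations occur with high probability, pipelining costs $O(\log n)+O(d\log n)=O(d\log n)$ rounds. The price of pipelining is that, while an iteration's broadcast and aggregation are in flight, the outcomes of the $O(\log n)$ subsequent iterations are still unknown, so a sample must be drawn against a multiplicity function that may lag the true one by up to $O(\log n)$ doublings. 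Showing that this relaxed, pipelined execution still faithfully realizes Clarkson's guarantee --- a stale $\mu$ only defers progress, each pending $V$ is still doubled against the current $H(\mu)$, and consequently the number of effective iterations and the bound $|H(\mu)|=n^{O(1)}$ change only by constant factors, so correctness follows as for Algorithm~\ref{alg:Clarkson} --- is the main obstacle; keeping a node's load (hence work) within $O(d\log n)$ even transiently, right after a doubling and before rebalancing finishes, needs an additional but routine argument.

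For the variant with $O(d\log^{1+\epsilon}n)$ work, two changes combine. Because $O(\log^{\epsilon} n)$ work lets a node contact $\log^{\epsilon} n$ random nodes per round, the broadcast/aggregation structure can be given arity $\log^{\epsilon} n$ and hence depth $O(\log n/\log\log n)$. In parallel, the throughput is raised by drawing $\Theta(\log^{\epsilon} n)$ independent samples per round, aggregating all of their violation counts in one wide sweep, and committing, among the ones that pass the success test, a sub-collection of $\Theta(\log\log n)$ per round; to make $\Theta(\log\log n)$ units of Clarkson progress per round while keeping $|H(\mu)|$ polynomial, the committed iterations are required to have (nearly) disjoint violation sets, the sample size is raised to $r=\Theta(d^2\log\log n)$, and the success threshold is lowered to $|V|\le|H(\mu)|/(3d\log\log n)$, so that one committed round inflates $|H(\mu)|$ by only a constant factor. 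Justifying that enough iterations with (nearly) disjoint violation sets can be committed each round --- a batched counterpart of Lemma~\ref{lem_success} that bounds the overlap among simultaneously-committed violation sets --- is the technical heart of this part; granting it, the total is $O(\log n/\log\log n)+O(d\log n/\log\log n)=O(d\log(n)/\log\log(n))$ rounds.
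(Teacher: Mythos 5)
Your proposal diverges completely from the paper's route, and the two places you yourself flag as ``the main obstacle'' and ``the technical heart'' are precisely where it breaks. First, the pipelining step is not a routine detail: Clarkson's progress argument (Lemma~\ref{lem_success}) needs each successful iteration's sample to be drawn from the \emph{current}, post-doubling multiset $H(\mu)$, since progress is measured by the relative weight of the optimal basis growing by a factor $2^{1/d}$ per successful iteration against a total that grows by at most $(1+1/(3d))$. If samples lag the true $\mu$ by $\Theta(\log n)$ doublings, the naive accounting gives only one effective iteration per $\Theta(\log n)$ rounds, i.e.\ exactly the $O(d\log^2 n)$ bound the theorem is meant to beat; nothing in your sketch shows that stale samples still make per-round progress. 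Second, your implementation presumes a shared broadcast/aggregation tree of depth $O(\log n)$ (and arity $\log^{\epsilon}n$ in the accelerated variant) plus exact global aggregation of $|V|$ and $|H(\mu)|$, but the model here is anonymous uniform gossip: nodes cannot address specific neighbors or maintain tree links, so broadcasts are rumor spreadings and aggregates are at best approximate, and the success/termination tests of Algorithm~\ref{alg:Clarkson} cannot be evaluated as you describe. Third, the batched counterpart of Lemma~\ref{lem_success} with nearly disjoint violation sets, needed for your $O(d\log n/\log\log n)$ variant, is simply asserted.

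The paper avoids all of this by \emph{not} simulating the centralized loop. In the High-Load Clarkson Algorithm (Algorithm~\ref{alg:DistClarkson2}) every node uses its own local set $H(v_i)$ --- which is automatically a random subset of $H(V)$ of size $\approx |H(V)|/n$, no rebalancing or weighted sampling needed --- as its ``sample,'' computes an optimal basis $B_i$, pushes it to a single random node, and the recipient pushes its violating elements to random nodes; there is no success test, no doubling step, and no global aggregation, only the gossip termination detection of Section~\ref{sec:termination}. The two ingredients you would then need are exactly the paper's lemmas: a per-node work bound $|W_i|=O(d\log n)$ w.h.p., which requires the new higher-moment/Chernoff-style bound of Lemma~\ref{lem_hV} valid for arbitrary (in particular degenerate) LP-type problems --- the paper's stated main innovation, which your write-up never engages with --- and the progress bound of Lemma~\ref{lem:hlsuccess}, showing some optimal-basis element is expected to multiply by a constant factor per round, so that $|H(V)|\le |H|+O(Tdn\log n)=\mathrm{poly}(n)$ forces termination in $O(d\log n)$ rounds. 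The accelerated variant is then trivial: push each $B_i$ $C=\log^{\epsilon}n$ times and apply Lemma~\ref{lem:ahlsuccess} to get growth $(C+1)^{\lfloor T/d\rfloor}$, hence $O(d\log n/\log\log n)$ rounds, with no batching or disjointness argument at all.
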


Note that as long as we only allow the nodes to spend polylogarithmic work per round, a trivial lower bound on the runtime when using Clarkson's approach is $\Omega(\log(n)/\log\log(n))$ since in $o(\log(n)/\log\log(n))$ rounds an element in $H$ can only be spread to $n^{o(1)}$ nodes, so the probability of fetching it under the gossip model is minute.

The reason why we designed different algorithms for the lightly loaded and highly loaded cases is that the Low-Load Clarkson Algorithm is much more efficient than the High-Load Clarkson Algorithm concerning internal computations. Also, it is better concerning the work for the lightly loaded case, but its work does not scale well with an increasing $|H|$. The main innovation for Theorem~\ref{th:main2} is that we come up with a Chernoff-style bound for $|V|$ that holds for all LP-type problems. G\"artner and Welzl \cite{GW01} also provided a Chernoff-style bound on $|V|$ for LP-type problems, but their proof only works for LP-type problems that are regular (i.e., for all $R \subseteq H$ with $|R| \ge d$, all optimal bases of $R$ have a size of exactly $d$) and non-degenerate (i.e., every $R \subset H$ with $|R| \ge d$ has a unique optimal basis). While regularity can be enforced in the non-degenerate case, it is not known so far how to make a general LP-type problem non-degenerate without substantially changing its structure (though for most of the applications considered so far for LP-type problems, slight perturbations of the input would solve this problem). Since the duplication approach of Clarkson's algorithm generates degenerate instances, their Chernoff-style bound therefore cannot be used here.

Finally, we will study two LP-type problems that can potentially have a very high combinatorial dimension even though the size of an optimal basis might just be a constant: the hitting set problem and the set cover problem.

Let $X=\{1,\ldots,n\}$ be a set of elements and ${\mathcal S}$ be a collection of subsets of $X$. A subset $H \subseteq X$ is called a {\em hitting set} of ${\mathcal S}$ if for all $S \in {\mathcal S}$, $S \cap H \not= \emptyset$. 
In the {\em hitting set problem} we are given $(X,{\mathcal S})$, and the goal is to find a hitting set of minimum size. 
%The hitting set problem is known to be NP-complete. 

First of all, it is easy to verify that $(X,f)$, where $f(U)$ for any subset $U$ of $X$ denotes the number of sets in ${\mathcal S}$ intersected by $U$, satisfies the monotonicity and locality conditions, so $(X,f)$ is an LP-type problem. However, its combinatorial dimension might be much larger than the size of a minimum hitting set. Nevertheless, we present a distributed gossip-based algorithm that is able to find an approximate solution efficiently. 

We assume that every node knows ${\mathcal S}$ so that it can locally evaluate $f$. Note that knowing ${\mathcal S}$ may not necessarily mean that every node knows $X$ because the sets might just be defined implicitly w.r.t. $X$, e.g., the sets $S \in {\mathcal S}$ might represent polygons in some 2-dimensional space.  Also, initially, the points in $X$ are randomly distributed among the nodes.
Under these assumptions, we can show the following theorem.

\begin{theorem} \label{th:main3}
For any hitting set problem $(X,{\mathcal S})$ with $|X|=n$ and $|{\mathcal S}|=s$ and a minimum hitting set of size $d$, our Hitting Set Algorithm finds a hitting set of size $O(d \log (ds))$ in $O(d \log n)$ rounds with maximum work $O(d \log(ds) + \log n)$ per round, w.h.p.
\end{theorem}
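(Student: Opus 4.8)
The plan is to adapt the High-Load Clarkson Algorithm of Theorem~\ref{th:main2} to the LP-type problem $(X,f)$ where $f(U)$ counts the sets of $\mathcal S$ hit by $U$, using the fact that each node knows $\mathcal S$ and can therefore evaluate $f$ locally. The key observation is that although the combinatorial dimension of $(X,f)$ may be as large as $n$, Clarkson's duplication scheme only needs the \emph{size of an optimal basis} $d$ (the minimum hitting set size) to bound the number of successful iterations, and only needs an \emph{upper bound on the expected size of the violation set $V$} to bound the number of unsuccessful iterations. For the first point, observe that if $R$ does not hit every set of $\mathcal S$, then $f(R) < f(X)$, and repeating the argument in the proof of Lemma~\ref{lem_success} (using monotonicity and locality applied to a minimum hitting set $B$ with $|B|=d$) shows that some element of $B$ gets doubled; hence $O(d\log n)$ successful iterations suffice before $\mu(B) \ge 2^k$ forces $B \subseteq R$ w.h.p., so $R$ is itself a hitting set. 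For the second point, the subtlety is that Lemma~\ref{lem_V} gives $\E[|V_R|] \le \dim(X,f)\cdot\frac{m-r}{r+1}$, which is useless if $\dim(X,f)$ is huge. The fix is to run the algorithm with the \emph{sample size $r$ scaled to $d$ rather than to $\dim(X,f)$} — say $r = \Theta(d)$ — and to bound $|V_R|$ by $d\cdot\frac{m-r}{r+1}$ directly: the relevant quantity in the "$\le d$ elements of $Q$ with $f(Q-h)<f(Q)$" step of Lemma~\ref{lem_V} can be replaced by the number of elements whose removal from $Q$ strictly decreases $f$, and for the hitting-set function a more careful accounting relates this to $d$ rather than to the full combinatorial dimension. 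I would make this precise by arguing that a random sample of size $\Theta(d)$ already hits all but a $d\cdot\frac{m-r}{r+1}$-fraction (in the multiplicity measure) of $X(\mu)$ in expectation, which is the content we need for the "success probability $\ge 1/2$" step.

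From there the structure mirrors Section~3: after $O(d\log n)$ rounds of the High-Load algorithm, w.h.p. enough successful iterations have occurred that the final sample $R$ is a hitting set. The approximation guarantee $O(d\log(ds))$ comes from bounding $|R|$: at termination, $R$ has size $r \cdot (\text{number of distinct elements with positive multiplicity weighted appropriately})$ — more carefully, one argues as in Clarkson's analysis that the union of all \emph{distinct} elements ever placed in a violation set $V$, i.e.\ the support of the set of doubled elements, has size $O(d\log(ds))$. Intuitively, each set $S\in\mathcal S$ can be "responsible" for doubling at most $O(d\log\mu_{\max})$ elements because once enough copies of its hitters accumulate it is permanently hit; summing over $s$ sets and using $\mu_{\max} = \mathrm{poly}(d)$ (which follows from the $2^k \le \mu(B) < n e^{k/3}$-style bound specialized to $d$) yields the $O(d\log(ds))$ bound. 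The output of the algorithm is then taken to be not $R$ itself but the support of the doubled elements together with a final greedy/sampling cleanup that extracts a genuine hitting set of that size; one verifies it hits every $S\in\mathcal S$ using that $R\subseteq$ this support at termination and $R$ hits everything.

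The runtime and work bounds follow by the same gossip-model machinery used for Theorem~\ref{th:main2}: sampling a random multiset of size $\Theta(d)$ from $X(\mu)$, broadcasting it, locally evaluating $f(R)$ and $f(R\cup\{h\})$ for each held element $h$ (which costs $O(d\log(ds))$ per node since evaluating $f$ on a set of size $O(d)$ against $s$ sets can be encoded compactly, and the held sample plus bookkeeping is $O(d\log(ds))$ bits), doubling multiplicities, and detecting termination, each in $O(1)$ rounds with $O(d\log(ds)+\log n)$ work per round w.h.p.; multiplying by the $O(d\log n)$ iteration bound gives the claimed $O(d\log n)$ total rounds. The main obstacle is the second step above — replacing the combinatorial dimension $\dim(X,f)$ by the minimum hitting set size $d$ in the expected-violation bound — since Lemma~\ref{lem_V} as stated does not give this, and one must exploit the specific combinatorial structure of the hitting-set function (rather than generic LP-type axioms) to show that a sample of size $\Theta(d)$, not $\Theta(\dim(X,f))$, already drives the violation fraction below $1/(3d)$ with constant probability; getting both the $O(\log n)$ round count and the $O(d\log(ds))$ output size out of the same run of the algorithm, with consistent constants, is the delicate part.
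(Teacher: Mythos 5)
There is a genuine gap at exactly the step you flag as the main obstacle, and it cannot be repaired in the form you propose: for the hitting-set function, the full violation set of a sample $R$ is the union of all sets of ${\mathcal S}$ not hit by $R$, and its size is \emph{not} controlled by the minimum hitting set size $d$. Take $d=1$: one element $x^*$ common to all $s$ sets, each set padded with many private elements of multiplicity one. A random multiset of size $r=\Theta(d)$ (or even $\Theta(d\log(ds))$, or anything $o(s)$) almost surely misses every copy of $x^*$ while hitting at most $r$ of the $s$ sets, so the violated elements comprise a $1-O(r/s)$ fraction of $X(\mu)$ --- nowhere near $d\cdot\frac{m-r}{r+1}$, and the violated fraction cannot be driven below $1/(3d)$ with constant probability unless $r=\Omega(s)$. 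Likewise, your suggested repair of Lemma~\ref{lem_V} fails: the number of $h\in Q$ with $f(Q-h)<f(Q)$ is governed by the combinatorial dimension (for hitting set it can be $|Q|$, e.g.\ when each element of $Q$ is the unique hitter of some set), and no ``more careful accounting'' ties it to $d$. The paper escapes this by changing the \emph{algorithm}, not the lemma: a node never duplicates its whole violation set; it picks \emph{one} unhit set $S\in{\mathcal S}_i$ uniformly at random and pushes only $S\cap X(v_i)$ (capped at $c\,d\log n$ elements), with sample size $r=\Theta(d\log(ds))$, inside the Low-Load framework (pull-based sampling, filtering with probability $1/(1+1/(2d))$, termination as in Section~\ref{sec:termination}). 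Classifying sets as small or large according to whether their multiplicity mass is at most $m/(6d)$, a sample of that size leaves each large set unhit with probability at most $1/(12ds)$, so conditioned on not yet having sampled a hitting set the chosen unhit set is almost always small and $\E[|W_i|]\le m/(6dn)$; this substitute for Lemma~\ref{lem_V} is what makes the growth and work analysis go through.

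Your route to the $O(d\log(ds))$ output size is also off target. In the paper the hitting set returned is simply the successful sample $R_i$ itself, so the bound is immediate from the choice of $r$; your plan to output the support of the doubled elements after a cleanup rests on claims that are false or unsupported: multiplicities grow to order $n$ (cf.\ $2^k\le\mu(B)<n\cdot e^{k/3}$ in Lemma~\ref{lem_success}), not $\mathrm{poly}(d)$, and the number of distinct elements ever duplicated has no $O(d\log(ds))$ bound. What does survive from your sketch is the termination argument (a minimum hitting set plays the role of the basis, its copies multiply at expected rate as in Lemma~\ref{lem:llsuccess}, and comparison with the $O(n\,\mathrm{polylog}\,n)$ bound on $|X(V)|$ forces termination in $O(d\log n)$ rounds) and the general gossip machinery for sampling and termination detection; but both the work bound $O(d\log(ds)+\log n)$ and the approximation guarantee come from the sample size $r=\Theta(d\log(ds))$ together with the one-random-unhit-set duplication rule, neither of which your $r=\Theta(d)$, full-violation-set variant provides.
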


Finally, let us review the set cover problem. Again, let $X=\{1,\ldots,n\}$ be a set of elements and ${\mathcal S}$ be a collection of subsets of $X$, where we assume here that $\bigcup_{S \in {\mathcal S}} S = X$. A set ${\mathcal C} \subseteq {\mathcal S}$ is called a {\em set cover} of $X$ if $\bigcup_{S \in {\mathcal C}} S = X$. In the (simple form of the) {\em set cover problem} we are given $(X,{\mathcal S})$, and the goal is to find a set cover of minimum size, i.e., a minimum number of sets.

It is easy to verify that $({\mathcal S},f)$, where $f(U)$ for any subset $U$ of ${\mathcal S}$ denotes the number of elements in $X$ covered by $U$, satisfies the monotonicity and locality conditions, so $({\mathcal S},f)$ is an LP-type problem.

We assume that every node knows $X$ so that it can locally evaluate $f$, and initially the elements in ${\mathcal S}$ are randomly distributed among the nodes. Note that even though some set might contain many elements in $X$,
we will assume here that every $S \in {\mathcal S}$ has a compact representation (like a polygon) so that it can be sent in one message. 

We can then use our Hitting Set Algorithm to solve any set cover problem with the same bounds as in Theorem~\ref{th:main3}, because there is a well-known equivalent formulation as a hitting set problem: Given that ${\mathcal S}=\{S_1,\ldots,S_s\}$, let $Y=\{1,\ldots,s\}$ and ${\mathcal M} = \{M_1,\ldots,M_n\}$, where $M_i = \{ j \in \{1,\ldots,s\} \mid i \in S_j\}$. Then a set cover in $(X,{\mathcal S})$ corresponds to a hitting set in $(Y,{\mathcal M})$.

\section{Low-Load Clarkson Algorithm}

Suppose that we have an arbitrary LP-type problem $(H,f)$ of dimension $d$ with $|H|=O(n \log n)$. First, we present and analyze an algorithm for $|H| \ge n$, and then we extend it to any $1 \le |H|=O(n \log n)$.

Recall that initially the elements of $H$ are assigned to the nodes uniformly and independently at random. Let us denote the set of these elements in node $v_i$ by $H_0(v_i)$ to distinguish them from copies created later by the algorithm, and let $H_0 = \bigcup_i H_0(v_i)$.

At any time, $H(v_i)$ denotes the (multi)set of elements in $H$ known to $v_i$ (including the elements in $H_0$) and $H(V)=\bigcup_i H(v_i)$, where $V$ represents the node set. Let $m=|H(V)|$. At a high level, our distributed algorithm is similar to the original Clarkson algorithm, but sampling a random multiset and termination detection are more complex now (which will be explained in dedicated subsections). In fact, the sampling might fail since a node $v_i$ might not be able to collect enough elements for $R_i$. Also, a filtering approach is needed to keep $|H(V)|$ low at all times (see Algorithm~\ref{alg:DistClarkson}). However, it will never become too low since the algorithm never deletes an element in $H_0$, so $|H(V)| \ge n$ at any time. Note that never deleting an element in $H_0$ also guarantees that no element in $H$ will ever be washed out (which would result in incorrect solutions). 

For the runtime analysis, we note that sampling $R_i$ can be done in one round (see Section~\ref{sec:sampling}), spreading $W_i$ just takes one round (by executing the push operations in parallel), and we just need one more round for processing the received elements $h$, so for simplicity we just assume in the following that an iteration of the repeat loop takes one round. We start with a slight variant of Lemma~\ref{lem_V}.

\begin{algorithm}[th]
  \begin{algorithmic}[1]
    \Repeat
      \ForAll{nodes $v_i$ in parallel}
        \State choose a random multiset $R_i$ of size $6d^2$ from $H(V)$
        \If{the sampling of $R_i$ succeeds}
          \State $W_i:=\{ h \in H(v_i) \mid f(R_i)<f(R_i \cup \{h\}) \}$
          \ForAll{$h \in W_i$} {\bf push}($h$) {\em // randomly spread $W_i$}
          \EndFor
        \EndIf
        \ForAll{$h$ received by $v_i$} add $h$ to $H(v_i)$
        \EndFor
        \ForAll{$h \in H(v_i) - H_0(v_i)$} 
          \State keep $h$ with probability $1/(1+1/(2d))$
        \EndFor
      \EndFor
    \Until{at least one $v_i$ satisfies $f(R_i)=f(H)$}
  \end{algorithmic}
  \caption{Low-Load Clarkson Algorithm.}
  \label{alg:DistClarkson}
\end{algorithm}

\begin{lemma} \label{lem_Wi}
Let $(H,f)$ be an LP-type problem of dimension $d$ and let $H(V)$ be any multiset of $H$ of size $m$. For any $1 \le r < m$, the expected size of $W_i=\{h \in H(v_i) \mid f(R)<f(R \cup \{h\}) \}$ for a random multiset $R$ of size $r$ from $H(V)$ is at most $d \cdot \frac{m-r}{n(r+1)}$.
\end{lemma}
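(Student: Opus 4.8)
The plan is to deduce this directly from Lemma~\ref{lem_V} by averaging over the $n$ nodes. For any fixed realization of all the randomness, the multisets $H(v_1),\dots,H(v_n)$ partition $H(V)$ (each copy of an element resides at exactly one node), hence the sets $W_1,\dots,W_n$ partition $V_R := \{h \in H(V) \mid f(R) < f(R\cup\{h\})\}$ and in particular $\sum_{i=1}^n |W_i| = |V_R|$ holds identically. Taking expectations and applying Lemma~\ref{lem_V} with $H(\mu) := H(V)$ (whose hypothesis $1\le r<m$ is exactly ours) gives $\sum_{i=1}^n \E[|W_i|] = \E[|V_R|] \le d\cdot\frac{m-r}{r+1}$.

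It then remains to see that $\E[|W_i|]$ is the same for every $i$. Here I would invoke symmetry of the setup under relabeling of the nodes: $R$ is a uniformly random $r$-multiset of $H(V)$ and so does not refer to the node set at all, while the current placement of $H(V)$ among the nodes was produced by a node-symmetric process (the elements of $H_0$ are distributed uniformly at random, every copy is created by a \textbf{push} to a uniformly random node, and the filtering step treats all nodes identically). Consequently $(|W_1|,\dots,|W_n|)$ is an exchangeable random vector, so all $\E[|W_i|]$ coincide and equal $\frac{1}{n}\E[|V_R|] \le \frac{d}{n}\cdot\frac{m-r}{r+1}$, which is the claimed bound.

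The identity $\sum_i|W_i| = |V_R|$ and the citation of Lemma~\ref{lem_V} are routine; the only point that genuinely needs care is the exchangeability claim, namely checking that no part of the construction --- in particular the gossip-based sampling of $R$ described in Section~\ref{sec:sampling} --- secretly breaks the symmetry among the nodes. If one prefers to avoid the exchangeability argument altogether, the same factor $1/n$ can instead be extracted by rerunning the double-counting computation from the proof of Lemma~\ref{lem_V} and using that, by the uniform random placement, each copy in $H(V)$ independently lies in $H(v_i)$ with probability $1/n$ and independently of the event $f(R)<f(R\cup\{h\})$.
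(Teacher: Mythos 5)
Your proposal is correct and is essentially the paper's argument: the paper applies Lemma~\ref{lem_V} to $H(V)$ and then notes that every element of $H(V)$ belongs to $H(v_i)$ with probability $1/n$, which is exactly the per-element route you sketch at the end as an alternative. The averaging-plus-exchangeability packaging of your main argument is a valid but unnecessary detour, and the symmetry concern you raise about the sampling of $R$ is precisely the point the paper addresses in the remark at the end of Section~\ref{sec:sampling}.
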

\begin{proof}
According to Lemma~\ref{lem_V}, the expected size of $W(R)=\{ h \in H(V) \mid f(R)<f(R \cup \{h\}) \}$ for a random multiset $R$ is at most $d \cdot \frac{m-r}{r+1}$. Since every element in $H(V)$ has a probability of $1/n$ to belong to $H(v_i)$, $\E[|W_i|] \le d \cdot \frac{m-r}{n(r+1)}$.
\end{proof}

This allows us to prove the following lemma.

\begin{lemma} \label{lem:Wgrowth}
For all $i$, $|W_i|=O(m/n + \log n)$, w.h.p., and $\sum_{i=1}^n |W_i|$ $\le m/(3d)$, w.h.p. 
\end{lemma}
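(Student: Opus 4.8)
The plan is to prove the two claims separately, both via concentration arguments built on top of the expectation bound from Lemma~\ref{lem_Wi}, while being careful that the relevant randomness is the choice of the multiset $R$ (shared across all nodes in a given iteration) and the independent random assignment of $H(V)$ to nodes.

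\textbf{The per-node bound.} First I would fix an iteration and condition on the multiset $R$ actually sampled (so that the set $W(R)=\{h\in H(V)\mid f(R)<f(R\cup\{h\})\}$ is determined). Observe that $W_i$ is exactly the sub-multiset of $W(R)$ that landed in node $v_i$: each element (copy) of $H(V)$ sits in $v_i$ independently with probability $1/n$ (this is the key independence from the random initial distribution plus the fact that copies inherit a random destination via the \textbf{push} operations). Hence $|W_i|$ is a sum of independent indicator variables with mean $|W(R)|/n$. By Lemma~\ref{lem_V}, $\E[|W(R)|]\le d\cdot\frac{m-r}{r+1}\le d\,(m-r)/(r+1)$, and since $r=6d^2\ge 1$ this is $O(m/d)=O(m)$; in any case $|W(R)|\le m$ deterministically, so $\E[|W_i|\mid R]\le m/n = O(m/n)$. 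Now apply a Chernoff bound: for a sum of independent $0/1$ variables with mean $\mu$, the probability of exceeding $\max\{2\mu,\, c\log n\}$ for a suitable constant $c$ is at most $n^{-\Omega(c)}$. Taking $c$ large enough and a union bound over all $n$ nodes and over the $O(d\log n)$ iterations (polynomially many events) gives $|W_i|=O(m/n+\log n)$ simultaneously for all $i$, w.h.p. The only subtlety is that $m$ itself varies over the run; but since the filtering step keeps $|H(V)|\le$ some polynomial in $n$ (and $\ge n$), $m/n$ is polynomially bounded, and the Chernoff estimate is uniform in $\mu$ once we take the max with $c\log n$.

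\textbf{The aggregate bound.} For $\sum_i |W_i|$ I would not sum the per-node bounds (that would only give $O(m + n\log n)$, which is too weak). Instead note that $\sum_{i=1}^n |W_i| = |W(R)|$: every copy in $H(V)$ resides in exactly one node, so summing the violated copies over nodes just recovers the total number of violated copies. Thus it suffices to show $|W(R)|\le m/(3d)$ w.h.p. By Lemma~\ref{lem_V} (applied with the current multiplicity function and $r=6d^2$), $\E[|W(R)|]\le d\cdot\frac{m-r}{r+1}\le d\cdot\frac{m}{6d^2+1}\le \frac{m}{6d}$. A clean way to finish is to invoke a Chernoff-style tail bound for $|W(R)|$: since $m\ge n$, the mean $\tfrac{m}{6d}\ge \tfrac{n}{6d}=\Omega(\log n)$ (using that $d$ is constant and $n$ large, or more precisely that $n/d$ is at least a large multiple of $\log n$), so $|W(R)|$ is concentrated and a constant-factor deviation to $m/(3d)$ fails only with probability $n^{-\Omega(1)}$; union bounding over the $O(d\log n)$ iterations preserves w.h.p.

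\textbf{Main obstacle.} The delicate point — and the one I would spell out carefully — is which concentration inequality legitimately applies to $|W(R)|$. For the per-node bound this is genuinely easy, because conditioned on $R$ the membership indicators are independent Bernoulli($1/n$) variables, so a textbook Chernoff bound suffices. For the aggregate bound, $|W(R)|$ is a function of the random multiset $R$ (sampled with replacement from $H(V)$), and it is \emph{not} a sum of independent indicators, so one needs either a bounded-differences / martingale argument (each of the $r$ coordinates of $R$ changes $|W(R)|$ by at most $m$, which is far too lossy) or — and this is presumably the intended route given the paper's remark about a new Chernoff-style bound for LP-type problems — a genuinely problem-specific tail bound. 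In the low-load regime one can sidestep this: since $\E[|W(R)|]\le m/(6d)$ and we only need the bound $m/(3d) = 2\cdot\E[|W(R)|]$-ish, we could even be content with Markov-type reasoning iterated, but to get ``w.h.p.'' rather than ``constant probability'' I expect the authors lift the concentration through the fact that $|W(R)|$ is itself concentrated around its mean via the combinatorial identity in Lemma~\ref{lem_V} combined with a more careful sampling-without-replacement Chernoff bound; identifying the cleanest correct such statement is the part of the proof I would budget the most space for.
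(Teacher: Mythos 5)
Your proposal rests on a misreading of the algorithm that breaks the aggregate bound. In Algorithm~\ref{alg:DistClarkson} every node $v_i$ draws its \emph{own} random multiset $R_i$; there is no single $R$ shared across the nodes of an iteration. Hence $\sum_i |W_i|$ is not $|W(R)|$ for any one sample, and your reduction of the second claim to concentration of $|W(R)|$ does not match the quantity being bounded. Moreover, that route could not work even in principle: $|W(R)|$ is a function of only $r=6d^2$ random draws, so a large mean does not imply concentration (your step ``$m/(6d)\ge n/(6d)=\Omega(\log n)$, hence concentrated'' is a non sequitur), and the best tail one can have for a single sample is geometric in the deviation --- the bound underlying Lemma~\ref{lem_hWi} gives $\Pr[|W_R| \ge c\,\gamma\, d\, m/(r+1)] \le 2^{-\gamma}$ and the paper notes this is essentially tight by a lower bound of G\"artner and Welzl. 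At the level $m/(3d)$, i.e.\ a constant multiple of the mean, this yields only constant failure probability (which is exactly why the sequential Clarkson analysis succeeds only with probability $1/2$ per iteration); pushing to polynomially small failure probability would force the threshold up to $\Theta(m\log n/d)$. So under your shared-$R$ reading the claimed w.h.p.\ statement is simply not provable, and the ``main obstacle'' you flag is unresolvable rather than a technical gap to be filled. (The Chernoff-style bound for $|W_R|$ that the paper advertises, Lemma~\ref{lem_hV}/Lemma~\ref{lem_hWi}, is used only in the high-load algorithm and only gives per-node, geometric tails.)

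The paper's proof exploits precisely the feature you discarded: the $n$ samples $R_1,\ldots,R_n$ are drawn independently by the nodes, so with $X_i=|W_i|$ each $X_i$ has $\E[X_i]\le d\frac{m-r}{n(r+1)}$ by Lemma~\ref{lem_Wi}, is capped by $X_i\le |H(v_i)|=O(m/n+\log n)$ w.h.p.\ (a balls-into-bins bound, which is also all that is needed for the first claim --- that part of your argument is essentially fine), and the only coupling among the $X_i$ is through the random assignment of $H(V)$ to nodes. The paper bounds the mixed moments $\E[\prod_{i\in S}X_i]$ for $|S|=o(n)$ by $((1+o(1))\,d\frac{m-r}{n(r+1)})^{|S|}$ and applies Theorem~\ref{th:chernoff}, a Chernoff--Hoeffding bound for $k$-wise negatively correlated variables with cap $C=\Theta(m/n+\log n)$, to conclude $\Pr[\sum_i X_i\ge m/(3d)]$ is polynomially small. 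The missing idea in your write-up is therefore the averaging over $n$ independent per-node samples (handled via negative-correlation Chernoff), not a sharper tail bound for a single sample.
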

\begin{proof}
Let the random variable $X_i$ be defined as $|W_i|$ and let $X = \sum_i X_i$. If the sampling of $R_i$ fails then, certainly, $X_i=0$, and otherwise, $\E[X_i] \le d \cdot \frac{m-r}{n(r+1)}$ for all $i$. Thus, $\E[X] \le d \cdot \frac{m-r}{r+1}$. Also, since the elements in $H(V)$ are distributed uniformly and independently at random among the nodes at all times, the standard Chernoff bounds imply that $|H(v_i)|=O(m/n + \log n)$ w.h.p., and therefore also $X_i \le O(m/n + \log n)$ w.h.p. 
Unfortunately, the $X_i$'s are not independent since $H(v_i)$ is not chosen independently of the other $H(v_j)$'s, but the dependencies are minute: given that we have already determined $H(v_j)$ for $k$ many $v_j$'s, where $k=o(n)$ is sufficiently small, the probability that any one of the remaining elements $h\in H(V)$ is assigned to $H(v_i)$ is $1/(n-o(n)) = (1+o(1))/n$, so that for any subset $S \subseteq \{1,\ldots,n\}$ of size $k$,
\[
  \E[\prod_{i \in S} X_i] \le \left( (1+o(1))d \cdot \frac{m-r}{n(r+1)} \right)^k \qquad (*)
\]
This allows us to use a Chernoff-Hoeffding-style bound for $k$-wise negatively correlated random variables, which is a slight extension of Theorem 3 in \cite{SSS95}:

\begin{theorem} \label{th:chernoff}
Let $X_1,\ldots,X_n$ be random variables with $X_i \in [0,C]$ for some $C>0$. Suppose there is a $k>1$ and $q>0$ with $\E[\prod_{i \in S} X_i] \le q^s$ for all subsets $S \subseteq \{1,\ldots,n\}$ of size $s \le k$. Let $X = \sum_{i=1}^n X_i$ and $\mu = q \cdot n$. Then it holds for all $\delta > 0$ with $k \ge \lceil \mu \delta \rceil$ that
\[
  \Pr[X \ge (1+\delta)\mu] \le e^{-\min\{\delta^2, \delta\} \mu/(3C)}
%  \qquad \mbox{and} \qquad
%  \Pr[X \le (1-\delta)\mu] \le e^{-\delta^2 \mu/(2C)}
\]
%The left inequality also holds for any $\mu\ge \E[X]$ and the right one for any $\mu \le \E[X]$.
\end{theorem}

Setting $C=\Theta(m/n + \log n)$, $\mu=q \cdot n$ with $q=(1+o(1))d \cdot \frac{m-r}{n(r+1)}$ and $r=6d^2$, and $\delta>0$ large enough so that $\delta^2 \mu = \omega(C \ln n)$ but $\delta \mu = o(n)$ so that inequality $(*)$ applies, which works for $\delta = \Theta(\sqrt{(C d \ln n)/m})=O(\sqrt{(d \ln^2 n)/n})$, $\Pr[X \ge m/(3d)]$ is polynomially small in $n$. 
\end{proof}

Next, we show that $m$ will never be too large, so that the communication work of the nodes will never be too large.

\begin{lemma} \label{lem:Hsize}
For up to polynomially many rounds of the Low-Load Clarkson Algorithm, $|H(V)|=O(|H_0|)$, w.h.p.
\end{lemma}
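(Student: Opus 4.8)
The plan is to show that each iteration of the repeat loop contracts $|H(V)|$ toward a fixed point of order $|H_0|$, and then to apply a drift/concentration argument over polynomially many rounds. First I would track how $m = |H(V)|$ evolves in one iteration. At the start of the iteration we push all elements of $\bigcup_i W_i$; by Lemma~\ref{lem:Wgrowth}, $\sum_i |W_i| \le m/(3d)$ w.h.p., so the total multiset size just after the push step is at most $m + m/(3d) = m(1 + 1/(3d))$. Then every copy that is not in $H_0$ is kept independently with probability $1/(1+1/(2d))$. The elements of $H_0$ number exactly $|H_0|$ (these are never deleted), and the non-$H_0$ copies number at most $m(1+1/(3d)) - |H_0|$ (a copy pushed during this round of a non-$H_0$ element is also subject to filtering — even a pushed copy of an $H_0$ element becomes a new, filterable copy, but for an upper bound it suffices to assume only the original $|H_0|$ copies are protected). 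So the expected size after filtering is at most
\[
  |H_0| + \bigl(m(1+1/(3d)) - |H_0|\bigr)\cdot \frac{1}{1+1/(2d)}.
\]
A short calculation shows that when $m \ge c\,|H_0|$ for a suitable constant $c$, the factor $\tfrac{1+1/(3d)}{1+1/(2d)} < 1$ beats the additive $|H_0|$ term and the expected new size is at most $(1-\Omega(1/d))\,m$; the fixed point of this recursion is $\Theta(|H_0|)$. Since initially $m = |H_0| = O(n\log n)$, the recursion stays at $O(|H_0|)$ in expectation.

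Next I would upgrade the expectation bound to a w.h.p. bound for a single round and then union-bound over polynomially many rounds. Concentration for the filtering step is clean: conditioned on the multiset present after the push, the kept copies are a sum of independent indicators, so a standard Chernoff bound gives that the post-filter size is within a $(1+o(1))$ factor of its conditional expectation w.h.p., \emph{provided} that multiset has size $\omega(\log n)$, which holds since $|H(V)| \ge n$ always. Concentration for the push step is exactly what Lemma~\ref{lem:Wgrowth} already provides: $\sum_i |W_i| \le m/(3d)$ w.h.p. Combining, in any single round, if $m = O(|H_0|)$ at the start then $m = O(|H_0|)$ at the end, w.h.p.; taking $c$ slightly above the fixed point gives a strict contraction whenever $m$ exceeds the threshold, so $m$ cannot escape the band $[n, O(|H_0|)]$. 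A union bound over the (polynomially many) rounds, with the high-probability constant chosen large enough, yields the claim for the whole execution.

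The main obstacle I anticipate is handling the dependence across rounds correctly: the "w.h.p." guarantees of Lemma~\ref{lem:Wgrowth} and of the filtering step are conditional on the current configuration, and that configuration is itself random and determined by all prior rounds. The right way to deal with this is to set up the argument as a stochastic-domination / potential argument: define the bad event for round $t$ as "$m_t \le K|H_0|$ but $m_{t+1} > K|H_0|$" (for the appropriate constant $K$), show $\Pr[\text{bad event}_t \mid \text{history}] \le n^{-c'}$ for every history in which $m_t \le K|H_0|$, and then conclude by a union bound that with probability $1 - n^{-c' + O(1)}$ no bad event ever occurs, so $m_t \le K|H_0|$ for all $t$ up to the polynomial horizon. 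One also has to be slightly careful that the Chernoff bound for the filtering step needs the pre-filter non-$H_0$ population to be $\Omega(\log n)$ to give a w.h.p. statement; when that population is smaller the post-filter size is trivially $O(\log n) = O(|H_0|)$ anyway, so the two regimes can be merged. None of these steps requires a delicate computation; the content is entirely in the one-step drift inequality and the conditional-concentration bookkeeping.
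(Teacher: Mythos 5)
Your proposal is correct and follows essentially the same route as the paper's proof: a one-step contraction in expectation once $|H(V)|$ exceeds a constant multiple of $|H_0|$, Chernoff concentration for the independent per-element filtering decisions, the bound from Lemma~\ref{lem:Wgrowth} to limit per-round growth, and a union bound over polynomially many rounds. The only (cosmetic) difference is that you track $m=|H(V)|$ directly and absorb the protected $H_0$ copies as an additive term, whereas the paper tracks $q=|H(V)-H_0|$, which makes the contraction calculation slightly cleaner.
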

\begin{proof}
Let $q=|H(V)-H_0|$ and suppose that $|H(V)| \ge c|H_0|$ for some $c\ge 4$, which implies that $q \ge (c-1)/c \cdot m$. Then it holds for the size $q'$ of $H(V)-H_0$ at the end of a repeat-round that
\begin{eqnarray*}
  \E[q'] & \le & (q+m/(3d)) \cdot 1/(1+1/(2d)) \\
  & \le & (q+(c q)/((c-1) \cdot 3d)) \cdot 1/(1+1/(2d)) \\
  & = & q (1 - (1/(2d)-c/(3(c-1)d))/(1+1/(2d))) \\
  & = & q (1-\Theta(1/d))
\end{eqnarray*}
Since the decision to keep elements $h \in H(V)-H_0$ is done independently for each $h$, it follows from the Chernoff bounds that $\Pr[q'>q]$ is polynomially small in $n$ for $|H(V)| \ge 4|H_0|$. Moreover, Lemma~\ref{lem:Wgrowth} implies that $|H(V)|$ can increase by at most $|H(V)|/3$ in each round, w.h.p., so for polynomially many rounds of the algorithm, $|H(V)| \le 5|H_0|$ w.h.p.
\end{proof}

Thus, combining Lemma~\ref{lem:Wgrowth} and Lemma~\ref{lem:Hsize}, the maximum work per round for pushing out some $W_i$ is bounded by $O(\log n)$ w.h.p. Next we prove a lemma that adapts Lemma~\ref{lem_success} to our setting.

\begin{lemma} \label{lem:llsuccess}
Let $B$ be an arbitrary optimal basis of $H$. If, for $T$ many rounds of the Low-Load Clarkson Algorithm, every node was successful in sampling a random multiset and no $v_i$ satisfies $f(R_i)=f(H)$, then $\E[|\{ h \in H(V) \mid h \in B\}|] \ge (2/\sqrt{e})^{T/d}$ after these $T$ rounds.
\end{lemma}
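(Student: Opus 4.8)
The plan is to adapt the proof of Lemma~\ref{lem_success} to the randomized distributed setting, but to track the multiplicity of \emph{each} basis element separately and only combine them at the end via a pigeonhole argument. For $h\in B$ write $\mu_h(t)$ for the number of copies of $h$ in $H(V)$ after $t$ rounds and let $\mathcal{F}_t$ denote the history up to that point. Then $\mu_h(0)=1$ (each element of $H=H_0$ starts with one copy and $B\subseteq H$), $|\{h\in H(V):h\in B\}|=\sum_{h\in B}\mu_h(t)$ is exactly the quantity to be bounded, and $\mu_h(t)\ge 1$ throughout because elements of $H_0$ are never deleted. The argument in Lemma~\ref{lem_success} carries over to show that, in any round in which $f(R_i)<f(H)=f(B)$ for all $i$, \emph{every} node $v_i$ has a basis element violated by its sample: by monotonicity $f(R_i)\le f(R_i\cup B')\le f(R_i\cup B)=f(H)$ for all $B'\subseteq B$, so choosing a maximal $B'\subsetneq B$ with $f(R_i)=f(R_i\cup B')$ and any $h\in B\setminus B'$, locality gives $f(R_i)<f(R_i\cup\{h\})$. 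Consequently, writing $p_h(t):=\Pr[h\in W(R_i)\mid\mathcal{F}_t]$ (the same for every $i$ by symmetry, in a good round), we have $\sum_{h\in B}p_h(t)\ge 1$ in every good round.

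Next I would set up a per-element recursion. In a good round, each copy of $h$ held by a node whose sample violates $h$ is pushed once, and each $R_i$ has the same distribution, so the expected number of fresh copies of $h$ given $\mathcal{F}_t$ equals $\mu_h(t)\,p_h(t)$; all fresh copies together with all but the original $H_0$-copy then survive the filtering step with probability $1/(1+1/(2d))$. Hence
\[
  \E[\mu_h(t+1)\mid\mathcal{F}_t]\ \ge\ \frac{1+p_h(t)}{1+1/(2d)}\,\mu_h(t).
\]
Using $1+p\ge 2^{p}$ for $p\in[0,1]$, one checks that for each fixed $h$ the process $M_h(t):=\mu_h(t)\,(1+1/(2d))^{t}\,2^{-\sum_{s\le t}p_h(s)}$ is a submartingale (indeed $\E[M_h(t+1)\mid\mathcal{F}_t]\ge(1+p_h(t))\,2^{-p_h(t)}\,M_h(t)\ge M_h(t)$), with $M_h(0)=1$. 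Therefore $\E[\mu_h(T)\,2^{-\sum_{s\le T}p_h(s)}]\ge(1+1/(2d))^{-T}\ge(1/\sqrt e)^{T/d}$, the last step since $(1+1/(2d))^{d}\le e^{1/2}$.

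To finish, I would combine the elements. Over $T$ good rounds $\sum_{h\in B}\sum_{s\le T}p_h(s)\ge T$ and $|B|\le d$, so some $h^{\ast}$ satisfies $\sum_{s\le T}p_{h^{\ast}}(s)\ge T/d$; since $1+p\ge 2^{p}$ this makes $\prod_{s\le T}(1+p_{h^{\ast}}(s))\ge 2^{T/d}$, which is exactly what should convert the discount factor $2^{-\sum p}$ inside $M_{h^{\ast}}$ into a genuine gain of $2^{T/d}$ for that element, so that together with the filtering loss $(1/\sqrt e)^{T/d}$ one obtains $\E[\sum_{h\in B}\mu_h(T)]\ge\E[\mu_{h^{\ast}}(T)]\ge(2/\sqrt e)^{T/d}$, as claimed.

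The main obstacle is precisely this last combination. The element $h^{\ast}$ is picked \emph{in hindsight} from the random trajectory, and $\sum_{s\le T}p_{h^{\ast}}(s)$ is itself random and correlated with the normalised multiplicity $\mu_{h^{\ast}}(T)\,2^{-\sum p_{h^{\ast}}(s)}$ that the submartingale controls; turning ``$\E[\mu_{h^{\ast}}(T)\,2^{-\sum p_{h^{\ast}}(s)}]\ge(1/\sqrt e)^{T/d}$ together with $\sum p_{h^{\ast}}(s)$ being typically $\ge T/d$'' into the clean bound $\E[\mu_{h^{\ast}}(T)]\ge(2/\sqrt e)^{T/d}$ needs more than linearity of expectation --- for instance a monotone-coupling/FKG-type argument showing that a basis element's final multiplicity and how heavily it has been violated are positively correlated, or a round-by-round coupling that never conditions on $h^{\ast}$. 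A secondary, more routine subtlety is that ``all $T$ rounds are good'' is an event about the future, so the martingale manipulations should be routed through the stopping time $\tau=$ (first round in which some $v_i$ has $f(R_i)=f(H)$) and optional stopping, rather than a naive conditioning on $\mathcal{F}_t$.
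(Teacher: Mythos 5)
Your proposal follows essentially the same route as the paper's proof: the per-element recursion $\E[\mu_{j,t+1}\mid\mathcal{F}_t]\ge\frac{1+p_{j,t}}{1+1/(2d)}\,\mu_{j,t}$ with the filtering factor, the inequality $1+x\ge 2^x$, the pigeonhole over $\sum_{t}\sum_j p_{j,t}\ge T$ to extract an index $j^*$ with $\sum_t p_{j^*,t}\ge T/d$, and the final bound $2^{T/d}/(1+1/(2d))^T\ge 2^{T/d}/e^{T/(2d)}=(2/\sqrt{e})^{T/d}$. The only divergence is at the last step, which you leave open as the ``main obstacle'' and propose to repair with an FKG/coupling argument or optional stopping.

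The paper needs none of that machinery, because it handles the point you worry about by fiat: it says ``fix any values for the $p_{j,t}$ so that $\sum_j p_{j,t}\ge 1$ for all $t$,'' i.e., it treats the violation probabilities as an arbitrary but \emph{deterministic} sequence (the constraint $\sum_j p_{j,t}\ge 1$ being exactly what the hypothesis that no $v_i$ attains $f(R_i)=f(H)$ in rounds $1,\dots,T$ supplies, via the monotonicity/locality argument you reproduced from Lemma~\ref{lem_success}). With the $p_{j,t}$ fixed, the recursion composes round by round without any correlation issue, and the pigeonhole index $j^*$ is determined by the fixed sequence rather than chosen in hindsight from a random trajectory; your submartingale normalization $M_h(t)$ then collapses to the plain product bound. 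So, relative to the paper, your write-up contains all the ingredients of the intended argument, and the missing step is closed simply by this ``fix the $p_{j,t}$'' convention. That said, your concern is legitimate as a matter of full rigor: the $p_{j,t}$ are in truth $\mathcal{F}_t$-measurable random variables correlated with the $\mu_{j,t}$, and conditioning on ``all $T$ rounds are good'' is conditioning on the future; the paper's proof does not resolve these issues but glosses over them, so a fully formal treatment would indeed require something like the conditional/stopping-time bookkeeping you sketch. For the purpose of matching the paper, though, the simple fixed-sequence argument is the intended proof, and your attempt is the same proof up to that final (informal) step.
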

\begin{proof}
Let $B=\{h_1(B),\ldots,h_b(B)\}$, $b \le d$, and let $p_{i,j}$ be the probability that $f(R_i)<f(R_i \cup \{h_j(B)\})$. If node $v_i$ has chosen some $R_i$ with $f(R_i)<f(H)$, then there must exist an $h_j(B)$ with $f(R_i)<f(R_i \cup \{h_j(B)\})$, which implies that under the condition that $f(R_i)<f(H)$, $\sum_j p_{i,j} \ge 1$. The $p_{i,j}$'s are the same for each $v_i$ since each $v_i$ has the same probability of picking some multiset $R$ of $H(V)$ of size $6d^2$. Hence, we can simplify $p_{i,j}$ to $p_j$ and state that $\sum_j p_j \ge 1$. Now, let $p_{j,t}$ be the probability that $f(R)<f(R \cup \{h_j(B)\})$ for a randomly chosen multiset $R$ in round $t$, and fix any values for the $p_{j,t}$ so that $\sum_j p_{j,t} \ge 1$ for all $j$ and $t$. Let $\mu_{j,t}$ be the multiplicity of $h_j(B)$ at the end of round $t$. Then, for all $j$, $\mu_{j,0} \ge 1$, and
\[
  \E[\mu_{j,t+1}] \ge \frac{1+p_{j,t}}{1+1/(2d)} \cdot \mu_{j,t}
\]
Hence, $\E[\mu_{j,T}] \ge (\prod_{t=1}^T (1+p_{j,t}))/(1+1/(2d))^T$. Since $1+x \ge 2^x$ for all $x \in [0,1]$, it follows that $\prod_t (1+p_{j,t}) \ge \prod_t 2^{p_{j,t}} = 2^{\sum_t p_{j,t}}$. Also, since $\sum_{t=1}^T \sum_j p_{j,t} \ge T$, there must be a $j^*$ with $\sum_{t=1}^T p_{j^*,t} \ge T/d$.
Therefore, there must be a $j^*$ with $\E[\mu_{j^*,T}] \ge 2^{T/d}/(1+1/(2d))^T \ge 2^{T/d}/e^{T/(2d)}$, which completes the proof.
\end{proof}

Since $|H(V)|$ is bounded by $O(|H_0|)$ w.h.p., the expected number of copies of $h_{j^*}(B)$ should be at most $O(|H_0|)$ w.h.p. as well. Due to Lemma~\ref{lem:Hsize}, this cannot be the case if $T=\Omega(d \log n)$ is sufficiently large. Thus, the algorithm must terminate within $O(d \log n)$ rounds w.h.p.

In order to complete the description of our algorithm, we need distributed algorithms satisfying the following claims:
\begin{enumerate}
%\item In $O(\log n)$ rounds every node can set up $O(d^2 \log n)$ overlay edges to nodes in $V$ chosen uniformly and independently at random. Given the $O(d^2 \log n)$ overlay edges per node, each round of the algorithm can then be implemented in $O(1)$ communication rounds.
\item The nodes $v_i$ succeed in sampling multisets $R_i$ uniformly at random in a round, w.h.p., with maximum work $O(d^2+\log n)$. 
\item Once a node $v_i$ has chosen an $R_i$ with $f(R_i)=f(H)$, all nodes are aware of that within $O(\log n)$ communication rounds, w.h.p., so that the Low-Load Clarkson Algorithm can terminate. The maximum work for the termination detection is $O(\log n)$ per round.
\end{enumerate}
The next two subsections are dedicated to these algorithms.

\subsection{Sampling random multisets} \label{sec:sampling}

For simplicity, we assume here that every node knows the exact value of $\log n$, but it is easy to see that the sampling algorithm also works if the nodes just know a constant factor estimate of $\log n$, if the constant $c$ used below is sufficiently large.

Each node $v_i$ samples a multiset $R_i$ in a way that is as simple as it can possibly get: $v_i$ asks $s=c(6d^2 + \log n)$ random nodes $v_j$ via pull operations to send it a random element in $H(v_j)$, where $c$ is a sufficiently large constant. Out of the returned elements, $v_i$ selects $6d^2$ distinct elements at random for its multiset $R_i$. If $v_i$ hasn't received at least $6d^2$ distinct elements, the sampling fails. Certainly, the work for each node is just $O(d^2 + \log n)$.

\begin{lemma}
For any $i$, node $v_i$ succeeds in sampling a multiset $R_i$ uniformly at random, w.h.p.
\end{lemma}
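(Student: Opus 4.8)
The plan is to prove two separate claims hidden in the statement: \emph{(i) success} — that $v_i$ receives at least $6d^2$ distinct copies among the $s=c(6d^2+\log n)$ copies it pulls, w.h.p.; and \emph{(ii) uniformity} — that, conditioned on success, the multiset $R_i$ obtained by choosing $6d^2$ of the received distinct copies uniformly at random is distributed (essentially) like a uniformly random size-$6d^2$ sub-multiset of $H(V)$, which is exactly the distribution that Lemma~\ref{lem_V}, and hence Lemma~\ref{lem_Wi}, assumes. Here ``distinct'' refers to distinct \emph{copies} in the multiset $H(V)=\bigcup_i H(v_i)$, not distinct values, so that repeated values in $R_i$ are allowed, matching Clarkson's setup.

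For (i), I would count the ``wasted'' pulls. Recall $H(V)$ always contains all of $H_0$, so $m=|H(V)|\ge n$, and by the standing assumption $d$ is small enough that $s=\Theta(d^2+\log n)=o(n)$ (so in particular $s\ge 6d^2$ and $m\ge 6d^2$). Reveal the $s$ pulls one at a time: pull $t$ hits a uniformly random node, which returns a uniformly random copy it holds. Conditioned on the first $t-1$ pulls having returned a set $C$ of at most $t-1<s$ distinct copies, the probability that pull $t$ returns a copy already in $C$ is $\sum_{h\in C}\tfrac{1}{n\,|H(v_{j(h)})|}\le |C|/n\le s/n$, where $j(h)$ is the node holding copy $h$. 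Hence the number $Y$ of redundant pulls is stochastically dominated by $\mathrm{Bin}(s,s/n)$, whose mean is $s^2/n$, which is $o(\log n)$ since $s$ is small compared to $n$. A Chernoff bound then gives $\Pr[Y\ge s-6d^2]\le\Pr[Y\ge\log n]=n^{-\omega(1)}$ (using $s-6d^2=\Theta(d^2+\log n)\ge\log n$ for $c$ large enough), so w.h.p.\ $v_i$ is left with at least $6d^2$ distinct copies, i.e.\ the sampling succeeds. A union bound over the $n$ nodes then shows that in any fixed round all nodes succeed w.h.p., as needed for Lemma~\ref{lem:llsuccess}.

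For (ii), I would appeal to symmetry. The copies in $H(V)$ were placed on the $n$ nodes by a sequence of independent uniformly random choices (the initial random assignment of $H_0$ together with the \textbf{push} operations), so all copies and all nodes are exchangeable; consequently a single pull returns every copy of $H(V)$ with the same probability, and the $s$ pulled copies are exchangeable. Conditioned on at least $6d^2$ of them being distinct, selecting $6d^2$ of the distinct ones uniformly at random therefore yields a uniformly random $6d^2$-element sub-multiset of $H(V)$. The one point requiring care is that for a \emph{frozen} placement the loads $|H(v_j)|$ are only concentrated, not exactly equal, so the conditional law of a single pull is biased by a factor $1\pm o(1)$ and the $s$ pulls are not independent of one another; but $|H(v_j)|=\Theta(m/n+\log n)$ for all $j$ w.h.p.\ by a Chernoff bound, so this bias is only $1\pm o(1)$, which is absorbed into the $(1+o(1))$ slack already carried through Lemmas~\ref{lem:Wgrowth} and~\ref{lem:Hsize}, and is irrelevant to the structural argument behind Lemma~\ref{lem:llsuccess}, which uses only that every node samples from the \emph{same} distribution.

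I expect the genuine obstacle to be exactly this last subtlety — pinning down what ``uniformly at random'' should mean and verifying the downstream lemmas are robust to the load-imbalance-induced and pull-dependence-induced deviations — rather than the success bound, which is a routine collision count. A minor secondary point is that the strong (super-polynomially small) failure probability for success requires $s^2/n=o(\log n)$, i.e.\ $d$ sufficiently small compared to $n$, which is precisely the regime assumed throughout the paper.
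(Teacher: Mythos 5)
Your part (i) has a genuine gap: you count only \emph{collision} pulls as wasted and implicitly assume every pull returns a copy. In the regime of this section $|H(V)|$ is only guaranteed to be $\Theta(n)$ (it is at least $n$, and $O(|H_0|)$ by Lemma~\ref{lem:Hsize}), so under the uniform random placement a constant fraction of nodes (up to roughly $1/e$) hold no element at all, and a pull addressed to such a node returns nothing. Hence your bound ``$Y$ is dominated by $\mathrm{Bin}(s,s/n)$, mean $s^2/n=o(\log n)$'' is irrelevant to these empty pulls, and the step ``so w.h.p.\ $v_i$ is left with at least $6d^2$ distinct copies'' does not follow: in expectation about $s/e$ pulls are empty, which dwarfs the $o(\log n)$ waste budget you allow. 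The conclusion survives only with the coarser accounting the paper uses: a single pull fails (empty target \emph{or} collision with an earlier copy) with probability at most $1/e+s/n\le 1/2$, this upper bound holds no matter how the other pulls turned out, so $\E[\prod_{j\in S}X_j]\le (1/2)^{|S|}$ for the failure indicators, and Theorem~\ref{th:chernoff} yields at least $s/4\ge 6d^2$ successful pulls w.h.p.\ for $c$ large. Note also that once empty nodes are included, the failure events are correlated through the shared placement, so a clean binomial domination is not as immediate as in your collision-only count; the $k$-wise moment bound is exactly what sidesteps this.

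Your part (ii) is essentially the paper's argument: since the copies of $H(V)$ are placed uniformly and independently at random, conditioned on receiving $k$ distinct copies every size-$k$ choice from $H(V)$ is equally likely, so picking $6d^2$ of them at random gives a uniform multiset, and (as the paper remarks after the lemma) the procedure reveals nothing about which elements lie in $H(v_i)$, so Lemma~\ref{lem_Wi} applies unchanged. Your additional worry about a $1\pm o(1)$ bias for a \emph{frozen} placement is not needed for the statement being proved: the uniformity is claimed over the joint randomness of placement and pulls, not per fixed placement, so no correction has to be propagated into Lemmas~\ref{lem:Wgrowth} and~\ref{lem:llsuccess}.
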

\begin{proof}
Suppose that $v_i$ succeeds in receiving $k$ distinct elements in the sampling procedure above. Since the elements in $H(V)$ are distributed uniformly and independently at random among the nodes, every multiset $R$ of size $k$ in $H(v)$ has the same probability of representing these $k$ elements. Hence, it remains to show that $v_i$ succeeds in receiving at least $6d^2$ elements w.h.p.

Consider any numbering of the pull requests from 1 to $s$. For the $j$th pull request of $v_i$, two bad events can occur. First of all, the pull request might be sent to a node that does not have any elements. Since $|H(V)| \ge n$, the probability for that is at most $(1-1/n)^n \le 1/e$. Second, the pull request might return an element that was already returned by one of the prior $j-1$ pull requests. Since this is definitely avoided if the $j$th pull request selects a node that is different from the nodes selected by the prior $j-1$ pull requests, the probability for that is at most $(j-1)/n$. So altogether, the probability that a pull request fails is at most $1/e + s/n \le 1/2$. 

Now, let the binary random variable $X_j$ be 1 if and only if the $j$th pull request fails. Since the upper bound of $1/2$ for the failure holds independently of the other pull requests, it holds for any subset $S \subseteq \{1,\ldots,s\}$ that $\E[ \prod_{j \in S} X_j] \le (1/2)^{|S|}$. Hence, Theorem~\ref{th:chernoff} implies that $\sum_j X_j \le 3s/4$ w.h.p. If $c$ is sufficiently large, then $3s/4 \ge 6d^2$, which completes the proof.
\end{proof}

Note that our sampling strategy does not reveal any information about which elements are stored in $H(v_i)$, so each element still has a probability of $1/n$ to be stored in $H(v_i)$, which implies that Lemma~\ref{lem_Wi} still holds.

\subsection{Termination} \label{sec:termination}

We use the following strategy for each node $v_i$:

Suppose that in iteration $t$ of the repeat loop, $|W_i|=0$, i.e., $f(R_i)=f(R_i \cup H(v_i))$. Then $v_i$ determines an optimal basis $B$ of $R_i$, stores the entry $(t,B,1)$ in a dedicated set $S_i$, and performs a push operation on $(t,B,1)$. At the beginning of iteration $t_i$ of the repeat loop, $v_i$ works as described in Algorithm~\ref{alg:Termination}. In the comparison between $f(B')$ and $f(B)$ we assume w.l.o.g. that $f(B')=f(B)$ if and only if $B'=B$ (otherwise, we use a lexicographic ordering of the elements as a tie breaker). The parameter $c$ in the algorithm is assumed to be a sufficiently large constant known to all nodes.

\begin{algorithm}[th]
  \begin{algorithmic}[1]
    \ForAll{$(t,B,x)$ received by $v_i$}
      \If{there is some $(t,B',x')$ in $S_i$}
        \If{$f(B')>f(B)$} discard $(t,B,x)$
        \EndIf
        \If{$f(B')<f(B)$} replace $(t,B',x')$ by $(t,B,x)$
        \EndIf
        \If{$f(B')=f(B)$} 
          \State replace $(t,B',x')$ by $(t,B,\min\{x,x'\})$
        \EndIf
      \Else
        \State add $(t,B,x)$ to $S_i$
      \EndIf
    \EndFor
    \ForAll{$(t,B,x)$ in $S_i$}
      \If{$f(B)<f(B \cup H(v_i))$} $x:=0$ {\em // $B$ is invalid}
      \EndIf
      \If{$t < t_i-c \log n$} {\em // $B$ is mature}
        \State remove $(t,B,x)$ from $S_i$
        \If{$x=1$} output $f(B)$, stop
        \EndIf
      \Else
        \State {\bf push}($t,B,x$) {\em // a copy of $(t,B,x)$ is pushed out}
      \EndIf
    \EndFor
  \end{algorithmic}
  \caption{One round of the Termination Algorithm for $v_i$.}
  \label{alg:Termination}
\end{algorithm}

\begin{lemma}
If the constant $c$ in the termination algorithm is large enough, it holds w.h.p.:
Once a node $v_i$ satisfies $f(R_i)=f(H)$, then all nodes $v_j$ output a value $f(B)$ with $f(B)=f(H)$ after $c \log n$ iterations, and if a node $v_i$ outputs a value $f(B)$, then $f(B)=f(H)$.
\end{lemma}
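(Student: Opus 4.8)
The plan is to track, for every iteration $t$ at which at least one node creates a termination tuple, the \emph{winning basis} $B^*(t)$: among the bases carried by the tuples $(t,B,1)$ created at iteration $t$, take those of largest $f$-value and let $B^*(t)$ be the one selected by the lexicographic tie-breaker, so that $B^*(t)$ is unique and $f(B^*(t))$ is the largest $f$-value created at iteration $t$. Two facts drive the proof. First, the update rules (maximum on the basis, minimum on the flag, and an overwrite of the old flag whenever the basis is upgraded) imply that once a node holds a time-$t$ tuple whose basis equals $B^*(t)$ it keeps exactly that basis forever and its flag is non-increasing; and since the creator of $B^*(t)$ first pushes $(t,B^*(t),1)$ at iteration $t$, no node deletes a time-$t$ tuple before its \emph{common} maturity iteration $t+c\log n+1$ (a tuple with index $t$ matures only at that iteration, which does not depend on the node), so a standard push-gossip bound shows that for a suitable constant $c_1$ every node holds a time-$t$ tuple with basis $B^*(t)$ within $c_1\log n$ iterations, w.h.p.; a union bound over the $O(d\log n)$ active indices keeps this w.h.p.\ overall. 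Second, monotonicity and locality give the dichotomy that $f(B^*(t))=f(H)$ iff $B^*(t)$ is valid for every node (i.e.\ $f(B^*(t))=f(B^*(t)\cup H(v_j))$ for all $j$): if $f(B^*(t))=f(H)$ this is immediate from monotonicity ($B^*(t)\cup H(v_j)\subseteq H$), and then the winning tuple's flag never drops to $0$; if $f(B^*(t))<f(H)$, a maximal $G\supseteq B^*(t)$ with $f(G)=f(B^*(t))$ yields, by maximality and locality, an $h\in H\setminus G$ with $f(B^*(t))<f(B^*(t)\cup\{h\})$, and since the original copy of $h$ lies in $H_0$ and is never deleted, $h$ stays in $H(v_j)$ for one fixed node $v_j$, which therefore zeroes the flag of its time-$t$ tuple as soon as it holds the $B^*(t)$-version of it.

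With these in hand, I would let $t_{\min}$ be the smallest iteration at which a tuple of $f$-value $f(H)$ is created (and $t_{\min}=\infty$ if there is none), and prove, in the good event above, that \textbf{(a)} no node outputs anything before iteration $t_{\min}+c\log n+1$, and \textbf{(b)} if $t_{\min}<\infty$ every node outputs $f(H)$ exactly at iteration $t_{\min}+c\log n+1$. For (a): let $r_0$ be the smallest output iteration and suppose $r_0<t_{\min}+c\log n+1$; the node maturing at $r_0$ matures a tuple of index $t_0=r_0-c\log n-1$, which is active and has $f(B^*(t_0))<f(H)$ (nothing with $f$-value $f(H)$ was created before $t_{\min}$). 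As $r_0$ is the first output, all nodes run and push throughout iterations $1,\dots,r_0-1$, so the good event gives: $v_{j(t_0)}$ holds $(t_0,B^*(t_0),\cdot)$ by iteration $t_0+c_1\log n$, zeroes the flag, this $0$ reaches every node by iteration $t_0+2c_1\log n\le r_0-1$ (choose $c\ge 2c_1$), and by then every node's time-$t_0$ tuple carries basis $B^*(t_0)$; hence at iteration $r_0$ that node's flag is already $0$ and it does not output --- a contradiction. For (b): by (a) no node stops before iteration $t_{\min}+c\log n$, so $(t_{\min},B^*(t_{\min}),1)$ is pushed until maturity, its flag stays $1$, by the good event every node holds it by iteration $t_{\min}+c_1\log n$, and at iteration $t_{\min}+c\log n+1$ each of them matures it and outputs $f(B^*(t_{\min}))=f(H)$. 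Both parts of the lemma follow: safety, since the only value any node ever outputs is $f(H)$; and liveness, since if $v_i$ has $f(R_i)=f(H)$ at iteration $t$ then $|W_i|=0$ (as $f(R_i\cup\{h\})\le f(H)=f(R_i)$ for every $h\in H(v_i)$), so $v_i$ creates a tuple of $f$-value $f(R_i)=f(H)$, whence $t_{\min}\le t$ and all nodes output $f(H)$ by iteration $t+c\log n+1$.

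The step I expect to be the most delicate is (a): one must make sure that \emph{both} the gossip spreading of the winning basis and the subsequent spreading of the zeroed flag complete strictly before the common maturity iteration (which forces $c$ to be a sufficiently large multiple of the push-gossip constant), that every node is still running during that window (handled by pinning down the first output iteration), and that the argument survives a union bound over the $O(d\log n)$ active indices and the $O(\log n)$ tuples a node holds at once --- which also yields the claimed $O(\log n)$ work bound, each tuple having $O(d\log n)$ bits. The dichotomy, in contrast, is a short application of monotonicity and locality.
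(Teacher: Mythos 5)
Your proof is correct and follows essentially the same route as the paper's: spread the maximum-$f$-value basis injected at each iteration via push-gossip, then use the validity dichotomy (the flag is never zeroed iff $f(B)=f(H)$, and otherwise a zeroed copy spreads to everyone before the common maturity time) to decide whether it is output. You merely fill in details the paper subsumes under ``standard arguments''---the locality/$H_0$-based existence of a witness node that zeroes the flag, the first-output-iteration bookkeeping ensuring all nodes keep pushing during the relevant window, and the observation that $f(R_i)=f(H)$ forces $|W_i|=0$---so the underlying argument is the same.
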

\begin{proof}
Using standard arguments, it can be shown that if the constant $c$ is large enough, then for every iteration $t$, it takes at most $(c/2) \log n$ iterations, w.h.p., until the basis $B$ with maximum $f(B)$ injected into some $S_i$ at iteration $t$ (which we assume to be unique by using some tie breaking mechanism) is contained in all $S_i$'s. At this point, we have two cases. If $f(B)=f(H)$, then for all $v_i$, $f(B)=f(B \cup H(v_i))$ at any point from iteration $t$ to $t+(c/2)\log n$, and otherwise, there must be at least one $v_i$ at iteration $t+(c/2)\log n$ with $f(B)<f(B \cup H(v_i))$. In the first case, no $v_i$ will ever set $x$ in the entry $(t,B,x)$ to 0, so after an additional $(c/2)\log n$ iterations, every $v_i$ still stores $(t,B,1)$ and therefore outputs $B$. In the second case, there is at least one entry of the form $(t,B,0)$ at iteration $s+(c/2)\log n$. For this entry, it takes at most $(c/2) \log n$ further iterations, w.h.p., to spread to all nodes so that at the end, no node outputs $B$.
\end{proof}

Since the age of an entry is at most $c \log n$ and for each age a node performs at most one push operation, every node has to execute just $O(\log n)$ push operations in each round.

\subsection{Extension to any $|H|\ge 1$}

If $|H|<n$, the probability that our sampling strategy might fail will get too large. Hence, we need to extend the Low-Load Clarkson algorithm so that we quickly reach a point where $|H(V)| \ge n$ at any time afterwards. We do this by integrating a so-called {\em pull phase} into the algorithm.

Initially, a node $v_i$ sets its Boolean variable $pull$ to $true$ if and only if $H_0(v)=\emptyset$ (which would happen if none of the elements in $H$ has been assigned to it). Afterwards, it executes the algorithm shown in Algorithm~\ref{alg:DistClarkson1}. As long as $pull=true$ (i.e., $v_i$ is still in its pull phase), $v_i$ keeps executing a pull operation in each iteration of the algorithm, which asks the contacted node $v_j$ to send it a copy of a random element in $H_0(v_j)$, until it successfully receives an element $h$ that way.
Once this is the case, $v_i$ pushes the successfully pulled element to a random node $v_j$ (so that all elements are distributed uniformly and independently at random among the nodes), which will store it in $H_0(v_j)$, and starts executing the Low-Load Clarkson algorithm from above.

\begin{algorithm}[th]
  \begin{algorithmic}[1]
    \Repeat
      \If{$pull=true$}  // {\em $v_i$ is still in its pull phase}
        \State {\bf pull}($h$) // {\em $v_i$ expects some $h \in H_0$}
        \If{$h\not= NULL$}
          \State {\bf push}($h,0$)
          \State $pull:=false$
        \EndIf
      \Else
        \State choose a random multiset $R_i$ of size $6d^2$ from $H(V)$
        \If{if the selection of $R_i$ succeeds}
          \State $W_i:=\{ h \in H(v_i) \mid f(R_i)<f(R_i \cup \{h\}) \}$
          \ForAll{$h \in W_i$} {\bf push}($h$)
          \EndFor
        \EndIf
      \EndIf
      \ForAll{$(h,0)$ received by $v_i$} add $h$ to $H_0(v_i)$
      \EndFor
      \ForAll{$h$ received by $v_i$} add $h$ to $H(v_i)$
      \EndFor
      \ForAll{$h \in H(v_i) - H_0(v_i)$} 
        \State keep $h$ with probability $1/(1+1/(2d))$
      \EndFor
    \Until{$v_i$ outputs a solution}
  \end{algorithmic}
  \caption{Extended Low-Load Clarkson Algorithm for $v_i$.}
  \label{alg:DistClarkson1}
\end{algorithm}

\begin{lemma}
After $O(\log n)$ rounds, all nodes have completed their pull phase, w.h.p. 
\end{lemma}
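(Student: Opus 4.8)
The plan is to analyse the pull phase as a ``growth then mop-up'' random process. I will track, over the rounds $t=1,2,\dots$, the number $a_t$ of nodes with $H_0\neq\emptyset$ (equivalently, the nodes from which a pull request returns a non-\textsc{null} element) and the number $u_t$ of nodes that still have $pull=true$. Three structural facts drive everything. (i) $H_0$-sets are only ever enlarged, so $a_t$ is non-decreasing. (ii) In round $t$ every still-pulling node contacts a uniformly random node independently of the others, and its pull succeeds exactly when that node lies in the current set $A_t$ of $a_t$ nodes; hence the number $S_t$ of nodes that leave the pull phase in round $t$ is distributed as $\mathrm{Bin}(u_t,a_t/n)$ and $u_{t+1}=u_t-S_t$. (iii) Each of those $S_t$ nodes pushes exactly one fresh copy to a uniformly random node, which then joins $A_{t+1}$; counting copies gives $a_t+u_t\le n$ at all times, and, conversely, if only few copies have been created so far then $u_t$ is still large. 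Note that $a_1\ge 1$ since $|H|\ge 1$, and that if $u_1<n/2$ then already $a_1>n/2$.

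First I would record a stability lemma: as long as $u_1\ge n/2$ and $a_t< n/8$, we have $u_t\ge n/4$, w.h.p.\ (if $u_t<n/4$, then at least $n/4$ nodes have already left the pull phase, so at least $n/4$ fresh copies have been thrown uniformly at random, and a standard balls-into-bins bound shows that they occupy $\ge n/8$ distinct nodes, contradicting $a_t<n/8$). Then I would split the remaining analysis into three consecutive phases, each of length $O(\log n)$ w.h.p. \textbf{Escape} (while $1\le a_t<c\log n$ for a suitable constant $c$, so $u_t\ge n/4$): here $\Pr[S_t\ge 1\mid\mathcal F_t]\ge 1-(1-1/n)^{n/4}\ge 1-e^{-1/4}$, and conditioned on $S_t\ge 1$ at least one fresh copy lands outside $A_t$ with probability $1-o(1)$, so $a_{t+1}\ge a_t+1$ with probability $\ge 1/6$ in each round; since these increment events stochastically dominate i.i.d.\ $\mathrm{Bin}(\cdot,1/6)$ trials, after $O(\log n)$ rounds $a_t$ has grown by $\Omega(\log n)$ and hence reached $c\log n$, w.h.p. \textbf{Geometric growth} (while $c\log n\le a_t<n/8$, so $u_t\ge n/4$): now $\E[S_t\mid\mathcal F_t]=u_t a_t/n\ge a_t/4$ with $a_t=\Omega(\log n)$, so a Chernoff bound gives $S_t=\Omega(a_t)$, and since there is still $\Omega(n)$ empty room a constant fraction of the $S_t$ fresh copies hit distinct new nodes, both w.h.p.; hence $a_{t+1}\ge(1+\Omega(1))\,a_t$ w.h.p., so $O(\log n)$ rounds bring $a_t$ past $n/8$. \textbf{Mop-up} (once $a_t\ge n/8$): each still-pulling node succeeds with probability $\ge 1/8$ in each round independently, so after $O(\log n)$ further rounds all of the at most $n$ remaining pulling nodes have succeeded, by a union bound; this also settles the case $u_1<n/2$ outright. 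Concatenating the phases and taking a union bound over the $O(1)$ phase failures (and the $O(\log n)$ or $O(n)$ events inside each) proves the lemma.

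I expect the escape phase to be the real obstacle. While $a_t=O(\log n)$ the variable $S_t$ has mean $O(\log n)$, which is too small for a direct Chernoff bound, so geometric growth of $a_t$ is not yet available; instead one settles for the weaker statement that $a_t$ increases by at least one in a constant fraction of the rounds, which is exactly enough to lift $a_t$ to $\Theta(\log n)$ within $O(\log n)$ rounds and hand off to the concentrated regime. The remaining care is bookkeeping: ensuring that fresh copies are not wasted on nodes already in $A_t$ (immediate while $a_t\ll n$) and that $u_t$ cannot collapse before $a_t$ becomes large (the stability lemma above).
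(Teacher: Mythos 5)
Your proof is correct, but it takes a different route from the paper: the paper does not carry out this analysis at all. It merely observes that elements of $H_0$ are never deleted and that pull requests only ever add to $H_0$ (so the filtering step cannot interfere with the pull phase), and then appeals to ``a slight adaptation'' of existing rumor-spreading analyses such as Karp et al.\ (KSSV00). You instead give a self-contained three-phase argument (escape, geometric growth, mop-up) tailored to the actual process, which is worth something here because the process is not vanilla push--pull rumor spreading: a node whose pull succeeds does not itself become a source (it forwards a single tagged copy to a uniformly random node and only that recipient's $H_0$ grows), so tracking the sources $a_t$ and the pullers $u_t$ as separate quantities, with the counting bound $a_t+u_t\le n$ and the stability lemma keeping $u_t=\Omega(n)$ while $a_t<n/8$, is exactly the adaptation the paper hand-waves over. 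Two small points to make explicit if you write this up: (i) a pulling node that receives a pushed copy can lie in both sets, so $a_t+u_t\le n$ should be justified by your copy-counting argument (each departure from the pull phase creates at most one new nonempty-$H_0$ node) rather than by disjointness; and (ii) the concentration steps you invoke (distinct bins hit by uniform balls, and sums dominated by binomials via conditional per-round bounds) need the standard negative-association or domination justifications, but these are routine. In exchange for being longer than the paper's citation, your argument is explicit, elementary, and verifies that the nonstandard ``forward one copy'' dynamics still yields $O(\log n)$ completion, which the paper asserts but does not show.
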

\begin{proof}
Note that no node will ever delete an element in $H_0$, and pull requests only generate elements for $H_0$, so the filtering approach of the Low-Load Clarkson algorithm cannot interfere with the pull phase. Thus, it follows from a slight adaptation of proofs in previous work on gossip algorithms (e.g., \cite{KSSV00}) that for any $|H| \ge 1$, all nodes have completed their pull phase after at most $O(\log n)$ rounds, w.h.p. 
\end{proof}

Certainly, $|H_0| \le n+|H| = O(n \log n)$ and $H \subseteq H_0$ at any time, and once all nodes have finished their pull phase, $|H_0| \ge n$, so we are back to the situation of the original Low-Load Clarkson Algorithm.

During the time when some nodes are still in their pull phase, some nodes might already be executing Algorithm~\ref{alg:DistClarkson}, which may cause the sampling of $R_i$ to fail for some nodes $v_i$. However, the analyses of Lemma~\ref{lem:Wgrowth} and Lemma~\ref{lem:Hsize} already take that into account. Once all nodes have finished their pull phase, Lemma~\ref{lem:llsuccess} applies, which means that after an additional $O(d \log n)$ rounds at least one node has found the optimal solution, w.h.p. Thus, after an additional $O(\log n)$ nodes, all nodes will know the optimal solution and terminate. Altogether, we therefore still get the same runtime and work bounds as before, completing the proof of Theorem~\ref{th:main1}.

%\subsection{Towards self-stabilization}

%To be completed if there is time.

\section{High-Load Clarkson Algorithm}

If $|H|= \omega(n \log n)$, then our LP-type algorithm in the previous section will become too expensive since, on expectation, $|W_i|$ might be in the order of $m/(dn)$, which is now $\omega(\log n)$. In this section, we present an alternative distributed LP-type algorithm that just causes $O(d \log n)$ work for any $|H|=poly(n)$, but the internal computations are more expensive then in the algorithm presented in the previous section. Again, we assume that initially the elements in $H$ are randomly distributed among the nodes in $V$. Let the initial $H(v_i)$ be all elements of $H$ assigned that way to $v_i$. As before, $H(V)= \bigcup_i H(v_i)$.

\begin{algorithm}[th]
  \begin{algorithmic}[1]
    \Repeat
      \ForAll{nodes $v_i$ in parallel}
        \State compute an optimal basis $B_i$ of $H(v_i)$
        \State {\bf push}($B_i$)
        \ForAll{$B_j$ received by $v_i$} 
          \State $W_j:=\{ h \in H(v_i) \mid f(B_j) < f(B_j \cup \{h\}) \}$
          \ForAll{$h \in W_j$} {\bf push}($h$)
          \EndFor
        \EndFor
        \ForAll{$h$ received by $v_i$} add $h$ to $H(v_i)$
        \EndFor
      \EndFor
    \Until{at least one $v_i$ satisfies $f(H(v_i))=f(H)$}
  \end{algorithmic}
  \caption{High-Load Clarkson Algorithm.}
  \label{alg:DistClarkson2}
\end{algorithm}

Irrespective of which elements get selected for the $W_i$'s in each round, $H(v_i)$ is a random subset of $H(V)$ because the elements in $H$ are assumed to be randomly distributed among the nodes and every element in $W_i$ is sent to a random node in $V$. Hence, if follows from $|H(V)|=\omega(n \log n)$ and the standard Chernoff bounds that $|H(v_i)|$ is within $(1 \pm \epsilon)|H(V)|/n$, w.h.p., for any constant $\epsilon>0$. Thus, we are computing bases of random multisets $R$ of size $r$ within $(1 \pm \epsilon)|H(V)|/n$, w.h.p.
This, in turn, implies with $\E[|W_i|] \le d \cdot \frac{m-r}{n(r+1)}$, where $m=|H(V)|$, that $\E[|W_i|] \le (1+\epsilon) d$. In the worst case, however, $|W_i|$ could be very large, so just bounding the expectation of $|W_i|$ does not suffice to show that our algorithm has a low work. Therefore, we need a proper extension of Lemma~\ref{lem_Wi} that exploits higher moments. Note that it works for arbitrary LP-type problems, i.e., also problems that are non-regular and/or non-degenerate.

\begin{lemma} \label{lem_hV}
Let $(H,f)$ be an LP-type problem of dimension $d$ and let $\mu$ be any multiplicity function. For any $k \ge 1$ and any $1 \le r < m/2-k$, where $m=|H(\mu)|$, it holds for $W_R=\{h \in H(\mu) \mid f(R)<f(R \cup \{h\}) \}$ for a random multiset $R$ of size $r$ from $H(\mu)$ that $\E[|W_R|^k] \le 2(k \cdot d \cdot (m-r)/(r+1))^k$.
\end{lemma}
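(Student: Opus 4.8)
The plan is to upgrade the proof of Lemma~\ref{lem_V} from the first moment to the $k$-th moment. The first step is to pass to falling‑factorial (i.e.\ factorial) moments via the identity $z^{k}=\sum_{\ell=1}^{k}S(k,\ell)\,\ell!\binom{z}{\ell}$, where $S(k,\ell)$ are the Stirling numbers of the second kind and $S(k,\ell)\,\ell!$ is the number of surjections $[k]\to[\ell]$; applying this to $z=|W_R|$ and taking expectations, it suffices to bound $\E\bigl[\binom{|W_R|}{\ell}\bigr]$ for each $1\le\ell\le k$ and then sum. (Note that in this lemma $W_R$ is exactly the violated set $V_R$ of Lemma~\ref{lem_V}, so the case $\ell=1$ will simply reproduce that lemma.)

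For the factorial moment I would redo the counting argument behind Equation~(1) in the proof of Lemma~\ref{lem_V}, but lifting by $\ell$ elements at once. Working over $\binom{H(\mu)}{r}$,
\[
  \binom{m}{r}\,\E\!\left[\binom{|W_R|}{\ell}\right]
  \;=\;\#\bigl\{(R,T)\;:\;|R|=r,\ |T|=\ell,\ T\subseteq W_R\bigr\}
  \;=\;\sum_{Q\in\binom{H(\mu)}{r+\ell}}N_\ell(Q),
\]
where for a multiset $Q$ of size $r+\ell$ we call an $\ell$-subset $T\subseteq Q$ \emph{valid} if $f(Q\setminus T)<f\bigl((Q\setminus T)\cup\{h\}\bigr)$ for every $h\in T$, and $N_\ell(Q)$ is the number of valid $\ell$-subsets. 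The second equality is the bijection ``choose $R$ of size $r$ and then a disjoint $T$ of size $\ell$'' $\leftrightarrow$ ``choose $Q$ of size $r+\ell$ and then split off an $\ell$-subset $T$'' (so that $W_R=W_{Q\setminus T}$ and $T\subseteq W_R$ becomes exactly ``$T$ valid''). Combining this with the elementary bound $\binom{m}{r+\ell}/\binom{m}{r}=\prod_{i=1}^{\ell}\tfrac{m-r-i+1}{r+i}\le\bigl(\tfrac{m-r}{r+1}\bigr)^{\ell}$, valid since $r+\ell<m$ by hypothesis, yields $\E\bigl[\binom{|W_R|}{\ell}\bigr]\le\bigl(\tfrac{m-r}{r+1}\bigr)^{\ell}\cdot\max_{Q}N_\ell(Q)$.

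The combinatorial core is the claim $N_\ell(Q)\le d^{\ell}$ for every finite multiset $Q$, which I would prove by induction on $\ell$ (the case $\ell=0$ is trivial, the empty set being vacuously valid). Fix an optimal basis $B$ of $Q$, so $|B|\le d$ by the combinatorial dimension. First, any valid $T$ with $\ell\ge1$ must meet $B$: if $B\cap T=\emptyset$ then $B\subseteq Q\setminus T$, so monotonicity forces $f(Q\setminus T)=f(Q)$, whence $f\bigl((Q\setminus T)\cup\{h\}\bigr)\le f(Q)=f(Q\setminus T)$ for every $h\in T$, contradicting validity. Second, for a fixed $h\in B$, the map $T\mapsto T\setminus\{h\}$ injects the valid $T$ containing $h$ into the valid $(\ell-1)$-subsets of $Q\setminus\{h\}$, because $Q\setminus T=(Q\setminus\{h\})\setminus(T\setminus\{h\})$ makes the validity conditions for the elements of $T\setminus\{h\}$ literally identical in both instances. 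Hence $N_\ell(Q)\le\sum_{h\in B}N_{\ell-1}(Q\setminus\{h\})\le d\cdot d^{\ell-1}$ (over‑counting $T$ that meet $B$ more than once, which only helps), closing the induction. The only property of $f$ used is monotonicity; the one thing to be careful with is the bookkeeping for repeated elements of the multiset (what it means for an item to be ``violated'' while another copy of its underlying element survives), but this affects neither step.

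Finally I would assemble the pieces. Set $\alpha:=d\cdot\tfrac{m-r}{r+1}$; since $r<m/2-k\le m/2-1$ we have $m-r>r+1$, hence $\alpha\ge d\ge1$. By the previous two paragraphs $\E\bigl[\binom{|W_R|}{\ell}\bigr]\le\alpha^{\ell}$, and since the number of surjections $[k]\to[\ell]$ is at most $\ell^{k}$,
\[
  \E[|W_R|^{k}]\;=\;\sum_{\ell=1}^{k}S(k,\ell)\,\ell!\,\E\!\left[\binom{|W_R|}{\ell}\right]
  \;\le\;\sum_{\ell=1}^{k}\ell^{k}\alpha^{\ell}\;\le\;\alpha^{k}\sum_{\ell=1}^{k}\ell^{k}\;<\;2\,\alpha^{k}k^{k},
\]
where $\sum_{\ell=1}^{k}\ell^{k}<2k^{k}$ follows from $(1-j/k)^{k}\le e^{-j}$ and $\sum_{j\ge0}e^{-j}<2$. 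This is exactly $\E[|W_R|^{k}]\le 2\bigl(k\,d\,(m-r)/(r+1)\bigr)^{k}$. The hard part will be the induction $N_\ell(Q)\le d^{\ell}$ — in particular arguing cleanly that every valid $T$ meets a fixed optimal basis and that peeling off one basis element reduces the instance correctly — whereas the falling‑factorial expansion and the final summation are routine.
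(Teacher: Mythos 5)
Your proof is correct and follows essentially the same route as the paper's: both reduce $\E[|W_R|^k]$ to counting, for each $\ell\le k$, pairs $(R,T)$ where $T$ is an $\ell$-subset of simultaneously violated elements, re-encode such pairs as $(r+\ell)$-subsets $Q$, and bound the number of admissible $T$ per $Q$ by $d^\ell$ by repeatedly peeling off an element that must lie in an optimal basis. Your factorial-moment/Stirling packaging, the explicit induction $N_\ell(Q)\le d^\ell$, and the final summation using $\alpha\ge 1$ and $\sum_{\ell\le k}\ell^k<2k^k$ are just a cleaner organization of the paper's expansion with coefficients $i^k$, its abbreviated ``$\le\cdots\le$'' peeling chain, and its comparison against the dominant term $d^k{m \choose r+k}$.
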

\begin{proof}
By definition of the expected value it holds that
\[
  \E[|W_R|^k] = \frac{1}{{m \choose r}} \sum_{R \in {H(\mu) \choose r}} |W_R|^k
\]
For $R \in {H(\mu) \choose r}$ and $h \in H(\mu)$ let $X(R,h)$ be the indicator variable for the event that $f(R)<f(R \cup \{h\})$. Then
we have
\begin{eqnarray*}
 {m \choose r} \E[|W_R|^k] & = & \hspace*{-5mm} \sum_{R \in {H(\mu) \choose r}} |W_R|^k
   = \sum_{R \in {H(\mu) \choose r}} \left( \sum_{h \in H(\mu) - R} X(R,h) \right)^k \\
   & \stackrel{(1)}{\le} & \hspace*{-5mm} \sum_{R \in {H(\mu) \choose r}} \left( \sum_{h \in H(\mu) - R} X(R,h) \right. \\
   & & + 2^k \sum_{\{h_1,h_2\} \subseteq H(\mu) - R} \hspace*{-5mm} X(R,h_1) \cdot X(R,h_2) + \ldots \\
   & & + \left. k^k \sum_{\{h_1,\ldots,h_k \} \subseteq H(\mu) - R} \hspace*{-5mm} X(R,h_1) \cdot \ldots \cdot X(R,h_k) \right)
\end{eqnarray*} 
(1) holds because $X(R,h)^i = X(R,h)$ for any $i \ge 1$ and there are at most $i^k$ ways of assigning $k$ $X(R,h)$'s, one from each sum in $(\sum_{h \in H(\mu) - R} X(R,h))^k$, to the $i$ $X(R,h)$'s in some $X(R,h_1) \cdot \ldots \cdot X(R,h_i)$. Moreover, for any $k>1$,
\begin{eqnarray*}
  \lefteqn{ \sum_{R \in {H(\mu) \choose r}} \sum_{\{h_1,\ldots,h_k \} \subseteq H(\mu) - R} X(R,h_1) \cdot \ldots \cdot X(R,h_k) } \\
  & = & \sum_{Q \in {H(\mu) \choose r+k}} \sum_{\{h_1,\ldots,h_k \} \subseteq Q} X(Q-\{h_1,\ldots,h_k\},h_1) \cdot \ldots \\
  & & \hspace*{3cm} \cdot X(Q-\{h_1,\ldots,h_k\},h_k) \\
  & = & \sum_{Q \in {H(\mu) \choose r+k}} \sum_{\{h_1,\ldots,h_{k-1} \} \subseteq Q} \\
  & & \sum_{h_k \in Q-\{h_1,\ldots,h_{k-1}\}}  X(Q-\{h_1,\ldots,h_k\},h_1) \cdot \ldots \\ 
  & & \hspace*{3cm} \cdot X(Q-\{h_1,\ldots,h_k\},h_{k-1}) \\
  & & \hspace*{3cm} \cdot X((Q-\{h_1,\ldots,h_{k-1}\})-h_k,h_k) \\
  & \stackrel{(2)}{\le} & \sum_{Q \in {H(\mu) \choose r+k}} \sum_{\{h_1,\ldots,h_{k-1} \} \subseteq Q} \\
  & & \sum_{h_k \in B(Q-\{h_1,\ldots,h_{k-1}\})}  X((Q-h_k)-\{h_1,\ldots,h_{k-1}\},h_1) \cdot \ldots \\
  & & \hspace*{3cm} \cdot X((Q-h_k)-\{h_1,\ldots,h_{k-1}\},h_{k-1})
\end{eqnarray*}
\begin{eqnarray*}
  & \le & \sum_{Q \in {H(\mu) \choose r+k}} d \cdot \max_{h_k \in Q} \left(
    \sum_{\{h_1,\ldots,h_{k-1} \} \subseteq Q-h_k} \right. 
  X((Q-h_k)-\{h_1,\ldots,h_{k-1}\},h_1) \cdot \ldots \\
  & & \hspace*{2cm} \left.\phantom{\sum_Q} \cdot X((Q-h_k)-\{h_1,\ldots,h_{k-1}\},h_{k-1}) \right) \\
  & \le & ... \quad \le \sum_{Q \in {H(\mu) \choose r+k}} d^k
\end{eqnarray*}
where $B(S)$ is an optimal basis of $S$. (2) holds because $X((Q-\{h_1,\ldots,h_{k-1}\})-h_k,h_k)=0$ for every $h_k \not\in B(Q-\{h_1,\ldots,h_{k-1}\})$. The skipped calculations apply the same idea for $h_k$ to $h_{k-1},\ldots,h_2$. Hence, as long as $r+k < |H(\mu)|/2$,
\begin{eqnarray*}
 {m \choose r} \E[|W_R|^k] & = & \sum_{R \in {H(\mu) \choose r}} |W_R|^k \\
   & \le & \sum_{Q \in {H(\mu) \choose r+1}} d 
    + 2^k \hspace*{-2mm} \sum_{Q \in {H(\mu) \choose r+2}} \hspace*{-2mm} d^2 + \ldots
    + k^k \hspace*{-2mm} \sum_{Q \in {H(\mu) \choose r+k}} \hspace*{-2mm} d^k \\
   & \le & 2k^k \sum_{Q \in {H(\mu) \choose r+k}} d^k = 2(dk)^k {m \choose r+k}
\end{eqnarray*}
Resolving that to $\E[|W_R|^k]$ results in the lemma.
\end{proof}

Lemma~\ref{lem_hV} allows us to prove the following probability bound, which is
essentially best possible for constant $d$ by a lower bound in \cite{GW01}.

\begin{lemma} \label{lem_hWi}
Let $(H,f)$ be an LP-type problem of dimension $d$ and let $H(V)$ be any multiset of $H$ of size $m$. For any $\gamma \ge 1$ and $1 \le r < m/2 - \gamma$,
\[
  \Pr[|W_i| \ge 4\gamma \cdot \frac{d \cdot m}{n(r+1)} ] \le 1/2^\gamma
\]
\end{lemma}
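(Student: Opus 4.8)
The plan is to derive the tail bound from the moment bound of Lemma~\ref{lem_hV} via a higher-moment Markov inequality, exactly as one would to get Chernoff-type concentration from bounded moments. First I would recall from Lemma~\ref{lem_Wi} (or directly from Lemma~\ref{lem_V}) that $\E[|W_i|] \le d(m-r)/(n(r+1))$; more importantly, since $|W_i|$ is obtained from $W_R$ by independently keeping each element of $H(V)$ with probability $1/n$, the moments of $|W_i|$ are controlled by those of $W_R$ given in Lemma~\ref{lem_hV}. Concretely, conditioning on $W_R$, $|W_i|$ is a sum of $|W_R|$ independent Bernoulli$(1/n)$ variables, so $\E[|W_i|^k \mid W_R] \le (|W_R|/n)^k \cdot c_k$ for a combinatorial factor $c_k$ (accounting for the non-distinct terms in expanding the $k$th power); absorbing $c_k \le k^k$ into the bound and then taking the outer expectation with Lemma~\ref{lem_hV} yields something like $\E[|W_i|^k] \le 2\bigl(k^2 d (m-r)/(n(r+1))\bigr)^k$, i.e. a bound of the form $2(k \cdot \lambda)^k$ with $\lambda = \Theta(d m /(n(r+1)))$ after rechecking constants.

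Given such a moment bound, the standard trick is Markov applied to $|W_i|^k$: for any threshold $t$ and any admissible $k$,
\[
  \Pr[|W_i| \ge t] = \Pr[|W_i|^k \ge t^k] \le \frac{\E[|W_i|^k]}{t^k} \le 2\left(\frac{k\lambda}{t}\right)^k .
\]
Now choose $t = 4\gamma d m/(n(r+1))$, i.e. $t \approx 4\gamma\lambda$ (up to the $m$ versus $m-r$ and $r+1$ bookkeeping, which only helps), and choose $k = \gamma$ (or a small constant multiple of $\gamma$, whatever makes the arithmetic land). Then $k\lambda/t \le 1/4$ (with room to spare to swallow the factor $2$ and the combinatorial slack), so the bound becomes $2 \cdot (1/4)^\gamma \le 1/2^\gamma$ as claimed. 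I would also note that the hypothesis $r < m/2 - \gamma$ is exactly what is needed so that Lemma~\ref{lem_hV} applies with $k = \gamma$ (it requires $r < m/2 - k$), so the constraint in the statement matches the constraint we need.

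The main obstacle is bookkeeping of constants rather than anything conceptual: I need to make sure that the extra factor picked up when passing from $W_R$ to $|W_i|$ (the combinatorial blow-up in $\E[(\sum \text{Bernoulli})^k]$ relative to $(\E[\sum\text{Bernoulli}])^k$) is at most $k^k$ or so, so that the composed bound is still of the clean form $2(k\lambda)^k$; and then that $4\gamma$ in the threshold really does beat $k\lambda$ by a factor of at least $4$ when $k = \Theta(\gamma)$, leaving enough margin for the leading $2$ and for replacing $\gamma$ by $\lceil\gamma\rceil$ in the exponent. One clean way to avoid the Bernoulli-thinning subtlety altogether is to observe that $|W_i| = \sum_{h \in H(v_i)} X(R,h)$ where the randomness is over both $R$ and the assignment of $H(V)$ to nodes, and to apply the same $Q$-counting argument as in Lemma~\ref{lem_hV} directly to $\E[|W_i|^k]$, using that each fixed element lands in $H(v_i)$ with probability $1/n$ independently; this inserts a factor $n^{-k}$ into the final line of that proof and gives the bound directly. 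I would phrase the proof using whichever of these two routes is shorter, then finish with the two-line Markov computation above.
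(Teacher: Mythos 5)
Your overall strategy is the same as the paper's: combine the $k$-th moment bound of Lemma~\ref{lem_hV} with Markov's inequality applied to the $k$-th power, choose $k=\gamma$, and use the factor $4$ in the threshold to absorb the leading constant $2$. The paper does exactly this: it applies Markov to $|W_R|^k$, takes $k$-th roots via $2^{1/k}\le 1+1/k\le 2$, then brings in the factor $1/n$ on the grounds that each element of $H(V)$ lies in $H(v_i)$ with probability $1/n$, and finally sets $c=2$, $k=\gamma$ to get the threshold $4\gamma\, dm/(n(r+1))$ and the bound $1/2^\gamma$; you also correctly note that the hypothesis $r<m/2-\gamma$ is exactly what Lemma~\ref{lem_hV} needs for $k=\gamma$. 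The one place where your plan needs care is the choice between your two routes for passing from $W_R$ to $W_i$. The first route (condition on $W_R$ and bound the Binomial$(|W_R|,1/n)$ moments with a combinatorial factor $c_k\le k^k$) really does cost an extra $k^k$: it yields $\E[|W_i|^k]\le 2(k^2\lambda)^k$ rather than $2(k\lambda)^k$ with $\lambda=\Theta(dm/(n(r+1)))$, and then with $k=\gamma$ and $t=4\gamma\lambda$ the Markov step gives $2(\gamma/4)^\gamma$, which already exceeds $1/2^\gamma$ for $\gamma\ge 3$ (and exceeds $1$ for $\gamma\ge 4$), so that route as described does not land the stated constant unless you take $k$ a small constant fraction of $\gamma$ or track the thinned moments more sharply. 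Your second route --- rerunning the $Q$-counting argument of Lemma~\ref{lem_hV} directly on $|W_i|=\sum_{h\in H(v_i)}X(R,h)$, which attaches a factor $1/n$ per distinct element and hence $n^{-k}$ overall --- gives $\E[|W_i|^k]\le 2\bigl(k\,d\,(m-r)/(n(r+1))\bigr)^k$, after which your two-line Markov computation closes exactly as in the paper; this is the route to write up, and it is, if anything, a more careful rendering of the paper's own ``insert $1/n$'' step than the paper itself provides.
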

\begin{proof}
From Lemma~\ref{lem_hV} and the Markov inequality it follows that, for any $c\ge 1$ and $k\ge 1$,
\[
  \Pr[|W_R|^k \ge c^k \cdot 2(k \cdot d \cdot (m-r)/(r+1))^k] \le 1/c^k
\]
and therefore,
\[
  \Pr[|W_R| \ge c \cdot (1+1/k)(k \cdot d \cdot (m-r)/(r+1))] \le 1/c^k
\]
Since, for every element $h \in H(V)$, the probability that $h \in H(v_i)$ is equal to $1/n$, it follows that
\[
  \Pr[|W_i| \ge 2c \cdot \frac{k \cdot d \cdot m}{n(r+1)} ] \le 1/c^k
\]
Setting $c=2$ and $k=\gamma$ results in the lemma.
\end{proof}

Therefore, w.h.p., $|W_i| = O(d \log n)$ for every $i$, so the maximum work needed for pushing some $W_i$ is $O(d \log n)$. Moreover, the size of $H(V)$ after $T$ rounds is at most $|H|+O(T d n \log n)$, w.h.p. On the other hand, we will show the following variant of Lemma~\ref{lem:llsuccess}.

\begin{lemma} \label{lem:hlsuccess}
Let $B$ be an arbitrary optimal basis of $H$. As long as no $v_i$ has satisfied $f(H(v_i))=f(H)$ so far, $\E[|\{ h \in H(V) \mid h \in B\}|] \ge 2^{T/d}$ after $T$ rounds of the High-Load Clarkson Algorithm.
\end{lemma}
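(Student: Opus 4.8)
The plan is to mirror the structure of the proof of Lemma~\ref{lem:llsuccess}, but with the key simplification that the High-Load Clarkson Algorithm never filters out any element, so there is no $1/(1+1/(2d))$ shrinkage factor to fight. First I would fix an optimal basis $B=\{h_1(B),\dots,h_b(B)\}$ with $b\le d$, and track $\mu_{j,t}$, the multiplicity (number of copies) of $h_j(B)$ in $H(V)$ at the end of round $t$. Since each $h_j(B)\in H\subseteq H(V)$ initially, we have $\mu_{j,0}\ge 1$ for all $j$.

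The core step is a per-round growth estimate. Consider round $t$ and a fixed node $v_i$ that receives some pushed basis $B_j$ of another node's $H(v_i)$; equivalently (by the observation preceding the lemma that $H(v_i)$ is a random multiset of $H(V)$) we are looking at a random multiset $R$ and the set $W_R=\{h\in H(V)\mid f(R)<f(R\cup\{h\})\}$. Let $q_{j,t}$ be the probability, over the choice of the sampled multiset in round $t$, that $f(R)<f(R\cup\{h_j(B)\})$, i.e.\ that $h_j(B)\in W_R$. As in Lemma~\ref{lem:llsuccess}, whenever $f(R)<f(H)$ there must, by monotonicity and locality applied to a maximal $B'\subseteq B$ with $f(R)=f(R\cup B')$, be at least one index $j$ with $h_j(B)\in W_R$; hence conditioned on $f(R)<f(H)$ we get $\sum_j q_{j,t}\ge 1$. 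Since we are in the regime where no $v_i$ has yet achieved $f(H(v_i))=f(H)$, every processed multiset satisfies $f(R)<f(H)$, so $\sum_j q_{j,t}\ge 1$ holds unconditionally for every round $t$ up to termination. Now, each of the $n$ nodes performs such a violation test against a fresh random multiset and pushes each violating copy of $h_j(B)$ to a uniformly random node, where it is added permanently. So in expectation each of the $n$ nodes contributes $q_{j,t}$ new copies of $h_j(B)$, giving
\[
  \E[\mu_{j,t+1}\mid \text{history up to round }t] \;\ge\; \mu_{j,t} + q_{j,t}\cdot \mu_{j,t} \;=\; (1+q_{j,t})\,\mu_{j,t},
\]
where the factor $\mu_{j,t}$ appears because each of the (at least $\mu_{j,t}$) existing copies of $h_j(B)$ sits in a node whose random multiset independently triggers a push with probability $q_{j,t}$ — so I should be slightly careful and argue that the number of nodes holding a copy of $h_j(B)$ is at least some fraction of $\mu_{j,t}$, or more simply bound below by $\mu_{j,t}$ using that copies are pushed to independent uniform nodes and collisions only help the count of holders in expectation; in any case a clean bound of the form $\E[\mu_{j,t+1}]\ge (1+q_{j,t})\,\mu_{j,t}$ is available without the $(1+1/(2d))$ denominator that Lemma~\ref{lem:llsuccess} had to carry.

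Iterating, $\E[\mu_{j,T}]\ge \prod_{t=1}^{T}(1+q_{j,t})\ge \prod_{t=1}^T 2^{q_{j,t}}=2^{\sum_{t}q_{j,t}}$, using $1+x\ge 2^x$ for $x\in[0,1]$. Since $\sum_{t=1}^T\sum_j q_{j,t}\ge T$ and there are $b\le d$ indices $j$, by averaging there is a $j^\ast$ with $\sum_{t=1}^T q_{j^\ast,t}\ge T/d$, hence $\E[\mu_{j^\ast,T}]\ge 2^{T/d}$. Since $\mu_{j^\ast,T}\le |\{h\in H(V)\mid h\in B\}|$, taking expectations gives $\E[|\{h\in H(V)\mid h\in B\}|]\ge 2^{T/d}$, which is the claim. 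The main obstacle, as flagged above, is making the per-round multiplicative growth bound $\E[\mu_{j,t+1}]\ge(1+q_{j,t})\mu_{j,t}$ rigorous: one must handle the dependence between the event "node $v_i$'s random multiset is violated by $h_j(B)$" and the number of copies of $h_j(B)$ actually residing at $v_i$, and confirm that pushing copies to random nodes (possibly with collisions) does not lose more than a constant — or, better, phrase the recursion in terms of the number of distinct nodes holding $h_j(B)$, which grows by the same factor and lower-bounds $\mu_{j,t}$ appropriately. Everything else is a direct transcription of the Littlewood/Welzl-style argument already used twice in the paper.
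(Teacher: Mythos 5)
Your proposal is correct and follows essentially the same route as the paper: track the multiplicity $\mu_{j,t}$ of each basis element, show that under non-termination the per-round violation probabilities sum to at least $1$, derive $\E[\mu_{j,t+1}]\ge(1+\cdot)\mu_{j,t}$ with no filtering denominator, apply $1+x\ge 2^x$, and pigeonhole over the $b\le d$ basis elements to find $j^*$ with exponent $T/d$. The only cosmetic difference is that the paper handles the dependence/bookkeeping you flag by defining $p_{i,j}$ per pushed basis $B_i$ and averaging, $\rho_{j}=(1/n)\sum_i p_{i,j}\in[0,1]$ with $\sum_j\rho_j\ge 1$ (so it needs no assumption that all nodes have the same violation probability), which is exactly the clean form of your per-copy duplication rate.
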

\begin{proof}
Let $B=\{h_1(B),\ldots,h_b(B)\}$, $b \le d$, and let $p_{i,j}$ be the probability that $f(B_i)<f(B_i \cup \{h_j(B)\})$. If $f(B_i)<f(H)$, then there must exist an $h_j(B)$ with $f(B_i)<f(B_i \cup \{h_j(B)\})$, which implies that under the condition that $f(B_i)<f(H)$, $\sum_j p_{i,j} \ge 1$. Let $\rho_j$ be the expected number of duplicates created for some copy of $h_j(B)$. Since the $B_i$'s are sent to nodes chosen uniformly at random, $\rho_j = (1/n) \sum_i p_{i,j}$. Certainly, since $p_{i,j} \in [0,1]$ for all $i$, also $\rho_j \in [0,1]$. Moreover,
\[
  \sum_j \rho_j = \sum_j (1/n) \sum_i p_{i,j} 
  = (1/n) \sum_i \sum_j p_{i,j} \ge 1
\]
Hence, we can use the same arguments as in the proof of Lemma~\ref{lem:llsuccess}, with $p_j$ replaced by $\rho_j$ and without the term $(1+1/(2d))$ in the denominator since we do not perform filtering, to complete the proof.
\end{proof}

Thus, because $|H(V)| \le |H|+O(T d n \log n)$ after $T$ rounds, w.h.p., our algorithm must terminate within $O(d \log |H|) = O(d \log n)$ rounds, w.h.p.

For the termination detection, we can again use the algorithm proposed in Section~\ref{sec:termination}, which results in a maximum work of $O(\log n)$ per round.

\subsection{Accelerated High-Load Clarkson Algorithm}

If we are willing to spend more work, we can accelerate the High-Load Clarkson Algorithm. Suppose that in Algorithm~\ref{alg:DistClarkson2} node $v_i$ does not just push $B_i$ once but $C$ many times. Then the work for that goes up from $O(d)$ to $O(C \cdot d)$, and the maximum work for pushing out the elements of the $W_i$'s is now bounded by $O(C \cdot d \log n)$, w.h.p., which means that after $T$ rounds, $|H(V)|$ is now bounded by $|H|+O(T C \cdot d n \log n)$, w.h.p. Furthermore, we obtain the following result, which replaces Lemma~\ref{lem:hlsuccess}.

\begin{lemma} \label{lem:ahlsuccess}
Let $B$ be an arbitrary optimal basis of $H$. As long as no $v_i$ has satisfied $f(H(v_i))=f(H)$ so far, $\E[|\{ h \in H(V) \mid h \in B\}|] \ge (C+1)^{\lfloor T/d \rfloor}$ after $T$ rounds of the High-Load Clarkson Algorithm with parameter $C$.
\end{lemma}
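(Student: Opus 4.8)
The plan is to reuse the proof of Lemma~\ref{lem:hlsuccess} almost verbatim, changing only the two points at which the parameter $C$ enters. Write the chosen optimal basis as $B=\{h_1(B),\dots,h_b(B)\}$ with $b\le d$. For a copy of some $h_j(B)$ currently stored at a node, let $\rho_{j,t}$ be the expected number of duplicates of that copy created during round $t$ of the algorithm with parameter $C$. Exactly as in Lemma~\ref{lem:hlsuccess}, the hypothesis that no $v_i$ has yet reached $f(H(v_i))=f(H)$ forces $f(B_i)=f(H(v_i))<f(H)=f(B)$ for every $i$, so by monotonicity and locality at least one $h_j(B)$ is violated by $B_i$; hence $\sum_j p_{i,j}\ge 1$ for $p_{i,j}=\Pr[f(B_i)<f(B_i\cup\{h_j(B)\})]$, and averaging over the destinations of the pushed bases gives $\sum_j\rho_{j,t}\ge 1$ in every round, while trivially each $\rho_{j,t}\in[0,1]$ (since it is $(1/n)$ times a sum of $n$ quantities in $[0,1]$, scaled appropriately).

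The first change is the duplication rate. In Algorithm~\ref{alg:DistClarkson2} each node pushes $B_i$ exactly once, so a fixed copy of $h_j(B)$ is hit by $B_i$ with probability $\approx 1/n$ and $\rho_{j,t}=(1/n)\sum_i p_{i,j}$; pushing $B_i$ instead $C$ times to independently chosen uniformly random nodes multiplies the per-basis hitting probability by $C$, so the expected number of duplicates of our copy in round $t$ becomes $C\rho_{j,t}$. Letting $\mu_{j,t}$ be the multiplicity of $h_j(B)$ in $H(V)$ at the end of round $t$, with $\mu_{j,0}=1$, and using that the High-Load algorithm never deletes elements (so there is no $1/(1+1/(2d))$ filtering factor), we obtain the recursion $\E[\mu_{j,t+1}]\ge(1+C\rho_{j,t})\,\mu_{j,t}$ and therefore $\E[\mu_{j,T}]\ge\prod_{t=1}^{T}(1+C\rho_{j,t})$, which is the bound of Lemma~\ref{lem:hlsuccess} with each $\rho_{j,t}$ replaced by $C\rho_{j,t}$.

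The second change is the elementary inequality that turns the product into an exponential. Instead of $1+x\ge 2^x$ for $x\in[0,1]$ we use its generalization $1+Cx\ge (C+1)^x$ for $x\in[0,1]$, which holds because $x\mapsto (C+1)^x$ is convex and $x\mapsto 1+Cx$ is the chord joining its values at $x=0$ and $x=1$; for $C=1$ this is exactly the inequality used in Lemma~\ref{lem:hlsuccess}. Applying it to each factor with $x=\rho_{j,t}\in[0,1]$ gives $\E[\mu_{j,T}]\ge (C+1)^{\sum_{t=1}^{T}\rho_{j,t}}$. Since $\sum_{t=1}^{T}\sum_{j=1}^{b}\rho_{j,t}\ge T$ and $b\le d$, some index $j^*$ satisfies $\sum_{t=1}^{T}\rho_{j^*,t}\ge T/d\ge\lfloor T/d\rfloor$, and since $|\{h\in H(V)\mid h\in B\}|=\sum_j\mu_{j,T}\ge\mu_{j^*,T}$, the claimed bound $\E[|\{h\in H(V)\mid h\in B\}|]\ge (C+1)^{\lfloor T/d\rfloor}$ follows (in fact with $T/d$ in the exponent).

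The convexity inequality and the pigeonhole step are routine; the one place that deserves care is the bookkeeping behind ``the expected number of duplicates of a fixed copy in round $t$ is $C\rho_{j,t}$'': one should note that the $C$ copies of each $B_i$ are sent to independently chosen random nodes, so the probability that a fixed copy of $h_j(B)$ is triggered by $B_i$ equals $(C/n)(1-o(1))$ (and if that copy is triggered several times in the same round this only increases the count, which is harmless for a lower bound on the expectation). As in Lemmas~\ref{lem:llsuccess} and~\ref{lem:hlsuccess}, the $\rho_{j,t}$ are read as per-round conditional duplication rates, so the product bound is obtained by the same iterated-expectation argument used there, now with the larger base $C+1$ in place of $2$.
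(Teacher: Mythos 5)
Your proposal is correct and follows the paper's proof in its overall structure (same definition of the per-copy duplication rates $\rho_{j,t}$ via the $p_{i,j}$, same recursion $\E[\mu_{j,t+1}]\ge(1+C\rho_{j,t})\mu_{j,t}$, same pigeonhole over the at most $d$ basis elements), but it handles the one nontrivial analytic step differently. The paper keeps the rates normalized so that $\rho_{j,t}\in[0,C]$ and $\sum_j\rho_{j,t}\ge C$, and then lower-bounds $\prod_t(1+\rho_{j,t})$ under the constraint $\sum_t\rho_{j,t}=M$ by an extremal argument (the product is smallest when the mass is concentrated on as few rounds as possible, each at the cap $C$), which yields $(C+1)^{\lfloor M/C\rfloor}$ and hence the floor in the exponent. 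You instead keep $\rho_{j,t}\in[0,1]$ and apply the termwise inequality $1+Cx\ge(C+1)^x$ on $[0,1]$ (the natural generalization of the $1+x\ge 2^x$ step in Lemma~\ref{lem:llsuccess}), which avoids the extremal argument and the floor altogether and in fact gives the marginally stronger bound $(C+1)^{T/d}$; this is a cleaner route to the same conclusion. The only blemish is notational: in your first paragraph you define $\rho_{j,t}$ as the duplication rate of the parameter-$C$ algorithm (which would lie in $[0,C]$ with $\sum_j\rho_{j,t}\ge C$), while from the second paragraph on you use $\rho_{j,t}=(1/n)\sum_i p_{i,j}$ with the $C$-fold rate written as $C\rho_{j,t}$; the latter convention is the one your bounds $\rho_{j,t}\in[0,1]$ and $\sum_j\rho_{j,t}\ge 1$ require, so state it once and consistently. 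With that fixed, the argument is complete.
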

\begin{proof}
Recall the definition of $\rho_j$ in the proof of Lemma~\ref{lem:hlsuccess}. It now holds that $\rho_j =(C/n) \sum_i p_{i,j}$, which implies that $\rho_j \in [0,C]$ for all $j$ and $\sum_j \rho_j \ge C$. Now, let $\rho_{j,t}$ be the expected number of duplicates created for some copy of $h_j(B)$ in round $t$, and fix any values of $\rho_{j,t}$ so that $\sum_j \rho_{j,t} \ge C$ and $\rho_j \in [0,C]$ for all $j$ and $t$. Let $\mu_{j,t}$ be the multiplicity of $h_j(B)$ at the end of round $t$. Then, for all $j$, $\mu_{j,0} \ge 1$, and
\[
  \E[\mu_{j,t+1}] \ge (1+\rho_{j,t}) \cdot \mu_{j,t}
\]
Hence, $\E[\mu_{j,T}] \ge \prod_{t=1}^T (1+\rho_{j,t})$. Suppose that $\sum_{t=1}^T \rho_{j,t} = M$. Since $\prod_{t=1}^T (1+\rho_{j,t})$ is a convex function (i.e., it attains its maximum when $\rho_{j,t}=\rho_{j,t'}$ for all $t,t'$ under the constraint that $\sum_{t=1}^T \rho_{j,t}$ is fixed, which can be seen from the fact that $((1+r)+\epsilon)((1+r)-\epsilon) = (1+r)^2 - \epsilon^2$), it gets lowest if as many of the $\rho_{j,t}$'s are as large as possible and the rest is 0. Thus, $\prod_{t=1}^T (1+\rho_{j,t}) \ge (C+1)^{\lfloor M/C \rfloor}$.

Since $\sum_{t=1}^T \sum_j p_{j,t} \ge C \cdot T$, there must be a $j^*$ with $\sum_{t=1}^T p_{j^*,t} \ge C \cdot T/d$.
Therefore, there must be a $j^*$ with $\E[\mu_{j^*,T}] \ge (C+1)^{\lfloor T/d \rfloor}$, which completes the proof.
\end{proof}

Setting $C=\log^{\epsilon} n$ for any constant $\epsilon>0$, it follows that our algorithm must terminate in $O((d/\epsilon) \log(|H|)/\log \log(n)) = O(d \log(n)/\log \log(n))$ rounds, w.h.p. This completes the proof of Theorem~\ref{th:main2}.

\section{LP-type Algorithm for the Hitting Set Problem}

Finally, we consider distributed algorithms for two NP-hard optimization problems, the hitting set problem and the set cover problem. Recall the definition of the hitting set problem $(X,{\mathcal S})$ and its formulation as an LP-type problem $(X,f)$ in Section~\ref{sec:new}.

Initially, the points in $X$ are randomly distributed among the nodes. Let
$X_0(v_i)$ be the set of elements in $X$ initially known to node $v_i$ to distinguish them from copies created later by the algorithm, and let $X_0 = \bigcup_i X_0(v_i)$.

In the following, let $s=|{\mathcal S}|$. We assume that $|X|=n$. Consider Algorithm~\ref{alg:DistClarkson4}. We assume the parameter $c$ to be a sufficiently large constant, and the parameter $r$ will be determined below. At any time during the algorithm, $X(v_i)$ denotes the multiset of elements in $X$ known to $v_i$ and $X(V)=\bigcup_i X(v_i)$. Let $m=|X(V)|$. We first show the following lemma.

\begin{algorithm}[th]
  \begin{algorithmic}[1]
    \Repeat
      \ForAll{nodes $v_i$ in parallel}
        \State choose a random multiset $R_i$ of size $r$ from $X(V)$
        \If{the selection of $R_i$ succeeds}
          \State ${\mathcal S}_i:=\{ S \in {\mathcal S} \mid S \cap R_i = \emptyset \}$
          \State $S:=$ random set in ${\mathcal S}_i$
          \State $W_i:=S \cap X(v_i)$
          \If{$|W_i| \le c \cdot d \log n$}
            \ForAll{$x \in W_i$} {\bf push}($x$)
            \EndFor
          \EndIf
        \EndIf
        \ForAll{$x$ received by $v_i$} add $x$ to $X(v_i)$
        \EndFor
        \ForAll{$h \in X(v_i) - X_0(v_i)$} 
          \State keep $x$ with probability $1/(1+1/(2d))$
        \EndFor
      \EndFor
    \Until{at least one $v_i$ satisfies ${\mathcal S}_i = \emptyset$}
  \end{algorithmic}
  \caption{Distributed Hitting Set Algorithm.}
  \label{alg:DistClarkson4}
\end{algorithm}

\begin{lemma}
Let $(X,f)$ be a hitting set problem with a minimal hitting set of size $d$ and let $X(V)$ be any multiset of $X$ of size $m$ containing $X$. Consider any node $v_i$. If $r \ge 6d \ln(12ds)$, then as long as the probability that $R_i$ is a hitting set is less then $1/2$, the expected size of $W_i$ is at most $m/(6dn)$.
\end{lemma}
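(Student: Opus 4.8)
The plan is to first reduce the statement to a bound on the expected ``weight'' of a uniformly random unhit set, and then control that weight by splitting $\mathcal S$ into light and heavy sets. For the reduction, recall (as in Lemma~\ref{lem_Wi} and the remark at the end of Section~\ref{sec:sampling}) that the procedure building $R_i$ reveals nothing about which copies sit at $v_i$, so conditioned on the entire outcome of the sampling --- in particular on $R_i$, on $\mathcal S_i$, and on the random set $S\in\mathcal S_i$ --- each copy of $X(V)$ still belongs to $X(v_i)$ independently with probability $1/n$. Writing $\mu(S)$ for the number of copies in $X(V)$ of elements of $S$ (and $\mu(S):=0$ if $\mathcal S_i=\emptyset$), we thus get $\E[\,|W_i|\mid S\,]\le \mu(S)/n$, so it suffices to show $\E[\mu(S)]\le m/(6d)$. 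The hypothesis that $R_i$ is a hitting set with probability $<1/2$ enters here only to guarantee that $\mathcal S_i\neq\emptyset$ --- so that $S$ is defined --- with probability more than $1/2$; on the complementary event $\mu(S)=0$, so it contributes nothing.

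\emph{Light/heavy split.} Fix a threshold $\tau$ of order $m/d$ (concretely one may take $\tau=m/(6d)$, after a short check of constants at the end). Call a set $S'\in\mathcal S$ \emph{heavy} if $\mu(S')>\tau$ and \emph{light} otherwise. For every $S'$ we have $\mu(S')\le\tau+\mu(S')\,\mathbf{1}[S'\text{ heavy}]$, and $|\mathcal S_i|\ge 1$ whenever $\mathcal S_i\neq\emptyset$; hence, averaging over $S'\in\mathcal S_i$,
\[
  \mu(S)\ \le\ \tau\ +\ \sum_{\substack{S'\in\mathcal S:\ S'\text{ heavy}\\ S'\cap R_i=\emptyset}}\mu(S').
\]
So it remains to show that the expectation of the heavy-and-unhit sum is small compared with $\tau$.

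\emph{Union bound over heavy sets.} For a fixed $S'$, each of the $r$ sampled copies misses $S'$ with probability $1-\mu(S')/m$, so $\Pr[S'\cap R_i=\emptyset]\le(1-\mu(S')/m)^r\le e^{-r\mu(S')/m}$ and the expected contribution of $S'$ is at most $\mu(S')e^{-r\mu(S')/m}$. Since $x\mapsto x\,e^{-rx/m}$ is decreasing for $x\ge m/r$ and $\tau\ge m/r$ once $r\ge 6d$, every heavy set contributes at most $\tau e^{-r\tau/m}$, and there are at most $s=|\mathcal S|$ of them. With $r\ge 6d\ln(12ds)$ we have $e^{-r\tau/m}\le e^{-r/(6d)}\le 1/(12ds)$, so the whole heavy-and-unhit sum has expectation at most $s\tau/(12ds)=\tau/(12d)$, which is negligible next to $\tau$. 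Substituting back gives $\E[\mu(S)]\le \tau\,(1+1/(12d))$; choosing $\tau$ and the constant inside the logarithm appropriately makes this at most $m/(6d)$, which is the claim.

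\emph{Main obstacle.} The difficulty is the denominator $|\mathcal S_i|$ in the average over unhit sets: it cannot be bounded below by more than $1$, so a crude bound like $\mu(S)\le\sum_{S'\text{ unhit}}\mu(S')$ would be far too weak, since one enormous unhit set would dominate. The light/heavy split is exactly what fixes this --- the light part of the average is $\le\tau$ deterministically, and only the heavy sets (at most $s$ of them) need the exponential tail bound, which is strong enough precisely because $r$ is logarithmic in $ds$. A secondary point one must verify is that the sampling of $R_i$ does not correlate with the contents of $X(v_i)$; this is the property of the sampling subroutine from Section~\ref{sec:sampling} that keeps the placement of copies uniform and validates the reduction in the first step.
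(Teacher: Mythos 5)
Your proposal is correct in substance and rests on the same two pillars as the paper's proof: the weight threshold $m/(6d)$ separating light from heavy (the paper's ``small''/``large'') sets, with $r \ge 6d\ln(12ds)$ forcing the miss probability of any heavy set below $e^{-r/(6d)} \le 1/(12ds)$, and the reduction from the weight $\mu(S)$ of the chosen set to $\E[|W_i|]$ via the fact that each copy in $X(V)$ lies at $v_i$ with probability $1/n$ (the remark at the end of Section~\ref{sec:sampling}, as in Lemma~\ref{lem_Wi}). Where you genuinely differ is in how the heavy case is folded in. The paper conditions on whether the uniformly chosen unhit set is small or large, argues that the probability $p_+$ of picking a large set is at most $1/(6d)$ --- using the hypothesis that the hitting-set probability is below $1/2$ to normalize by $e_-+e_+\ge 1/2$ --- and then applies the trivial bound $m/n$ for large sets. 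You instead bound the expected total weight of unhit heavy sets by a union bound, $\sum_{S'\;\mathrm{heavy}} \mu(S')\,e^{-r\mu(S')/m} \le s\cdot\tau/(12ds)$, exploiting monotonicity of $x e^{-rx/m}$ beyond $m/r$. Your route is arguably cleaner: it sidesteps the paper's assertion that $p_\pm$ are proportional to $e_\pm$ (which is not literally exact, since the fraction of large sets among the unhit sets is a random variable), and it does not use the hypothesis on the hitting-set probability at all beyond well-definedness of $S$. One caveat: with $\tau=m/(6d)$ you end at $(1+1/(12d))\,m/(6dn)$ rather than $m/(6dn)$, and your suggested repair of ``choosing the constant inside the logarithm'' is not available, since the lemma fixes $r\ge 6d\ln(12ds)$; lowering $\tau$ alone does not suffice for large $s$. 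This is only a constant-factor blemish, not a gap in the argument: the downstream lemmas only need $\sum_i |W_i|\le m/(3d)$, which leaves a factor-two cushion, and the paper's own closing inequality $(1-p_+)\,m/(6dn)+p_+\,m/n\le m/(6dn)$ is loose by a comparable factor.
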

\begin{proof}
For any set $S \in {\mathcal S}$ let $\mu(S)=|\{ x \in X(V) \mid x \in S\}|$. We say that $S$ is {\em small} if $\mu(S) \le m/(6d)$, and otherwise we say that $S$ is {\em large}. Let $e_-$ be the expected number of small sets $S$ not covered by $R_i$ and $e_+$ be the expected number of large sets $S$ not covered by $R_i$. If $e_-+e_+ < 1/2$, then it follows from the Markov inequality that the probability that $R_i$ is a hitting set is at least $1/2$ (which implies that w.h.p. there will be at least one $v_i$ whose $R_i$ is a hitting set, and the algorithm would terminate within at most $O(\log n)$ rounds). Thus, under the assumption that the 
probability of finding a hitting set is less then $1/2$, $e_-+e_+ \ge 1/2$. Let $p_-$ be the probability that the random set picked by $v_i$ is a small set and $p_+$ be the probability that the random set picked by $v_i$ is a large set. Under the assumption that $R_i$ is not a hitting set, it must hold that $p_-+p_+ = 1$ and the probabilities are proportional to $e_-$ and $e_+$, i.e., $p_- = e_-/(e_-+e_+)$ and $p_+=e_+/(e_-+e_+)$. For any small set $S$, the expected number of elements of $S$ stored in $v_i$ is at most $m/(6dn)$. Hence, $\E[|W_i|] \le m/(6dn)$ if the randomly selected set is small. If the randomly selected set is large, we can only give the trivial bound $\E[|W_i|] \le m/n$, but we can provide a bound on $p_+$.

Note that the probability that a large set $S$ is not covered is at most
$(1-1/(6d))^r \le e^{- r/(6d)} \le 1/(12ds)$. Since there are at most $s$ large sets, it follows that $e_+ \le 1/(12d)$. Since $e_-+e_+ \ge 1/2$, it therefore holds that $p_+ \le 1/(6d)$.Thus, overall,
\[
  \E[|W_i|] \le (1-p_+) \cdot m/(6dn) + p_+ \cdot m/n \le m/(6dn)
\]
\end{proof}

This allows us to prove the following lemma, whose proof follows the proof of Lemma~\ref{lem:Wgrowth}.

\begin{lemma}
As long as the probability of finding a hitting set is less than $1/2$, it holds for all $i$ that $|W_i|=O(m/n + \log n)$, w.h.p., and $\sum_{i=1}^n |W_i| \le m/(3d)$, w.h.p. 
\end{lemma}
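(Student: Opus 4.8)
The plan is to follow the proof of Lemma~\ref{lem:Wgrowth} almost verbatim, the only change being that the per-node expectation bound $d\cdot\frac{m-r}{n(r+1)}$ used there is replaced by $\E[|W_i|]\le m/(6dn)$, which is exactly what the previous lemma provides under the present hypothesis that the probability of finding a hitting set is below $1/2$ (and with $r$ chosen so that $r\ge 6d\ln(12ds)$). Concretely, I would set $X_i=|W_i|$ and $X=\sum_{i=1}^n X_i$. If the sampling of $R_i$ fails then $X_i=0$, and otherwise $\E[X_i]\le m/(6dn)$ by the previous lemma; hence $\E[X]\le m/(6d)$. Since the target $m/(3d)$ exceeds this expectation by a constant factor, the claim will follow from a Chernoff-type tail bound together with the same $k$-wise near-independence argument as before.

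For the first claim I would simply observe that $W_i=S\cap X(v_i)\subseteq X(v_i)$, so it suffices to bound $|X(v_i)|$. Because the elements of $X(V)$ are distributed uniformly and independently at random among the $n$ nodes at all times, $|X(v_i)|$ is binomially distributed with mean $m/n$, so the standard Chernoff bound gives $|X(v_i)|=O(m/n+\log n)$ w.h.p., and a union bound over the $n$ nodes yields $|W_i|=O(m/n+\log n)$ for all $i$ simultaneously, w.h.p.

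For the second claim, the $X_i$ are not independent, but --- as in Lemma~\ref{lem:Wgrowth} --- the dependencies are minute. The crucial point, which carries over unchanged, is that the sampling procedure of Section~\ref{sec:sampling} reveals no information about which element is stored at which node; hence, conditioning on the sampled multisets $R_j$ and the sets $S_j$ picked by all nodes, and on the node assignment of any $k=o(n)$ of the cells $X(v_j)$, each remaining element of $X(V)$ still lands in a given $X(v_i)$ with probability $(1+o(1))/n$, independently of the others at this granularity. Expanding $\prod_{i\in S}X_i$ as a sum over tuples $(x_i)_{i\in S}$ with $x_i\in S_i\cap X(V)$, applying this estimate, and finally taking the expectation over the (effectively mutually independent) samples while using $\E[\,|S_j\cap X(V)|/n\,]=\E[|W_j|]\le m/(6dn)$, gives for every $S\subseteq\{1,\dots,n\}$ of size $k=o(n)$
\[
  \E\Big[\textstyle\prod_{i\in S}X_i\Big]\;\le\;\Big((1+o(1))\cdot\tfrac{m}{6dn}\Big)^{k}.
\]
This is precisely the hypothesis of Theorem~\ref{th:chernoff}, which I would invoke with $C=\Theta(m/n+\log n)$, $q=(1+o(1))m/(6dn)$, $\mu=qn\le(1+o(1))m/(6d)$, and a deviation $\delta>0$ small enough that $\mu\delta=o(n)$ (so that the conditioning on $k=o(n)$ cells is valid and $k\ge\lceil\mu\delta\rceil$) but large enough that $\delta^{2}\mu=\omega(C\ln n)$; as in Lemma~\ref{lem:Wgrowth}, $\delta=\Theta(\sqrt{(Cd\ln n)/m})=O(\sqrt{(d\ln^{2}n)/n})\le1$ does the job. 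Since $(1+\delta)\mu\le m/(3d)$ for such $\delta$, we conclude $\Pr[X\ge m/(3d)]$ is polynomially small in $n$.

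The step I expect to require the most care is handling the two layers of randomness that now produce $W_i$: the random sample $R_i$ (hence the random set $S_i$ drawn from ${\mathcal S}_i$), and the random partition of $X(V)$ among the nodes. Here $\E[|W_i|\mid S_i]=\mu(S_i)/n$ can be as large as $m/n$ when $S_i$ happens to be one of the ``large'' sets, and only becomes $O(m/(dn))$ once we average over the choice of $S_i$ --- the content of the previous lemma. The resolution, exactly as there, is to condition first on all the sampling outcomes (which by Section~\ref{sec:sampling} are independent of the partition) and only then exploit the near-independence of the partition, so that the product bound above comes out in terms of $\E[|W_j|]$ rather than the worst-case $\mu(S_j)/n$.
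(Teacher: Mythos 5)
Your proposal is correct and matches the paper's intent exactly: the paper gives no separate argument but states that the proof ``follows the proof of Lemma~\ref{lem:Wgrowth}'', which is precisely your adaptation --- replacing the per-node expectation bound by $\E[|W_i|]\le m/(6dn)$ from the preceding lemma and rerunning the $k$-wise near-independence argument with Theorem~\ref{th:chernoff}. Your explicit treatment of the two layers of randomness (the choice of $R_i$ and $S_i$ versus the random partition of $X(V)$) is, if anything, slightly more careful than what the paper spells out, but it is the same route.
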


With this lemma we can then show that $m$ will never be too large, so that the communication work of the nodes will never be too large. Its proof follows the proof of Lemma~\ref{lem:Hsize}.

\begin{lemma}
As long as the probability of sampling a hitting set is less than $1/2$,
the Hitting Algorithm satisfies $|X(V)|=O(n)$, w.h.p.
\end{lemma}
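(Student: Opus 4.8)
The plan is to mirror the proof of Lemma~\ref{lem:Hsize} essentially verbatim, replacing $H$, $H_0$, $H(V)$ by $X$, $X_0$, $X(V)$ and using the fact that $|X_0|=n$ throughout (since every initial element of $X$ is assigned to exactly one node and no node ever deletes an element in $X_0$, we have $|X_0|=n$ and $X(V)\supseteq X$ always). First I would let $q=|X(V)-X_0|$ and suppose for contradiction that $|X(V)|\ge c\cdot n$ for some constant $c\ge 4$, so that $q\ge\frac{c-1}{c}\,m$ where $m=|X(V)|$. I would then track how the size of $X(V)-X_0$ evolves in one round: each such element is kept with probability $1/(1+1/(2d))$, and by the previous lemma (the analogue of Lemma~\ref{lem:Wgrowth}) the total number of newly pushed copies $\sum_i|W_i|$ is at most $m/(3d)$, w.h.p. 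Hence, writing $q'$ for the size at the end of the round,
\begin{eqnarray*}
  \E[q'] &\le& (q + m/(3d))\cdot \frac{1}{1+1/(2d)}
   \;\le\; \left(q + \frac{c\,q}{(c-1)\cdot 3d}\right)\cdot\frac{1}{1+1/(2d)}\\
  &=& q\left(1 - \frac{1/(2d) - c/(3(c-1)d)}{1+1/(2d)}\right)
   \;=\; q\,(1-\Theta(1/d)),
\end{eqnarray*}
where the $\Theta(1/d)$ is positive because $1/2 > c/(3(c-1))$ for $c\ge 4$.

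Next, since the retention decisions are made independently for each element of $X(V)-X_0$, a Chernoff bound gives that $\Pr[q'>q]$ is polynomially small in $n$ whenever $|X(V)|\ge 4n$; that is, once the load overhead has grown large, it shrinks in expectation and concentrates. I also need to rule out a single round blowing the size up too far, but the preceding lemma already bounds the one-round increase by $\sum_i|W_i|\le m/(3d)\le m/3$, w.h.p., so $|X(V)|$ can grow by at most a $(1+1/3)$ factor per round. Combining the "drift down when above $4n$" statement with the "at most $\times(4/3)$ growth per round" statement by a union bound over the (polynomially many) rounds, I conclude that $|X(V)|\le 5n=O(n)$ holds w.h.p.\ throughout, exactly as in Lemma~\ref{lem:Hsize}.

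The one genuinely new ingredient compared to Lemma~\ref{lem:Hsize} is that all of the above is conditioned on the hypothesis "the probability of sampling a hitting set is less than $1/2$": this is precisely the regime in which the preceding two lemmas (the analogues of Lemma~\ref{lem_Wi}/Lemma~\ref{lem:Wgrowth}) give $\sum_i|W_i|\le m/(3d)$ w.h.p. Once that hypothesis fails, the algorithm terminates within $O(\log n)$ rounds anyway, so the size bound is only needed in this regime. I expect the main obstacle to be purely bookkeeping — keeping the conditioning consistent across the w.h.p.\ events and making sure the constant $c$ in the contradiction argument and the constant in "$O(n)$" are chosen compatibly with the $\Theta(1/d)$ drift — rather than anything conceptually deep, since the filtering-vs-duplication balance was already engineered (via the $1/(1+1/(2d))$ retention probability against the $m/(3d)$ growth) to make this drift argument go through.
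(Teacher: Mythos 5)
Your proposal is correct and matches the paper's intent exactly: the paper gives no separate argument for this lemma beyond stating that its proof follows that of Lemma~\ref{lem:Hsize}, and your adaptation (replacing $H,H_0,H(V)$ by $X,X_0,X(V)$, invoking the preceding $\sum_i|W_i|\le m/(3d)$ bound under the ``hitting-set probability $<1/2$'' hypothesis, and rerunning the drift-plus-Chernoff argument) is precisely that intended transfer.
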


Once the probability of sampling a hitting set is at least $1/2$, the size of the $W_i$'s might get too large, but since a node $v_i$ only pushes out $W_i$ if $|W_i| \le c \cdot d \log n$, $|X(V)|$ can only grow by at most $O(dn \log n)$ in each round afterwards. However, once the probability of sampling a hitting set is at least $1/2$, at least one node will sample a hitting set, w.h.p., which means that after an additional $O(\log n)$ rounds (when using the termination checking in Section~\ref{sec:termination}), all nodes will terminate w.h.p. So $|X(V)|$ can never be larger than $O(n \log^2 n)$, w.h.p.

Moreover, the following lemma holds. Its proof is similar to the proof of Lemma~\ref{lem:llsuccess}.

\begin{lemma}
Let $T \in \N$ and $H$ be a minimal hitting set of $(X,{\mathcal S})$. After $T$ iterations, $\mu(H) \ge (2/\sqrt{e})^{T/d}$.
\end{lemma}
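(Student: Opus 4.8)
The plan is to follow the proof of Lemma~\ref{lem:llsuccess} almost verbatim, tracking how the multiplicity of each element of a fixed hitting set evolves round by round; the only genuinely new ingredient is the combinatorial fact that replaces ``$\sum_j p_j\ge 1$'', together with the bookkeeping for the set-picking step and the $|W_i|$-cutoff of Algorithm~\ref{alg:DistClarkson4}. Write $H=\{h_1,\dots,h_b\}$ with $b\le d$ (the estimate below in fact holds with $d$ replaced by $b=|H|$, and for a minimum hitting set $b=d$). For a round $t$ in which the algorithm is still running, let $p_{j,t}$ be the probability that the random uncovered set $S\in\mathcal S_i$ picked by a node in round $t$ contains $h_j$; by symmetry this is the same for every node, since every node draws $R_i$ from $X(V)$ and then $S$ from $\mathcal S_i$ with identical distribution. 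Let $\mu_{j,t}$ be the multiplicity of $h_j$ in $X(V)$ at the end of round~$t$, so $\mu(H)=\sum_j\mu_{j,T}$, and I will show $\E[\mu(H)]\ge(2/\sqrt e)^{T/d}$, the expectation being over the algorithm's randomness conditioned on no node having yet sampled a hitting set.

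The key step is that the set $S$ a node picks lies in $\mathcal S_i$, i.e.\ is not covered by $R_i$; since $H$ is a hitting set, $S\cap H\neq\emptyset$, so $S$ contains at least one $h_j$. Hence, conditioned on the node actually picking a set (which happens whenever its sampling of $R_i$ succeeded and $R_i$ is not itself a hitting set), a union bound gives $\sum_{j=1}^{b}p_{j,t}\ge\Pr[\exists j:h_j\in S]=1$. Summing over the first $T$ rounds, $\sum_{t=1}^{T}\sum_{j}p_{j,t}\ge T$, so there is an index $j^*$ with $\sum_{t=1}^{T}p_{j^*,t}\ge T/b\ge T/d$.

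Next I would set up the one-round recursion for $\mu_{j,t}$. When a node picks a set containing $h_j$, its copy of $W_i$ contains every copy of $h_j$ it currently holds, all of which it pushes out, creating that many fresh copies; since the picking probability is $p_{j,t}$ for every node, the expected number of copies of $h_j$ before filtering grows by the factor $1+p_{j,t}$, and the subsequent thinning multiplies it by $1/(1+1/(2d))$ (only the non-$X_0$ copies are thinned, which only helps). Thus $\E[\mu_{j,t+1}]\ge\frac{1+p_{j,t}}{1+1/(2d)}\,\mu_{j,t}$, exactly as in Lemma~\ref{lem:llsuccess}. Iterating from $\mu_{j,0}\ge 1$, using $1+x\ge 2^x$ for $x\in[0,1]$ and $(1+1/(2d))^T\le e^{T/(2d)}$, gives $\E[\mu_{j^*,T}]\ge 2^{\sum_t p_{j^*,t}}/e^{T/(2d)}\ge 2^{T/d}/e^{T/(2d)}=(2/\sqrt e)^{T/d}$, and since $\mu(H)\ge\mu_{j^*,T}$ we are done.

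The main obstacle --- the only real deviation from Lemma~\ref{lem:llsuccess} --- is justifying that the duplication actually fires despite the two ways a round can fail to double copies of $h_j$: the sampling of $R_i$ can fail, and the cutoff $|W_i|\le c\,d\log n$ can suppress the push. Both are controlled by the preceding lemmas: as long as the probability of sampling a hitting set stays below $1/2$ we have $m=|X(V)|=O(n)$, hence $|W_i|=O(m/n+\log n)=O(\log n)\le c\,d\log n$ w.h.p.\ for $c$ large enough, and $R_i$ is sampled successfully w.h.p.\ (as in Section~\ref{sec:sampling}); since all multiplicities stay polynomially bounded, these $1/\mathrm{poly}(n)$ failure events perturb the one-round expectation only by a factor that is absorbed into constants. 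Finally, combining $\E[\mu(H)]\ge(2/\sqrt e)^{T/d}$ with $\mu(H)\le m=O(n)$ forces $T=O(d\log n)$, which is what feeds into the runtime bound of Theorem~\ref{th:main3}.
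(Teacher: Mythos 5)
Your proposal is correct and follows exactly the route the paper intends: the paper gives no separate proof, saying only that it is "similar to the proof of Lemma~\ref{lem:llsuccess}", and your argument is precisely that adaptation --- the picked set $S\in\mathcal S_i$ must intersect the hitting set $H$, giving $\sum_j p_{j,t}\ge 1$ in place of the basis-violation argument, followed by the same per-round recursion $\E[\mu_{j,t+1}]\ge\frac{1+p_{j,t}}{1+1/(2d)}\mu_{j,t}$ and the $2^{T/d}/e^{T/(2d)}$ calculation. Your explicit handling of the sampling-failure and $|W_i|\le c\,d\log n$ cutoff events is a welcome extra detail that the paper leaves implicit.
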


Hence, within at most $O(d \log n)$ rounds, their number of copies will exceed our bound on $|X(V)|$, which means the algorithm must terminate within $O(d \log n)$ rounds, w.h.p. When using the sampling and termination detection strategies of the Low-Load Clarkson algorithm, the maximum work in a round is bounded by $O(d \log(ds) + \log n)$, w.h.p. This completes the proof of Theorem~\ref{th:main3}.

\section{Experimental results}
While we have obtained the theoretical bound of $O(d\log n)$ rounds w.h.p. for our main two algorithms, the Low-Load Clarkson Algorithm (Algorithm~\ref{alg:DistClarkson}) and the High-Load Clarkson Algorithm  (Algorithm~\ref{alg:DistClarkson2}), we are also interested in their practical performance. In particular, we would like to estimate the constant factor hidden in our asymptotic bound. To achieve this, we will look at the specific LP-type problem of finding the minimum enclosing disk, i.e. the 2-dimensional version of the minimum enclosing ball problem mentioned in the introduction.

Note that the running time for the termination phase (Algorithm~\ref{alg:Termination}) of these algorithms is predictable and independent of the actual input, so we will measure the number of rounds until at least one node found the solution. We consider four different test cases \textbf{duo-disk}: 2 points lie on the solution disk with the remaining points uniformly distributed in the interior of the disk (Figure~\ref{subfig:duo_disk}), \textbf{triple-disk}: 3 points lie on the solution disk with the others uniformly distributed in the interior of the disk (Figure~\ref{subfig:triple_disk}), \textbf{triangle}: 3 points on a triangle with points uniformly distributed in the interior (Figure~\ref{subfig:triangle}), and \textbf{hull}: points at the vertices on a regular polygon that are slightly perturbed (Figure~\ref{subfig:hull}). For each test case, we take the average result of $10$ runs of our algorithms with $n$ nodes on $n$ data-points, where $n=2^i$ ranging over $i=1,\ldots 14$, (this is extended to $16$ for the 2-disk case for the low load algorithm), see Figures~\ref{fig:low_load} and~\ref{fig:high_load} for the results. 

For the low-load algorithm, note that the small test cases finish within one round, because there is a high probability that there is a node $v_i$ where $H(v_i)$ contains an optimal basis. For the duo-disk test case the number of rounds is $1.2 \log n$, while it is $1.7\log n$ for the other test cases. For the high-load algorithm, the runtime of the duo-disk test cases is around $0.9 \log n$, while it is $1.1 \log n$ for the other test cases. So 
%these experiments show that when applied to the minimum enclosing disk problem, where $d\le 3$, 
the constants hidden in our asymptotic bounds are small. Note that the three test cases other than duo-disk behave similarly, while duo-disk runs a bit faster. The difference between the duo-disk case and the other test cases is the size of the optimal basis, which is $2$ for duo-disk and $3$ for the others. 
%Since in our test cases, there usually is only one optimal basis, this difference means that in the duo-disk case, our algorithm only has to find $2$ data points, while $3$ in the others. 
This suggests that the actual number of rounds depends on the size of the optimal basis for that particular problem and that other features of the problem do not influence the number of rounds much.

%TODO place figures on separate page (with no columns?)
\begin{figure}
    \centering
    \begin{subfigure}{0.4\textwidth}
    	\includegraphics[width=\linewidth]{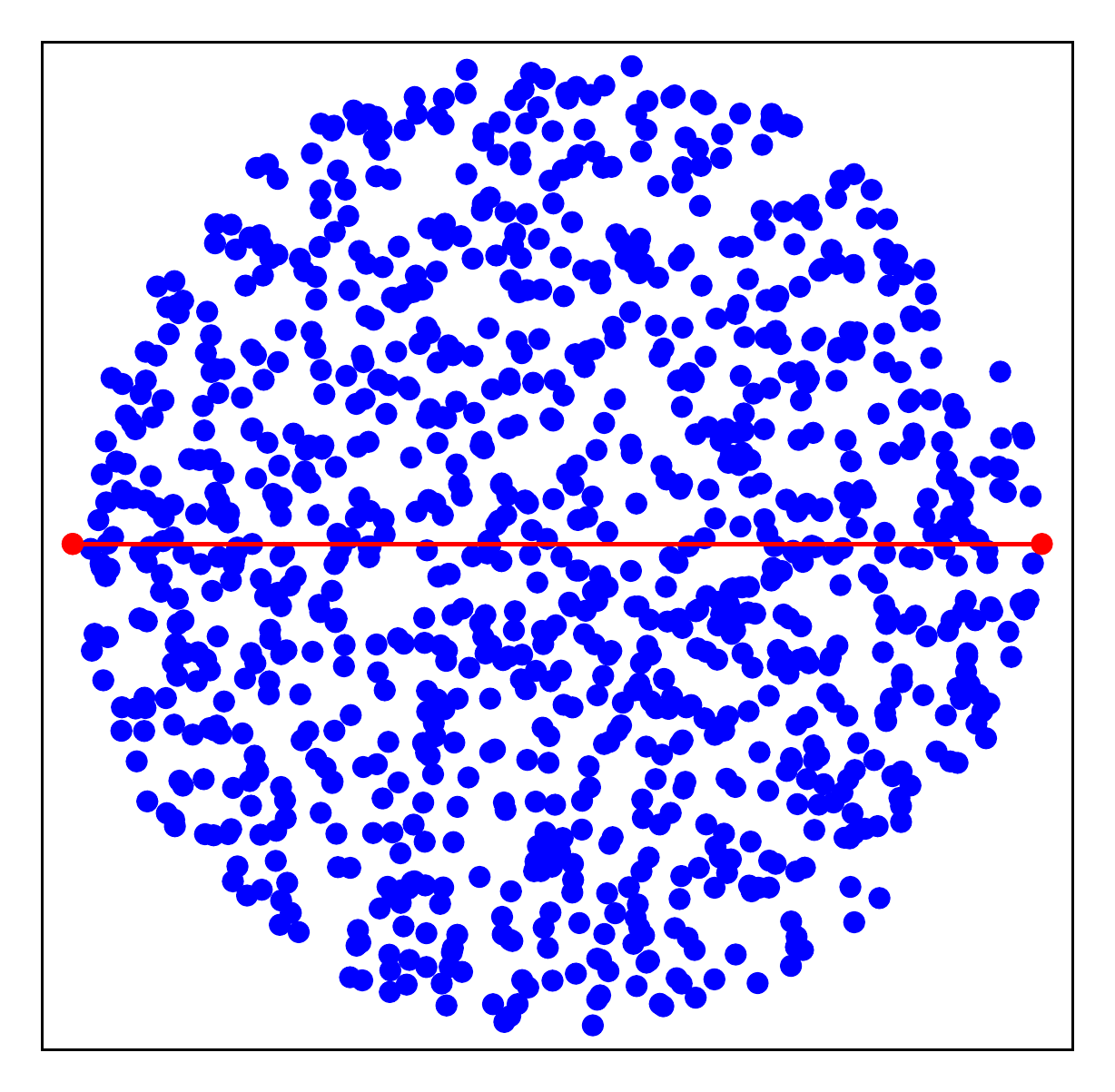}
	    \caption{\textbf{duo-disk}: The points are uniformly distributed over a disk defined by 2 points.\label{subfig:duo_disk}}
    \end{subfigure}
    \quad
    \begin{subfigure}{0.4\textwidth}
    	\includegraphics[width=\linewidth]{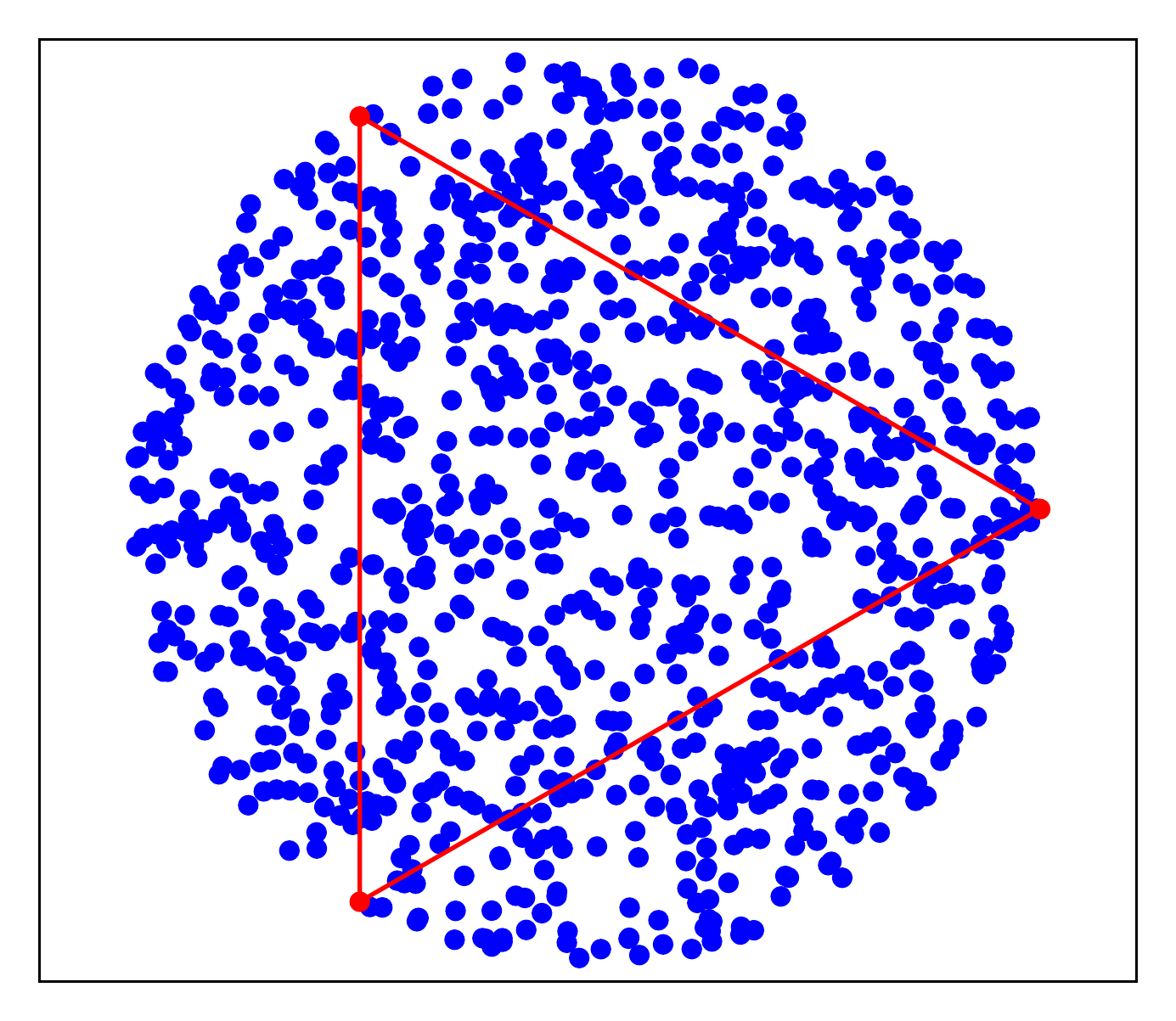}
	    \caption{\textbf{triple-disk}: The points are uniformly distributed over a disk defined by 3 points.\label{subfig:triple_disk}}
    \end{subfigure}
    \\
    \begin{subfigure}{0.4\textwidth}
	    \includegraphics[width=\linewidth]{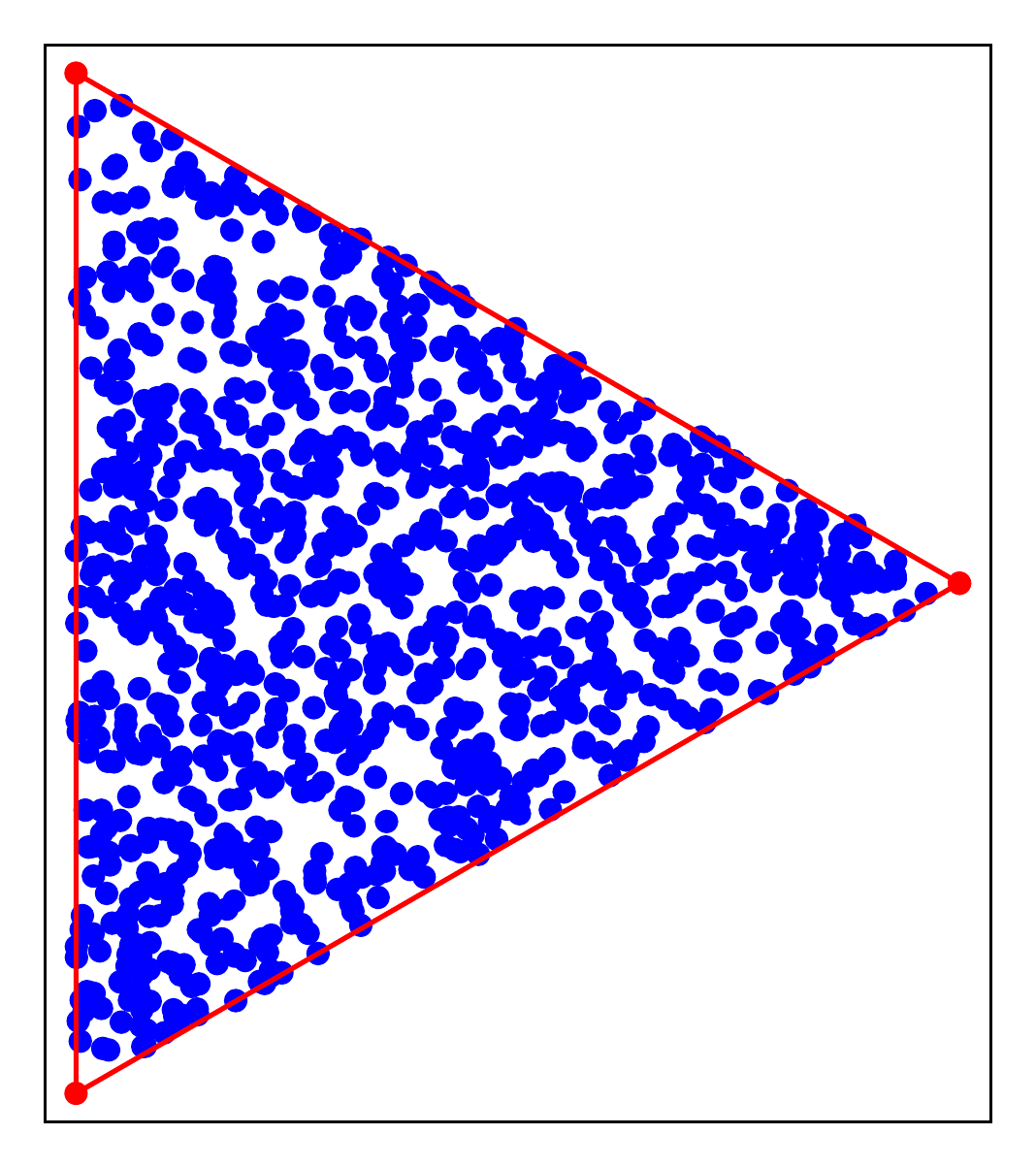}
	    \caption{\textbf{triangle}: The points are uniformly distributed over a triangle.\label{subfig:triangle}}
    \end{subfigure}
    \quad
    \begin{subfigure}{0.4\textwidth}
    	\includegraphics[width=\linewidth]{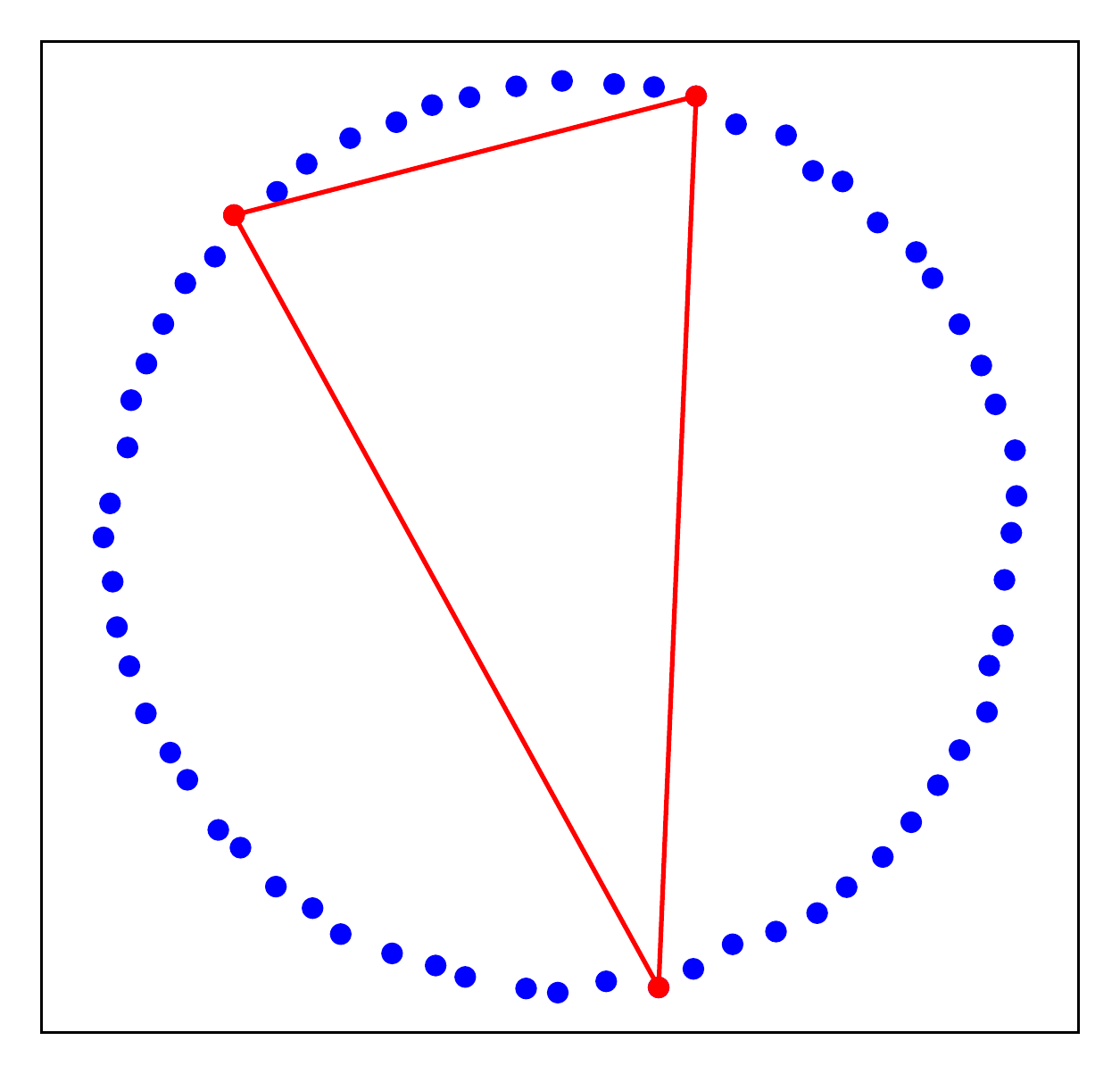}
	    \caption{\textbf{hull}: The points are perturbed vertices of a regular polygon.\label{subfig:hull}}
    \end{subfigure}
    \caption{\label{fig:datasets} The 4 types of data-sets of the minimum enclosing disk problem used in our experimental evaluation: duo-disk, triple-disk, triangle, and hull}
\end{figure}

\begin{figure}
    \centering
    \includegraphics[width=0.9\linewidth]{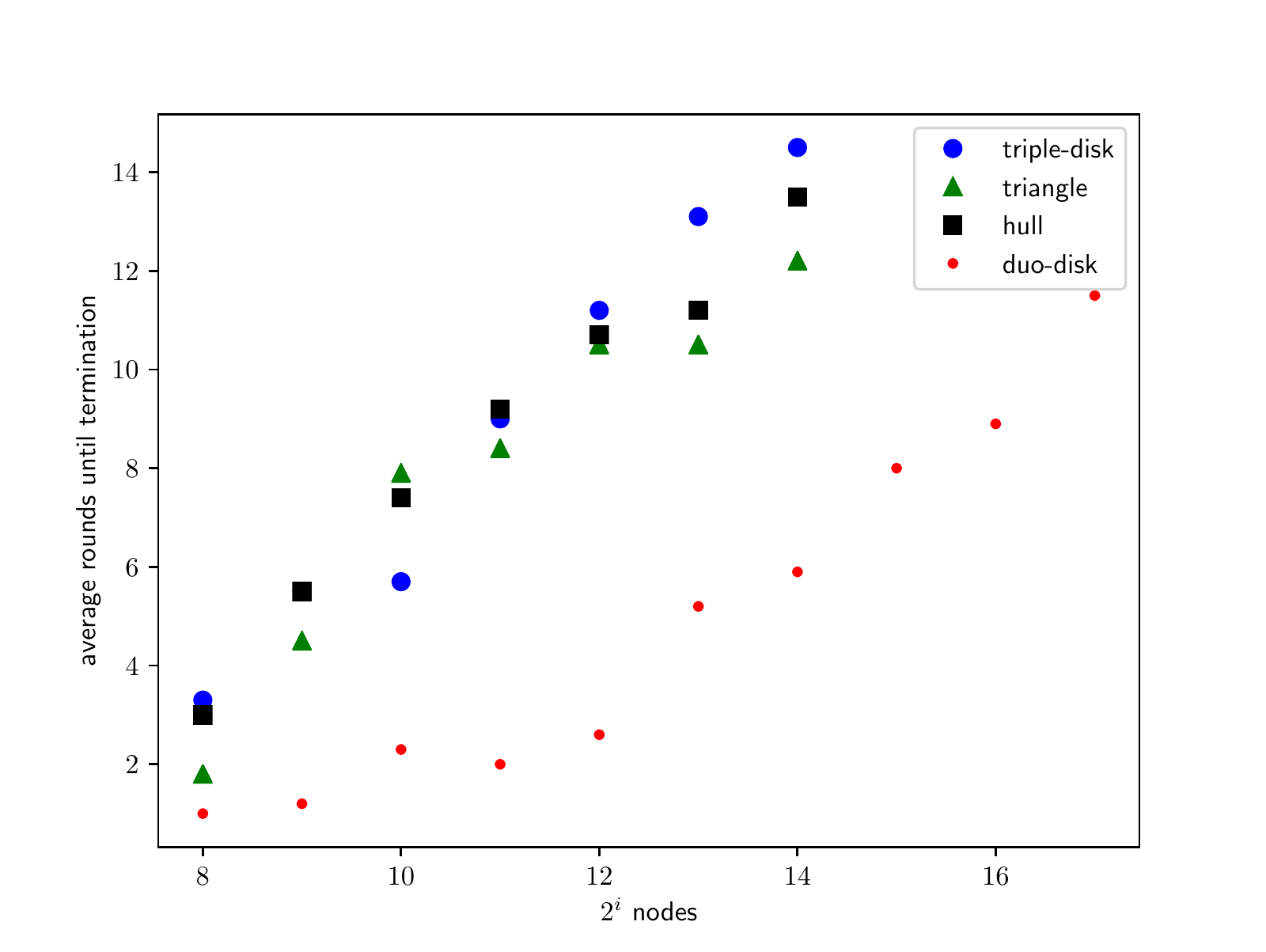}
    \caption{The average number of rounds until a node finds the minimum enclosing disk over $10$ runs of the Low-Load Clarkson Algorithm. Test instances of size $<2^8$ finish in one round.}
    \label{fig:low_load}
\end{figure}

\begin{figure}
    \centering
    \includegraphics[width=0.9\linewidth]{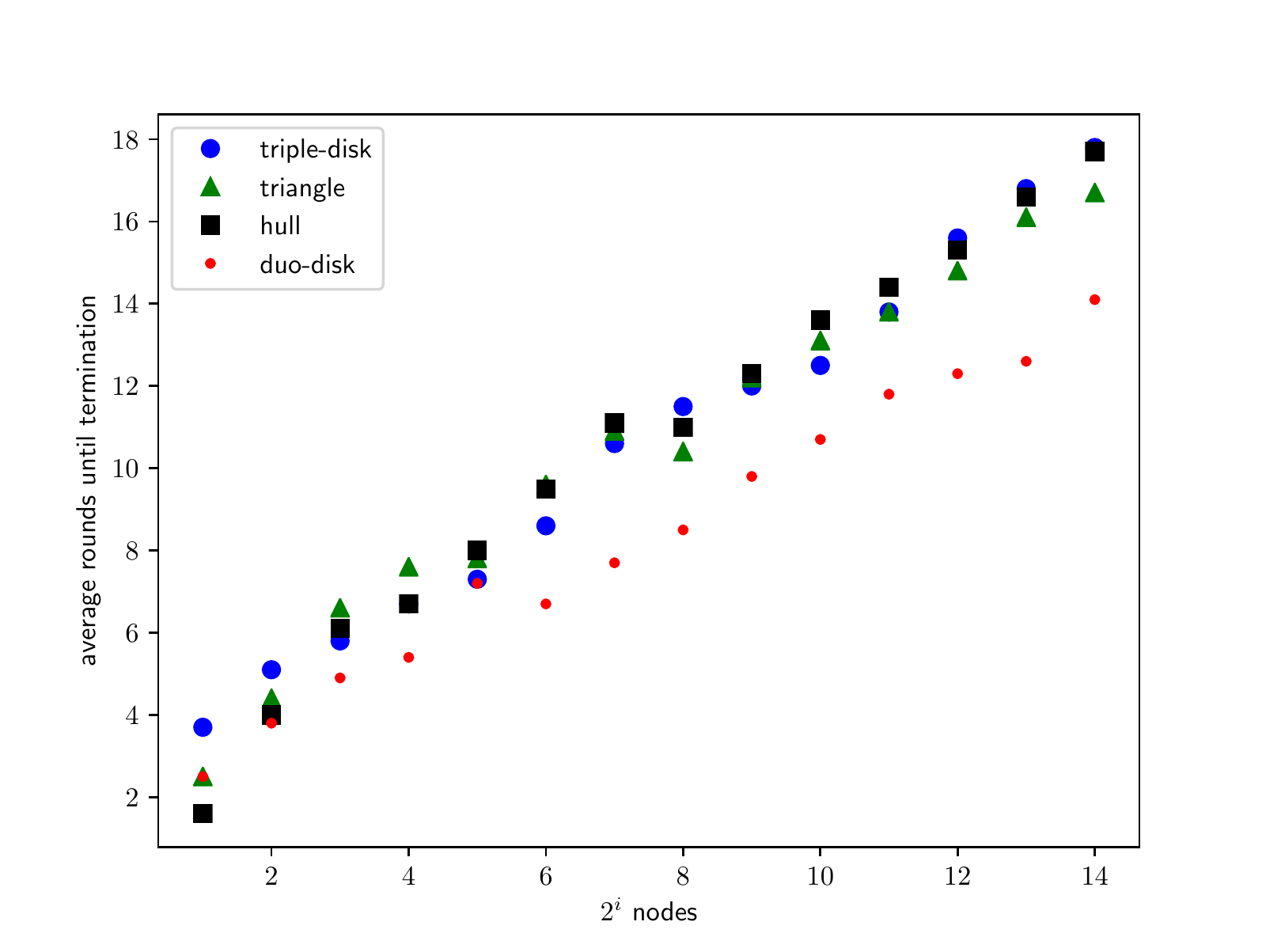}
    \caption{The average number of rounds until a node finds the minimum enclosing disk over $10$ runs of the High-Load Clarkson Algorithm.}
    \label{fig:high_load}
\end{figure}

\section{Conclusion}

In this paper we presented various efficient distributed algorithms for LP-type problems in the gossip model. Of course, it would be interesting to find out which other problems can be efficiently solved within Clarkson's framework, and whether some of our bounds can be improved.
\clearpage 
%
% The next two lines define the bibliography style to be used, and the bibliography file.
\bibliographystyle{plainurl}
\bibliography{literature}

\end{document}